\journal{ }
\newtheorem{theorem}{Theorem}[section]
\newtheorem{proposition}[theorem]{Proposition}
\newtheorem{lemma}[theorem]{Lemma}
\newtheorem{corollary}[theorem]{Corollary}
\newtheorem{definition}[theorem]{Definition}
\newtheorem{example}[theorem]{Example}
\renewcommand{\epsilon}{\varepsilon}
\newcommand{\trv}{{\rm \underline{Tr}\,}}
\newcommand{\tr}{{\rm Tr}\,}
\renewcommand{\P}{{\mathcal P}}
\newcommand\B{{\mathcal B}}
\renewcommand\S{{\mathfrak{S}}}
\newcommand\Sv{{\underline{\mathfrak{S}} }}
\newcommand\F{{\mathcal F}}
\renewcommand\H{{\mathcal H}}
\newcommand\E{{\mathcal E}}
\newcommand\K{{\mathcal K}}
\renewcommand\O{{\Omega}}
\renewcommand\o{{\omega}}
\newcommand\NN{{\mathbb N}}
\newcommand\ZZ{{\mathbb Z}}
\newcommand\RR{{\mathbb R}}
\newcommand\R{{\mathbb R}}
\newcommand\Z{{\mathbb Z}}
\newcommand\RRd{{\mathbb R ^d}}
\newcommand\ZZd{{\mathbb Z ^d}}
\newcommand\CC{{\mathbb C}}
\newcommand\PP{{\mathbb P}}
\newcommand\EE{{\mathbb E}}
\newcommand\ii{{\infty}}
\newcommand\1{{\mathds{1}}}
\newcommand\D{{\mathcal D}}
\newcommand\cn{{{n\rightarrow\ii}}}
\newcommand\cvL{{\operatorname*{\longrightarrow}_{L\rightarrow\ii}}}
\newcommand\cL{{{L\rightarrow\ii}}}
\newcommand{\norm}[1]{\left\| #1\right\|}
\newcommand{\set}[1]{\left\{ #1\right\}}
\newcommand{\bra}[1]{\left( #1\right)}
\newcommand{\av}[1]{\left| #1\right|}
\renewcommand{\phi}{\varphi}
\newcommand \dps{\displaystyle }
\newcommand{\tv}[1]{ {\rm \underline{Tr}}\,\left( #1\right)}
\newcommand{\Ev}[1]{ {\mathbb E}\left( #1\right)}
\newcommand{\wto}{\rightharpoonup}
\newcommand\pscal[1]{{\ensuremath{\left\langle #1 \right\rangle}}}
\begin{document}

\begin{frontmatter}
\title{Mean-field models for disordered crystals}
\author[Ponts]{\'Eric Cancès}
\author[Cergy,Ponts]{Salma Lahbabi}
\author[Cergy]{Mathieu Lewin}
\address[Ponts]{CERMICS, \'Ecole Nationale des Ponts et Chaussées (Paristech) \& INRIA (Micmac Project), 6-8 Av. Blaise Pascal, 77455 Champs-sur-Marne, France}
\address[Cergy]{CNRS \& Laboratoire de Mathématiques (UMR 8088), Université de Cergy-Pontoise, 95000 Cergy-Pontoise Cedex, France.}

\begin{abstract} 
In this article, we set up a functional setting for mean-field electronic structure models of Hartree-Fock or Kohn-Sham types for disordered crystals. The electrons are quantum particles and the nuclei are classical point-like particles whose positions and charges are random. We prove the existence of a minimizer of the energy per unit volume and the uniqueness of the ground state density of such disordered crystals, for the reduced Hartree-Fock model (rHF). We consider both (short-range) Yukawa and (long-range) Coulomb interactions. In the former case, we prove in addition that the rHF ground state density matrix satisfies a self-consistent equation, and that our model for disordered crystals is the thermodynamic limit of the supercell model.
\end{abstract}

\begin{keyword}
random Schr\"odinger operators, disordered crystals, electronic structure, Hartree-Fock theory, mean-field models, density functional theory, thermodynamic limit
\end{keyword}
\end{frontmatter}

\tableofcontents


\section{Introduction}
The modeling and simulation of the electronic structure of crystals is
one of the main challenges in solid state physics and materials science.
Indeed, a crystal contains an extremely large number (in fact an infinite number in
mathematical models) of quantum particles interacting through long-range
Coulomb forces. This complicates dramatically the mathematical analysis of such systems.

\medskip

Finite size molecular systems containing no heavy atoms can be
accurately described by the $N$-body Schr\"odinger equation, or its
relativistic corrections. Because of its very high complexity, this equation is often approximated by 
nonlinear models which are more amenable to numerical simulations. On the other
hand, no such reference model is available for infinite molecular
systems such as crystals. For this reason, in solid state
physics and material sciences, the electronic structure of crystals is often described 
by \emph{linear empirical models} on the one hand, and
\emph{mean-field models} of Hartree-Fock or Kohn-Sham types on the other hand.

In linear empirical models, the electrons in the crystal are seen as non-interacting
particles in an effective potential $V_{\rm eff}$, so that
their behavior is completely characterized by the effective Hamiltonian
$$
H = -\frac 12 \Delta + V_{\rm eff},
$$
a self-adjoint operator on $L^2(\RR^d)$. Here $d$ is the space dimension 
which is $d=3$ for usual crystals. The cases $d=1$ and $d=2$ are also of interest since
linear polymers and crystalline surfaces behave, in some respects, as
one- and two-dimensional systems, respectively. Throughout this article,
we adopt the system of atomic units in which $\hbar=1$, $m_e=1$, $e=1$
and $4\pi\epsilon_0=1$, where $\hbar$ is the reduced Planck constant,
$m_e$ the mass of the electron, $e$ the elementary charge, and
$\epsilon_0$ the dielectric permittivity of the vacuum. For the sake of
simplicity, we work with spinless electrons, but our arguments can be
straightforwardly extended to models with spin.

When the system under study is a perfect crystal, the effective
potential $V_{\rm eff}$ is an ${\cal R}$-periodic function $V_{\rm
per}$, where ${\cal R}$ is a discrete lattice of $\R^d$, and the
effective Hamiltonian is then a periodic Schr\"odinger operator on
$L^2(\RR^d)$,
$H=H_{\rm per} = -\frac{1}{2}\Delta + V_{\rm per}.$
The spectral properties of such operators are well-known~\cite{reed4}. Under some appropriate integrability conditions on $V_{\rm
per}$, it follows from Bloch theory that the spectrum of $H_{\rm
per}$ is purely absolutely continuous and composed of a countable number
of (possibly overlapping) bands.

It is possible to describe \emph{local defects} in such effective linear models. Displacing or changing the charge of a finite number of nuclei corresponds to adding a potential $W$ to $V_{\rm per}$. Because such perturbations are local, the potential $W$ decays at infinity and therefore the effective Hamiltonian 
$H_{\rm defect} = -\frac 12 \Delta + V_{\rm per}+W$
has the same essential spectrum as the unperturbed Hamiltonian $H_{\rm per}$. 
On the other hand, $H_{\rm defect}$ may possess discrete eigenvalues below its essential spectrum, or lying in spectral
gaps. They correspond to bound states of electrons in the presence of the local defects.

Doped semiconductors and alloys are examples of disordered crystals, which are perturbed in a non-local fashion. Such systems can be adequately modeled by random Schr\"odinger operators~\cite{carmona,caught}. One famous example is the
continuous Anderson model
$$
H_\omega = -\frac 12 \Delta + V_\omega \quad \mbox{with} \quad
V_\omega(x) = \sum_{k \in {\cal R}} q_k(\omega)\, \chi(x-k),
$$
where, typically, $\chi \in C^\infty_c(\R^d)$ and the $q_k$'s are
{i.i.d. random} variables. Here, only the charges are changed but it is possible to also account for stochastic displacements.
The study of the spectral properties of ergodic Schr\"odinger operators is a very active research topic (see e.g.
\cite{Hislop} and the references therein).

\medskip

In linear empirical models, the interactions between electrons are neglected (apart from the implicit
interaction originating from the Pauli principle preventing two
electrons from being in the same quantum state). Taking these interactions into account is however a necessity
for a proper physical description of these systems. One main difficulty is then that the Coulomb interaction 
is long-range and screening becomes extremely important to explain the macroscopic stability of such systems. Understanding screening effects in a precise manner is a difficult mathematical question.

As already mentioned above, there is no well-defined many-body Schr\"odinger equation for crystals. The
only available way to rigorously derive models for interacting electrons
in crystals is to use a thermodynamic limit procedure. The idea is to confine the system to a box, with suitable boundary conditions, and to study the limit when the size of the box grows to infinity. For stochastic many-body systems based on Schrödinger's equation, it is sometimes possible to show that the limit exists. In~\cite{Veniaminov}, Veniaminov has first considered a many-body quantum system with short range interactions. Short after, the existence of the limit for a crystal made of quantum electrons and stochastic nuclei interacting through Coulomb forces was shown in~\cite{BlaLew-12}, by Blanc and the third author of this article. In these two works dealing with the true many-body Schrödinger equation, the value of the thermodynamic limit is not known. For Thomas-Fermi-type models, Blanc, Le Bris and Lions were able to identify the thermodynamic limit and to study its properties~\cite{BLBL2007}. Unfortunately, Thomas-Fermi theory is not able to reproduce important physical properties of stochastic quantum crystals, like the Anderson localization under weak disorder.

The purpose of the present work is to propose and study a \emph{mean-field} (Hartree-Fock type) model which can be obtained from a thermodynamic limit procedure, for an infinite, randomly perturbed, interacting quantum crystal. This model is not as precise as the many-body Schrödinger equation, but it is still much richer than Thomas-Fermi type theories. In particular, it seems adequate for the description of Anderson localization in infinite interacting systems.

\medskip

More specifically, we consider a random nuclear charge $\mu(\omega,x)\geq0$. For simplicity we do not consider point-like charges, and we assume that $\mu(\omega,\cdot)\in L^1_{\rm loc}(\R^d)$ almost surely. Also we are interested in describing random perturbations which have some space invariance, and we make the assumption that they are the same when the system is translated by any vector of the underlying periodic lattice $\cal R$. We assume that the group $\cal R$ acts on the probability space in an ergodic fashion and we always make the assumption that $\mu$ is \emph{stationary}, which means $\mu(\tau_k(\omega),x)=\mu(\omega,x+k)$, where $\tau=(\tau_k)_{k\in\cal R}$ is the ergodic group action on the probability space. A typical example is given by a lattice $\cal R$ with one nucleus per unit cell, whose charge and position are perturbed by i.i.d. random variables,
$$\mu(\omega,x)=\sum_{k\in\cal R}q_k(\o)\,\chi\big(x-k-\eta_k(\o)\big).$$
Similarly, the state of the electrons in the crystal is modelled by a \emph{one-particle density matrix}~\cite{stability}, that is, a random family of operators $\gamma(\omega):L^2(\R^3)\to L^2(\R^3)$ such that $0\leq\gamma(\omega)\leq1$ almost surely. It is also assumed that $\gamma$ is stationary in the sense that its kernel satisfies $\gamma(\tau_k(\omega),x,y)=\gamma(\omega,x+k,y+k)$ for all $k\in\cal R$. These concepts will be recalled later in Section~\ref{sec:basics}.

For any such electronic state $\gamma$ we define in Section~\ref{sec:rHF} the corresponding reduced-Hartree-Fock (rHF) energy, in the field induced by the nuclear charge $\mu$. This energy is just the sum of the kinetic energy per unit volume of $\gamma$ and the potential energy per unit volume of $\gamma$ and $\mu$. The rHF model is obtained from the generalized Hartree-Fock model~\cite{LieSim-77,BacLieSol-94} by removing the exchange term~\cite{Solovej}. Alternatively, it can be seen as an extended Kohn-Sham model~\cite{DreizlerGross} with no exchange-correlation.

Defining the rHF energy properly requires to introduce several tools, which is the purpose of Sections~\ref{sec:electronic_states} and~\ref{sec:Coulomb-Yukawa}. We start by defining the average number of particles and the kinetic energy per unit volume for ergodic density matrices and we show useful inequalities. In particular we derive Hoffmann-Ostenhof~\cite{OO} and Lieb-Thirring inequalities~\cite{LiTh,LiTh2} for ergodic density matrices, which are very important estimates that we use all the time. Loosely speaking, they can respectively be stated as follows:
$$\text{Average kinetic energy per unit vol. of $\gamma$}\ \geq\ \EE\bra{\int_{Q}|\nabla\sqrt{\rho_\gamma}|^2}$$
and
$$\text{Average kinetic energy per unit vol. of $\gamma$}\ \geq\ K\,\EE\bra{\int_{Q}\rho_\gamma^{5/3}},$$
where $Q$ is the unit cell, $\rho_\gamma$ is the density of the state $\gamma$ and $K$ is a constant independent of $\gamma$. 

In Section~\ref{sec:Coulomb-Yukawa}, we discuss Poisson's equation
\begin{equation}
-\Delta V=4\pi \rho
\label{eq:Poisson-intro} 
\end{equation}
for stationary functions $\rho(\omega,x)$, where $\Delta$ is the Laplace operator with respect to the $x$-variable, and we explain that the situation is much more complicated than in the periodic case. In particular, the neutrality condition $\EE(\int_Q\rho)=0$
on the charge density appearing on the right side of \eqref{eq:Poisson-intro} is necessary but in general not sufficient to find a stationary solution $V$. When $\EE(\int_Q\rho^2)<\ii$ and $\EE(\int_Q\rho)=0$, it is possible to give a necessary and sufficient condition for the existence of a stationary solution $V$ to~\eqref{eq:Poisson-intro} such that $\EE(\int_QV^2)<\ii$. In words, $\rho$ should be in the range of the ``stationary Laplacian'' which is a particular self-adjoint extension of $-\Delta$ on $L^2(\Omega\times Q)$ with ``stationary boundary conditions''. 

Understanding Poisson's equation~\eqref{eq:Poisson-intro} for general stochastic charge densities $\rho$ is an important and interesting problem in itself. In order to define the associated Coulomb energy per unit volume, we adopt here a simple strategy and take the limit $m\to0$ of the Yukawa energy. This means we consider the regularized equation
\begin{equation}
-\Delta V_m+m^2V_m=4\pi \rho
\label{eq:Yukawa-intro} 
\end{equation}
and we define the Coulomb energy as the limit of $\EE\bra{\int_Q V_m\rho}$ when $m\to0$. We then give in Section~\ref{sec:Coulomb-Yukawa} several properties of this energy.

After these preliminaries, we are able to properly define and study the reduced Hartree-Fock energy for stochastic crystals in Section~\ref{sec:rHF}. In particular we prove the existence of a minimizer $\gamma$ of this energy and the uniqueness of the ground state density $\rho_\gamma$. In the Yukawa case $m>0$, we also show that the minimizers solve a self-consistent equation of the form
\begin{equation}
\begin{cases}
\displaystyle\gamma=\1_{(-\ii,\epsilon_{\rm F})}\left(-\frac{1}{2}\Delta+V_m\right),\\[0,3cm]
-\Delta V_m+m^2V_m=4\pi \big(\rho_\gamma-\mu\big).
\end{cases} 
\label{eq:SCF_intro}
\end{equation}
The mean-field operator
$$H_m=-\frac{1}{2}\Delta+V_m$$
is a random Schrödinger operator describing the collective behavior of the electrons in the system. Studying its spectral properties would allow to understand localization properties in the interacting stochastic crystal. 

In Section~\ref{limit_thermo}, we finally prove that, in the Yukawa case, our model is actually the thermodynamic limit of the supercell reduced Hartree-Fock theory (the system is confined to a box with periodic boundary conditions). This justifies our theory with Yukawa interactions. For Coulomb forces, our proof does not apply because of some missing screening estimates. We make more comments about this later in Section~\ref{limit_thermo}.

Let us end this introduction by mentioning that our theory is rather general and it actually works for any reasonable interaction potential which decays fast enough at infinity. We concentrate on the Yukawa interaction because of the limit $m\to0$ which corresponds to the more physical Coulomb case and which we study as well in this paper. Note that we consider here the action of a discrete group on $\O$ because we have in mind the case of a randomly perturbed crystal. Our approach can also be applied to the case when the group acting on $\O$ is $\RRd$ (amorphous material). We refer to~\cite{these} for details.

\bigskip

\noindent\textbf{Acknowledgement.}
The research leading to these results has received funding from the European Research Council under the European Community's Seventh Framework Programme (FP7/2007--2013 Grant Agreement MNIQS no. 258023).

\section{Electronic states in disordered crystals}\label{sec:electronic_states}

In mean-field models (such as Hartree-Fock or Kohn-Sham), the state of the electrons is described by a self-adjoint operator $\gamma$ acting on $L^2(\R^3)$, satisfying $0\leq\gamma\leq1$ in the sense of quadratic forms, and such that $\tr(\gamma)$ is the total number of electrons in the system~\cite{stability}. In (infinite) crystals, we always have $\tr(\gamma)=+\ii$. Such an operator $\gamma$ is called a \emph{(one-particle) density matrix}. The purpose of this section is to recall the main properties of  electronic states in a class of \emph{random} media, satisfying an appropriate invariance property called \emph{stationarity}.

\subsection{Basic definitions and properties}\label{sec:basics}

Throughout this paper, $d$ will denote the space dimension. We will later focus on the cases where  $d\in\{1,2,3\}$, but we keep $d$ arbitrary in this section. We restrict ourselves to the cubic lattice group $\mathcal{R}=\Z^d$ to simplify the notations; general discrete subgroups $\cal R$ can be tackled similarly without any additional difficulty. We consider a probability space $(\Omega, \mathcal{F},\mathbb P)$ and an ergodic group action $\tau$ of $\ZZd$ on $\O$. We recall that $\tau$ is called ergodic if it is measure preserving and if for any $A\in\F$ satisfying $\tau_k(A)=A$ for all $k\in \ZZd$, it holds that $\mathbb P(A)\in \{0,1\}$. 

\begin{example}[i.i.d. charges]\label{example1}
A typical probability space we have in mind is the one arising from a random distribution of particles of charges $q_1$ and $q_2$ on the sites of the lattice $\ZZd$ with probabilities $p_1$ and $1-p_1$. The probability space is then given by $\Omega=\{q_1,q_2\}^{\mathbb{Z}^d}$ and $\mathbb P =p^{\otimes\ZZd}$ where $p=p_1\delta_{q_1}+ (1-p_{1})\delta_{q_2}$. In this case, the group action is $\tau_k(\omega)=\omega_{\cdot+k}$.
\end{example}

The ergodic theorem \cite[Theorem 6.1, Theorem 6.4]{tempelman}, which will be extensively used in the sequel, can be stated as follows: 
\begin{theorem}[Ergodic theorem]
If $\tau$ is an ergodic group action of $\ZZd$ on $\O$ and $X\in L^p(\O)$, with $1 \leq p<\ii$, then, 
$$
\lim_{n\rightarrow \ii}\frac{1}{\bra{2n+1}^d}\sum_{k\in\ZZd\cap \left[-\frac{n}2,\frac{n}2\right]}X(\tau_k(\o))=\EE(X),$$
almost surely and in $L^p\bra{\O}$.
\end{theorem}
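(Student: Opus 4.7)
The plan is to follow the classical Wiener--Tempelman strategy for multi-parameter ergodic theorems: establish a maximal inequality, verify convergence on a dense class, combine them to obtain almost sure convergence, and use ergodicity to identify the limit. I first observe that the cubes $\Lambda_n = \ZZ^d\cap[-n/2,n/2]^d$ form a F\o{}lner sequence for $\ZZ^d$ and in fact satisfy the Tempelman regularity condition, so that the averages $S_nX(\omega) = |\Lambda_n|^{-1}\sum_{k\in\Lambda_n}X(\tau_k\omega)$ are the natural objects to study.

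The first key ingredient is a Wiener-type maximal inequality: the maximal function $MX(\omega) = \sup_n |S_nX(\omega)|$ should satisfy the weak bound $\PP(MX>\lambda)\leq C\lambda^{-1}\|X\|_{L^1}$, and hence $\|MX\|_{L^p}\leq C_p\|X\|_{L^p}$ for every $p>1$ by interpolation with the trivial $L^\ii$ estimate. I would prove this via a Vitali-type covering lemma for cubes in $\ZZ^d$, exploiting the doubling property of the sequence $\Lambda_n$.

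The second key ingredient is convergence on a dense subspace of $L^p(\O)$. Constants are trivial. For ``coboundaries'' of the form $Y - Y\circ\tau_g$ with $Y\in L^\ii$ and $g\in\ZZ^d$, a telescoping argument together with the F\o{}lner property of $\Lambda_n$ gives $S_n(Y-Y\circ\tau_g)\to 0$ uniformly in $\o$, because only the ``boundary'' $\Lambda_n\,\triangle\,(\Lambda_n+g)$ contributes, and its size is negligible compared to $|\Lambda_n|$. The closed linear span of constants and coboundaries is dense in $L^p(\O)$, a consequence of the mean ergodic theorem (von Neumann) for the unitary representation of $\ZZ^d$ on $L^2(\O)$, extended to $L^p$ by standard truncation. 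Combining the maximal inequality with pointwise convergence on this dense subspace, via the Banach principle, yields $S_nX\to X^*$ almost surely for every $X\in L^p$. The limit $X^*$ is $\tau$-invariant (since shifting $\Lambda_n$ by any fixed $g$ alters only a boundary of vanishing density), and ergodicity then forces $X^*$ to be almost surely constant, equal to $\EE(X)$ by integration over $\O$. The $L^p$ convergence finally follows from the almost sure convergence combined with dominated convergence, using $MX\in L^p$ as the dominating function when $p>1$, or uniform integrability when $p=1$.

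The main obstacle is the Wiener maximal inequality, whose proof requires the covering lemma; fortunately, the abelian structure of $\ZZ^d$ and the regular cubic shape of $\Lambda_n$ reduce this to an essentially classical computation, no more delicate than the one-dimensional Birkhoff case, so the real content of the theorem lies in the bookkeeping of the F\o{}lner geometry rather than in any new analytic idea.
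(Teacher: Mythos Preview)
The paper does not prove this theorem: it is stated as a classical result with a reference to Tempelman's monograph \cite[Theorem~6.1, Theorem~6.4]{tempelman}, and is then used as a black box throughout. So there is no ``paper's own proof'' to compare against.

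Your sketch is the standard Wiener--Tempelman argument and is essentially correct. The cubes $\Lambda_n$ are a regular (Tempelman) F{\o}lner sequence; the Vitali covering lemma in $\ZZ^d$ gives the weak-$(1,1)$ maximal inequality and hence the strong $(p,p)$ bound for $p>1$; the von Neumann decomposition (invariants plus the closure of coboundaries) supplies a dense subspace on which convergence is obvious; and the Banach principle upgrades this to almost sure convergence for every $X\in L^p$. Your identification of the limit via ergodicity is fine.

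One small point worth tightening: for $p=1$ you cannot simply invoke ``uniform integrability'' without justification, since the maximal function is only weak-$L^1$ and not an $L^1$ dominator. The clean route is the usual truncation: given $\epsilon>0$, split $X=X'+X''$ with $X'\in L^\ii$ and $\|X''\|_{L^1}<\epsilon$; then $S_nX'\to\EE(X')$ in $L^1$ by bounded convergence, while $\|S_nX''\|_{L^1}\le\|X''\|_{L^1}<\epsilon$ uniformly in $n$. This is probably what you had in mind, but it deserves to be said explicitly.
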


\medskip

A measurable function $f:\Omega \times \mathbb R ^d \rightarrow \mathbb C$ is called \emph{stationary}  if
$$\forall k\in \ZZd,\; f(\tau_k(\omega),x)=f(\omega, k+x), \;  \mbox{a.s.} \; \mbox{and}\;  \mbox{a.e. } $$
We will make use of the families of stationary function spaces
$$
L^p_s\bra{L^q}=\left\{ f \in L^p\left(\O,L^q_{\rm loc}\left( \RRd\right)\right) \; |\;  f \;\mbox{is stationary}\right\},
$$
and 
$$
H^m_s=\left\{ f \in L^2\left(\Omega, H^m_{\rm loc}\left(\RRd\right)\right) \; |\;  f \; \mbox{is stationary}\right\},
$$
and resort, for convenience, to the shorthand notation $L^p_s=L^p_s\bra{L^p}$.
Endowed with the norms 
$$
\|f\|_{L^p_s\bra{L^q}}=\mathbb E \left(\|f\|^p_{L^q(Q)}\right)^\frac{1}{p},
$$
and the scalar products 
$$
\langle f,g\rangle_{L^2_s}=\mathbb E \left(\langle f,g\rangle_{L^2(Q)}\right),
\qquad
\langle f,g\rangle_{H^m_s}=\mathbb E \left(\langle f,g\rangle_{H^m(Q)}\right),
$$
where 
$$Q:=\left[-\frac{1}{2},\frac{1}{2}\right)^d$$ 
denotes the semi-open unit cube, the spaces $L^p_s\bra{L^q}$ are Banach spaces and the spaces $L^2_s$ and $H^m_s$ are Hilbert spaces.

\medskip

We denote by $\H=L^2\bra{\RRd}$ the space of complex valued, square integrable functions, equipped with its usual scalar product $\langle\cdot,\cdot \rangle_{L^2}$. We also denote by
\begin{itemize}
\item $\B$ the space of the bounded linear operators on $\H$, endowed with the operator norm $\norm{\cdot}$;
\item ${\mathcal S}$ the space of the bounded self-adjoint operators on $\H$;
\item $\S_p$ the $p^{\rm th}$ Schatten class on $\H$. Recall that $\S_1$ is the space of the trace class operators on $\H$ and $\S_2$ the space of the Hilbert-Schmidt operators on $\H$.
\end{itemize}
Let $\mathcal D$ be a dense linear subspace of $\H$. A \emph{random operator} with domain $\mathcal D$ is a map $A$ from $\O$ into the set of the linear operators on $\H$ such that $\mathcal D\subset D (A(\o))$ a.s. and such that the map $\o\mapsto \langle A(\o)x,y\rangle$ is measurable for all $x\in \mathcal D$ and $y\in\H$. 

Of importance to us will be the \emph{uniformly bounded random operators} $A$ which are such that $\sup\mbox{ess}_{\o\in\O}\norm{A(\omega)}<\infty$. The Banach space of such operators is denoted by $L^\ii(\Omega,\B)$. This is a $W^*$-algebra which is known to be the dual of $L^1(\Omega,\S_1)$ (see, e.g.,~\cite[Corollary 3.2.2]{Sakai}). We will often use the corresponding weak-$\ast$ topology on $L^\ii(\Omega,\B)$ for which $A_n\wto_\ast A$ means
$$\EE\bra{\tr(A_nB)}\to\EE\bra{\tr(AB)}$$
for all $B\in L^1(\Omega,\S_1)$. Since $L^1(\Omega,\S_1)$ is separable, any bounded sequence $(A_n)$ in $L^\ii(\Omega,\B)$ has a subsequence $(A_{n_k})$ which converges weakly-$\ast$ to some $A\in L^\ii(\Omega,\B)$. Similarly, we know that the dual of $L^p(\Omega,\S_{q})$ is nothing else but $L^{p'}(\Omega,\S_{q'})$ when $1=1/p+1/p'=1/q+1/q'$ and $1\leq p,q<\ii$.

\medskip

Let $(U_k)_{k\in \ZZd }$ be the group of unitary operators on $\H$ defined by 
$$
U_kf(x)=f(x+k),\;\mbox{a.e.},\; \forall f\in \H,\; \forall k\in \ZZd.
$$
A random operator $A$ (not necessarily uniformly bounded) is called \emph{ergodic} or \emph{stationary} if for any $k\in \ZZd$, $\mathcal D\subset U_k(\mathcal D)$ and the following equality holds
$$
A(\tau_k(\o))=U_kA(\o)U_k^*,\;  \mbox{a.s.}
$$
One of the fundamental theorems for ergodic operators \cite[Theorem 1.2.5 p.13]{caught} states that for 
any self-adjoint ergodic operator $A$, there exists a closed set $\Sigma \subset \RR$ and a set $\O_1\in\F$ with $\PP(\O_1)=1$, such that
$\sigma(A(\o))=\Sigma$, for all $\o\in\O_1$. The set $\Sigma$ is called the almost sure spectrum of $A$.

\medskip

We finally denote by  $\underline{\mathcal S }$ the space of the ergodic operators on $\H$ that are almost surely bounded and self-adjoint.

\subsection{Ergodic locally trace class operators}

In this section, we recall the definitions of the trace per unit volume, the density and the kernel of an ergodic locally trace class operator (see e.g.~\cite{bouclet,these_nicolas}). For $1\leq p \leq \ii$, we denote by $L^p_c\bra{\RRd}$ the space of the compactly supported $L^p$ functions on $\RRd$.

\begin{definition}[Locally trace-class operators]
A random operator $A$ is called \emph{locally trace class} if $\chi A\chi\in L^1(\Omega,\S_1)$ for all $\chi\in L^\infty_c(\RRd)$, that is,
$$
\forall\chi\in L^\ii_c(\RRd),\quad\EE\Big(\tr\bra{\big|\chi A(\cdot)\chi\big|}\Big)<\infty.
$$
\end{definition}

We now focus on the particular case of ergodic operators, and denote by $\Sv_1$ the space of the ergodic, locally trace class operators. The following characterization of the positive operators of $\Sv_1$ will be useful.

\begin{proposition}[Characterization of ergodic locally trace-class operators]\label{magie}
Let $A$ be a positive, almost surely bounded, ergodic operator. Then  $A$ is locally trace class if and only if 
$\EE(\tr(\1_QA(\cdot)1_Q))<\infty$. 
\end{proposition}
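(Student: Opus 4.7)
The forward direction is immediate since $\1_Q\in L^\ii_c(\RRd)$ makes $\chi=\1_Q$ an admissible choice in the defining condition. For the converse, my plan is to bound the trace of $|\chi A(\o)\chi|$ for arbitrary $\chi\in L^\ii_c(\RRd)$ by a finite sum of local traces $\tr(\1_{Q+k}A(\o)\1_{Q+k})$, and then to identify each expectation $\EE(\tr(\1_{Q+k}A\1_{Q+k}))$ with $\EE(\tr(\1_QA\1_Q))$ using the ergodicity of $A$ and the $\tau$-invariance of $\PP$.

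First I would establish the $k$-independence of $\EE(\tr(\1_{Q+k}A\1_{Q+k}))$. From $U_kf(x)=f(x+k)$ one reads off $U_k^{*}\1_QU_k=\1_{Q+k}$; combining this with cyclicity of the trace and the ergodicity relation $A(\tau_k\o)=U_kA(\o)U_k^{*}$ yields
$$
\tr(\1_{Q+k}A(\o)\1_{Q+k}) \;=\; \tr(\1_QA(\tau_k\o)\1_Q),
$$
and $\tau_k$-invariance of $\PP$ then delivers $\EE(\tr(\1_{Q+k}A\1_{Q+k}))=\EE(\tr(\1_QA\1_Q))$ for every $k\in\ZZd$. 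Next, given $\chi\in L^\ii_c(\RRd)$, I would choose a finite $K\subset\ZZd$ with $\supp\chi\subset\bigcup_{k\in K}(Q+k)$, which is possible because the shifted cubes $(Q+k)_{k\in\ZZd}$ tile $\RRd$. Writing $\chi A(\o)\chi=(\chi A(\o)^{1/2})(A(\o)^{1/2}\chi)$ (well-defined a.s.\ because $A\geq 0$ is a.s.\ bounded) and applying H\"older's inequality for Schatten norms gives
$$
\|\chi A(\o)\chi\|_{\S_1} \;\leq\; \|\chi A(\o)^{1/2}\|_{\S_2}\,\|A(\o)^{1/2}\chi\|_{\S_2} \;=\; \tr\bra{|\chi|^2 A(\o)},
$$
where both $\S_2$ norms equal $\sqrt{\tr(|\chi|^2A(\o))}$ by a direct computation using cyclicity. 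The pointwise bound $|\chi|^2\leq\|\chi\|_\ii^2\sum_{k\in K}\1_{Q+k}$ combined with the monotonicity $0\leq f\leq g\Rightarrow\tr(fA(\o))\leq\tr(gA(\o))$ (valid for positive $A(\o)$, obtained by conjugating with $A(\o)^{1/2}$) then yields
$$
\tr|\chi A(\o)\chi| \;\leq\; \|\chi\|_\ii^2\sum_{k\in K}\tr(\1_{Q+k}A(\o)\1_{Q+k}),
$$
and taking expectations and invoking the first step produces the finite bound $\EE(\tr|\chi A\chi|)\leq \|\chi\|_\ii^2\,|K|\,\EE(\tr(\1_QA\1_Q))<\ii$.

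No step is a genuine obstacle, but one small technicality is worth flagging: when $\chi$ is complex-valued, $\chi A\chi$ is generally not self-adjoint, so $|\chi A\chi|$ must be interpreted via polar decomposition. The H\"older estimate above controls $\|\chi A\chi\|_{\S_1}=\tr|\chi A\chi|$ directly without any self-adjointness assumption, and the finiteness of the expected trace norm delivers a posteriori both the almost-sure trace-class property of $\chi A\chi$ and its membership in $L^1(\Omega,\S_1)$. The whole argument is thus a straightforward combination of standard Schatten-class inequalities with the translation invariance furnished by stationarity.
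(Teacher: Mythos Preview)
Your argument is correct. The paper does not actually write out a proof of this proposition; it merely states that the proof is ``elementary'' and refers the reader to~\cite{these}. Your approach---reducing an arbitrary $\chi\in L^\ii_c(\RRd)$ to a finite union of translated unit cubes, bounding $\tr|\chi A\chi|$ via the factorization $\chi A\chi=(\chi A^{1/2})(A^{1/2}\chi)$ and the Hilbert--Schmidt inequality, and then using stationarity together with the measure-preserving property of $\tau$ to identify all the $\EE(\tr(\1_{Q+k}A\1_{Q+k}))$---is precisely the standard elementary route one expects and almost certainly coincides with what is in the cited thesis. The small technical remark you flag about complex-valued $\chi$ is handled correctly by your H\"older bound, and the a.s.\ boundedness of $A$ makes $A^{1/2}$ available without further justification.
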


The trace per unit volume of an operator $A \in \Sv_1$ is defined as
\begin{equation}\label{eq:trace-per-unit-volume}
\boxed{\tv{A}=\EE\big(\tr\left(\1_QA\left(\cdot\right)1_Q\right)\big).}
\end{equation}

The following summarizes the main properties of locally trace-class ergodic operators.

\begin{proposition}[Kernel and density]\label{kernel_er}
Let $A \in \Sv_1$. Then, there exists a unique function $A(\cdot,\cdot,\cdot)\in L^1(\O, L^2_{\rm loc}(\RRd\times \RRd))$, called the \emph{kernel} of $A$, and 
a unique function $\rho_A\in L^1_s$, called the \emph{density} of $A$, such that
$$
\forall \phi\in L^2_c(\RRd) ,\;  \bra{A(\o)\phi}(x)=\int_\RRd A(\o,x,y)\phi(y)\,dy\; \mbox{ a.s. and a.e.}
$$
and
\begin{equation}\label{eq:def_rho_A}
\forall \chi\in L^\infty_c(\RRd),\;\tr(\chi A(\o) \chi)=\int_\RRd \chi^2(x)\rho_A(\o,x)\,dx \;\mbox{ a.s.}
\end{equation}
The kernel $A(\cdot,\cdot,\cdot)$ is stationary in the following sense
$$
A(\tau_k(\o),x,y)=A(\o,x+k,y+k),\; \forall k\in \ZZd\;  \mbox{ a.e. and a.s. }
$$
Moreover, if $A\geq 0$, then $\rho_A\geq 0$.
\end{proposition}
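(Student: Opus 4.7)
The plan is to read off the kernel from Hilbert--Schmidt theory applied to spatial cutoffs of $A$, and to read off the density via a Radon--Nikodym argument for the positive set function $B\mapsto\tr(\1_B A\1_B)$. Uniqueness and stationarity in both constructions come from uniqueness of the Hilbert--Schmidt kernel together with the intertwining $A(\tau_k(\omega))=U_kA(\omega)U_k^*$.

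For the kernel, I would first fix a bounded Borel set $B\subset\R^d$ and observe that $\1_BA(\omega)\1_B\in\S_1\subset\S_2$ almost surely, so Hilbert--Schmidt theory delivers a unique kernel $K_B(\omega,\cdot,\cdot)\in L^2(B\times B)$ with $\|K_B(\omega,\cdot,\cdot)\|_{L^2}^2=\|\1_BA(\omega)\1_B\|_{\S_2}^2\leq\|\1_BA(\omega)\1_B\|_{\S_1}\|\1_BA(\omega)\1_B\|$. Because $A$ is a random operator, $\omega\mapsto\langle\1_BA(\omega)\1_B\phi,\psi\rangle$ is measurable for every $\phi,\psi\in\H$, which upgrades by polarization and a density argument to measurability of $K_B$ in $\omega$. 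Taking an exhaustion $B_n\nearrow\R^d$ and checking the compatibility $K_{B_{n+1}}|_{B_n\times B_n}=K_{B_n}$ by uniqueness of the Hilbert--Schmidt kernel, one glues the pieces into a single measurable map $A(\cdot,\cdot,\cdot)\in L^1(\O,L^2_{\rm loc}(\R^d\times\R^d))$; integrability in $\omega$ on each compact set follows from Cauchy--Schwarz combined with the almost sure boundedness and the locally trace-class assumption. The stationarity $A(\tau_k(\omega),x,y)=A(\omega,x+k,y+k)$ then drops out of $A(\tau_k(\omega))=U_kA(\omega)U_k^*$ and the fact that $U_kTU_k^*$ has kernel $T(x+k,y+k)$, using uniqueness one more time.

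For the density I treat first the case $A\geq0$. Almost surely, $B\mapsto\tr(\1_BA(\omega)\1_B)=\|A(\omega)^{1/2}\1_B\|_{\S_2}^2$ is a locally finite positive Borel measure on $\R^d$. A Lebesgue-null set $N$ satisfies $\1_N=0$ in $\H$, hence $A(\omega)^{1/2}\1_N=0$ and the measure of $N$ vanishes, so the measure is absolutely continuous with respect to Lebesgue measure. Radon--Nikodym then produces a unique nonnegative $\rho_A(\omega,\cdot)\in L^1_{\rm loc}(\R^d)$ satisfying \eqref{eq:def_rho_A}; measurability in $\omega$ is inherited from that of $\omega\mapsto\tr(\1_BA(\omega)\1_B)$ via Fubini on a countable generating family of indicators. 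Taking $\chi=\1_Q$ and using Proposition~\ref{magie} gives $\EE\bigl(\int_Q\rho_A\bigr)=\tv{A}<\infty$, so $\rho_A\in L^1_s$, and stationarity comes from $\tr(\chi A(\tau_k\omega)\chi)=\tr(\chi(\cdot-k)A(\omega)\chi(\cdot-k))$ together with uniqueness. For a general $A\in\Sv_1$, I would decompose $A=A_1-A_2+i(A_3-A_4)$ into four positive, ergodic, locally trace-class operators (via real/imaginary and positive/negative spectral parts) and define $\rho_A$ by linearity. Nonnegativity of $\rho_A$ when $A\geq0$ is built into the construction.

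The argument is a careful bookkeeping exercise rather than a fight with any single hard estimate; the only delicate points are (i) upgrading the scalar measurability of $A$ as a random operator to joint measurability of its kernel, and (ii) checking that the Radon--Nikodym densities obtained on different bounded sets are mutually consistent, which is immediate from uniqueness of the Radon--Nikodym derivative. Ergodicity of $A$ plays no role in the existence step --- it enters only to turn the pointwise kernel identity into the clean stationarity statement.
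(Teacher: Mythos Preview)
The paper does not prove this proposition; it simply says the proof is elementary and refers to~\cite{these}. Your strategy---Hilbert--Schmidt kernels glued over an exhaustion, Radon--Nikodym for the density in the positive case---is the standard one and is almost certainly what appears there. The kernel construction and the density construction for $A\ge 0$ are correct as sketched.

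The one genuine gap is your reduction of the general case to the positive one. You assert that the positive and negative spectral parts of $\operatorname{Re}A$ and $\operatorname{Im}A$ are again in $\Sv_1$. Ergodicity is clear from functional calculus, but \emph{locally trace class} is not: knowing $\chi B\chi\in\S_1$ does not force $\chi|B|\chi\in\S_1$, since $|\cdot|$ and conjugation by $\chi$ do not commute and there is no inequality $\tr(\chi|B|\chi)\le C\,\tr|\chi B\chi|$. (Toy picture: let $B=\sum_n a_n\bigl(|e_n\rangle\langle f_n|+|f_n\rangle\langle e_n|\bigr)$ with $(e_n)$ an orthonormal basis of $L^2(Q)$ and each $f_n$ a unit vector supported in a distinct far-away cube; every $\1_{B'}B\1_{B'}$ is finite rank, yet $\1_Q|B|\1_Q=\sum_n a_n|e_n\rangle\langle e_n|$ need not be trace class.) Your argument does not invoke ergodicity at this step, so the claim is unjustified as written. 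The repair is cheap: decompose \emph{locally}. For each bounded $B$ the operator $\1_BA(\omega)\1_B$ is genuinely trace class, hence splits into positive trace-class pieces and has a density $\rho_B(\omega,\cdot)\in L^1(B)$; compatibility $\rho_{B'}=\1_{B'}\rho_B$ for $B'\subset B$ follows from the defining identity $\tr(\chi\cdot\chi)=\int\chi^2\rho$, and you glue exactly as for the kernel. A minor side remark: you appeal to ``almost sure boundedness'' of $A$ in the $L^1(\Omega,L^2_{\rm loc})$ estimate for the kernel, but this is not part of the hypothesis $A\in\Sv_1$; simply use $\|K_B(\omega,\cdot,\cdot)\|_{L^2}=\|\1_BA(\omega)\1_B\|_{\S_2}\le\|\1_BA(\omega)\1_B\|_{\S_1}$ instead.
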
 

Note that it follows from (\ref{eq:trace-per-unit-volume}) and (\ref{eq:def_rho_A}) that
$$
\tv{A}=\EE\left(\int_Q \rho_A\right).
$$
The proofs of Propositions~\ref{magie} and~\ref{kernel_er} are elementary; they can be read in~\cite{these}.  

\medskip

The following cyclicity property is proved in~\cite{these_nicolas}, based on arguments in \cite{dixmier} (see also~\cite{these} for a detailed proof): if $B$ is an ergodic operator in $L^\infty\left(\O,\B\right)$ and $A$ a positive operator in $\Sv_{1}\cap L^\infty(\O,\B)$, then $BA$ and $AB$ are in $\Sv_{1}\cap L^\infty(\O,\B)$ and
\begin{equation}\label{cyclicity}
\trv\left( BA\right)=\trv\left( AB\right).
\end{equation}

\subsection{Ergodic operators with locally finite kinetic energy}

Ergodic density matrices for fermions are operators $\gamma\in\Sv_{1}\cap \underline{\mathcal S}$ such that $0\leq \gamma\leq 1$ a.s. By Birkhoff's theorem, the trace per unit volume can be interpreted from a physical viewpoint as the average number of particles per unit volume. In this section, we define and study in a similar fashion the average kinetic energy per unit volume.

\subsubsection{Definition}

For $1\leq j \leq d$, as usual, we denote by $P_j=-i\partial_{x_j}$ the momentum operator in the $j^{\rm th}$ direction, which is self-adjoint with $D(P_j)=\{\phi\in\H\ |\ \partial_{x_j}\phi\in\H\}$. As $P_j$ commutes with the translations, we see that for all $A\in \Sv_1$, the operator $P_j A P_j$ is ergodic. The operator $P_j A P_j$ is well defined and bounded on $D(P_j)$, with values in $D(P_j)'$, where $D(P_j)'$ is the topological dual space of $D(P_j)$. We say that the kinetic energy of $A$ is locally finite if $P_j A P_j\in\Sv_1$, and we then call 
$$\boxed{\tv{-{\Delta} A}:=  \sum_{j=1}^d\tv{P_j A P_j}}$$ 
the average kinetic energy per unit volume of $A$. We denote by $\Sv_{1,1}$ the subspace of $\Sv_1$ composed of the ergodic locally trace class operators with locally finite kinetic energy. 

\subsubsection{Hoffmann-Ostenhof and Lieb-Thirring inequalities for ergodic operators}

For finite systems ($\gamma\in\S_1\cap \mathcal{S}$, $0\leq \gamma\leq 1$ and $\tr\bra{-\Delta\gamma}<\ii$), the Hoffmann-Ostenhof~\cite{OO,stability} and Lieb-Thirring~\cite{LiTh,LiTh2,stability} inequalities provide useful properties of the map $\gamma \mapsto \rho_\gamma$. In this section, we state and prove an equivalent of these inequalities for ergodic density matrices with locally finite kinetic energy.

\begin{proposition}[Hoffmann-Ostenhof inequality for ergodic operators]\label{A2}
 Let $A$ be a positive operator in $\Sv_{1,1} \cap \underline{\mathcal S }$. Then 
$$\sqrt{\rho_A}\in H_s^1  \quad \mbox{and}\quad  \EE\left( \int_Q\left|\nabla\sqrt{\rho_A}\right|^2\right)\leq  \trv(-{\Delta} A).$$
\end{proposition}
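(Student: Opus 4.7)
The plan is to factorize $A = B \cdot B$ through its positive square root $B := A^{1/2}$ and derive a pointwise Cauchy--Schwarz bound for $\nabla \rho_A$. By functional calculus, $B$ is ergodic, self-adjoint and almost surely bounded. For every $\chi \in L^\infty_c(\R^d)$ we have $\chi A \chi = (B\chi)^\ast (B\chi) \in \S_1$ a.s., hence $B\chi \in \S_2$, and therefore $B$ admits an integral kernel $B(\omega,\cdot,\cdot) \in L^2_{\mathrm{loc}}(\R^d \times \R^d)$ a.s., stationary in the sense $B(\tau_k\omega, x, z) = B(\omega, x+k, z+k)$. Similarly, writing $P_j A P_j = (P_j B)(P_j B)^\ast$ and using $P_j A P_j \in \Sv_1$ gives $P_j B \, \chi \in \S_2$ a.s.; equivalently, the distributional derivative $\partial_{x_j} B(\omega, x, z)$ exists and $\chi(x)\, \partial_{x_j} B(\omega, x, z) \in L^2(\R^d \times \R^d)$ for every $\chi \in L^\infty_c(\R^d)$.

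The identity $A = B^2$ combined with the self-adjointness of $B$ yields
\begin{equation*}
\rho_A(\omega, x) = \int_{\R^d} |B(\omega, x, z)|^2\, dz.
\end{equation*}
Differentiating distributionally in $x_j$ under the integral and applying Cauchy--Schwarz gives
\begin{equation*}
|\partial_{x_j} \rho_A(\omega, x)|^2 \leq 4\, \rho_A(\omega, x) \int_{\R^d} |\partial_{x_j} B(\omega, x, z)|^2\, dz = 4\, \rho_A(\omega, x)\, \rho_{P_j A P_j}(\omega, x),
\end{equation*}
the last equality recognizing $\int_{\R^d} |\partial_{x_j} B(\omega, x, z)|^2\, dz$ as the density of $P_j A P_j = (P_j B)(P_j B)^\ast$, using that the operator $P_j B$ has kernel $-i\,\partial_{x_j} B(\omega, x, z)$.

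To pass from $\rho_A$ to $\sqrt{\rho_A}$ while controlling the set where the density vanishes, I regularize via $\sqrt{\rho_A + \epsilon}$ and use the chain rule:
\begin{equation*}
|\nabla \sqrt{\rho_A + \epsilon}|^2 = \frac{|\nabla \rho_A|^2}{4(\rho_A + \epsilon)} \leq \sum_{j=1}^d \rho_{P_j A P_j}.
\end{equation*}
Integrating over $Q$ and taking expectation gives a uniform-in-$\epsilon$ bound of $\sqrt{\rho_A + \epsilon} - \sqrt{\epsilon}$ in $H^1_s$; passing to the weak limit $\epsilon \to 0^+$ (and identifying the weak limit of the gradients with $\nabla \sqrt{\rho_A}$ in the distributional sense) yields $|\nabla \sqrt{\rho_A}|^2 \leq \rho_{-\Delta A}$ a.e., together with the announced inequality. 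Since $\rho_A \in L^1_s$ and stationarity of $\sqrt{\rho_A}$ is immediate from that of $\rho_A$, we conclude $\sqrt{\rho_A} \in H^1_s$.

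The main technical obstacle is the first step: because $B$ is only locally Hilbert--Schmidt (not globally), the existence of its kernel with the required $L^2_{\mathrm{loc}}$ regularity, and the fact that distributional $x_j$-derivatives of $B(\omega, x, z)$ behave as expected under the integral sign, must be extracted from the trace-class properties of $A$ and $P_j A P_j$ via $\chi$-localizations. Once those kernel-level regularity statements are in hand, the Cauchy--Schwarz plus regularization argument is routine.
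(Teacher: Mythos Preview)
Your proof is correct and takes a genuinely different route from the paper's. The paper argues by \emph{localization and reduction to the classical inequality}: it fixes a compact set $B$, picks $\eta\in C^\infty_c(\R^d)$ with $\eta\equiv 1$ on $B$, applies the standard Hoffmann--Ostenhof inequality to the trace-class operator $\eta A(\omega)\eta$ (whose density coincides with $\rho_A$ on $B$), and then uses the commutator identity $\1_B[P_j,\eta]=0$ to replace $\tr(\1_B P_j\eta A\eta P_j\1_B)$ by $\tr(\1_B P_j A P_j\1_B)$. Taking expectation and choosing $B=Q$ yields the result. Your approach instead factorizes $A=B^2$ with $B=A^{1/2}$ and runs the Cauchy--Schwarz argument directly at the kernel level, which is essentially how the classical inequality itself is proved. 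The paper's argument is shorter because it treats the classical case as a black box and sidesteps the kernel-regularity bookkeeping you flag at the end; your argument is more self-contained and yields the slightly sharper pointwise bound $|\nabla\sqrt{\rho_A}|^2\le\rho_{-\Delta A}$ a.e., not just its integrated version.
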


\begin{proof}
It follows from Proposition \ref{kernel_er} that $\sqrt{\rho_A}\in L^2_s$. 
Let $B$ be a compact set of $\RRd$ and $\eta\in C^\infty_c(\RRd)$ such that $\eta \equiv 1$ on $B$ and $0\leq \eta\leq 1$. The operator $\eta A(\o)\eta$ has finite kinetic energy a.s. Therefore, the Hoffmann-Ostenhof inequality gives 
$$
\av{\nabla \sqrt{\rho_{A\bra{\o}}}}=|\nabla\sqrt{\rho_{\eta A(\o)\eta}}|\leq \sqrt{\sum_{n\in\NN}\lambda_n(\o)|\nabla\phi_n(\o)|^2} \mbox{ a.s. and a.e. on } B,
$$
where $\bra{\phi_n(\o)}_{n\in\NN}$ is an orthonormal basis of eigenvectors of the compact self-adjoint operator $\eta A(\o)\eta$ and  $\bra{\lambda_n(\o)}_{n\in\NN}$ the associated eigenvalues.  As 
$$\int_B \sum_{n\in\NN}\lambda_n(\o)|\nabla\phi_n(\o)|^2=\sum_{j=1}^d\tr\left( \1_B P_j \eta A\bra{\o} \eta P_j \1_B \right) \mbox{ a.s.}$$
and as for all $1\leq j\leq d$, $\1_B\left[P_j,\eta\right]=-i\1_B\partial_{x_j}\eta=0$, we deduce that
$$
\int_B|\nabla\sqrt{\rho_{A(\o)}}|^2 \leq \sum_{j=1}^d\tr\left( \1_B\eta  P_j  A\bra{\o} P_j \eta  \1_B \right)=\sum_{j=1}^d\tr\left( \1_B  P_j  A\bra{\o} P_j  \1_B \right).
$$
Therefore
$$
\EE\left(\int_B|\nabla\sqrt{\rho_{A}}|^2 \right)\leq \sum_{j=1}^d\Ev{\tr\left( \1_B  P_j  A P_j  \1_B \right)}.
$$
As $A$ has locally finite kinetic energy, we conclude that $\sqrt{\rho_A}\in H^1_s$. For $B=Q$, we obtain the stated inequality.
\end{proof}

The following corollary is an obvious consequence of Proposition~\ref{A2} and of the Sobolev embeddings.

\begin{corollary}\label{espace_f}
Let $A$ be a positive operator in $\Sv_{1,1} \cap \underline{\mathcal S }$. Then, $\rho_A\in L^1_s\bra{L^p}$, for $p=+\infty$ if $d=1$,   $p\in \left[ 1,+\ii\right)$ if $d=2$ and 
$1\leq p\leq \frac{d}{d-2}$ if $d\geq3$.
\end{corollary}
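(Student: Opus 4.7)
The plan is to combine Proposition~\ref{A2}, which gives $\sqrt{\rho_A}\in H^1_s$, with the classical Sobolev embeddings applied cellwise on the bounded Lipschitz domain $Q$. Since $\sqrt{\rho_A}\in L^2(\Omega,H^1_{\rm loc}(\R^d))$ and is stationary, we have in particular $\sqrt{\rho_A}(\omega,\cdot)\in H^1(Q)$ for almost every $\omega$, with $\|\sqrt{\rho_A}(\omega,\cdot)\|_{H^1(Q)}^2\in L^1(\Omega)$ by the bound $\EE(\|\nabla\sqrt{\rho_A}\|_{L^2(Q)}^2)\le \trv(-\Delta A)<\infty$ together with the trivial $L^2_s$ bound on $\sqrt{\rho_A}$ itself (which follows from $\rho_A\in L^1_s$, a consequence of Proposition~\ref{kernel_er}).

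The Sobolev embedding $H^1(Q)\hookrightarrow L^q(Q)$ is valid for $q=+\infty$ when $d=1$, for any $q\in[1,+\infty)$ when $d=2$, and for $q=2d/(d-2)$ when $d\ge 3$. Applying it to $\sqrt{\rho_A}(\omega,\cdot)$ yields
$$
\|\sqrt{\rho_A}(\omega,\cdot)\|_{L^q(Q)}^2\ \le\ C^2\,\|\sqrt{\rho_A}(\omega,\cdot)\|_{H^1(Q)}^2\quad\text{a.s.},
$$
which, after squaring, is exactly the pointwise (in $\omega$) bound
$$
\|\rho_A(\omega,\cdot)\|_{L^p(Q)}\ \le\ C^2\,\|\sqrt{\rho_A}(\omega,\cdot)\|_{H^1(Q)}^2
$$
with $p=q/2$, that is, $p=+\infty$ for $d=1$, $p$ arbitrary in $[1,+\infty)$ for $d=2$, and $p=d/(d-2)$ for $d\ge 3$.

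Integrating this inequality over $\Omega$ and using the $L^1(\Omega)$ control of the right-hand side obtained in the first step yields
$$
\|\rho_A\|_{L^1_s(L^p)}=\EE\bra{\|\rho_A\|_{L^p(Q)}}\ <\ \infty.
$$
Stationarity of $\rho_A$ (Proposition~\ref{kernel_er}) then upgrades this cell estimate to membership in the space $L^1_s(L^p)$ for all $p$ in the stated range (monotonicity of $L^p(Q)$-norms on the bounded set $Q$ handles the smaller exponents). There is no genuine obstacle here: the whole argument is essentially a deterministic Sobolev embedding applied $\omega$ by $\omega$, with the ergodic structure playing no role beyond allowing one to take expectations.
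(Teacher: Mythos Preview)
Your proof is correct and follows exactly the approach the paper intends: the authors state the corollary as ``an obvious consequence of Proposition~\ref{A2} and of the Sobolev embeddings,'' and your argument spells this out precisely, applying the Sobolev embedding $H^1(Q)\hookrightarrow L^q(Q)$ to $\sqrt{\rho_A}(\omega,\cdot)$ cellwise and then taking expectations.
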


The following is now the ergodic equivalent of the Lieb-Thirring inequality~\cite{LiTh,LiTh2,stability}.

\begin{proposition}[Lieb-Thirring inequality for ergodic operators]\label{Lieb_T}
There exists a constant $K(d)>0$, depending only on the space dimension $d\geq1$, such that for all $\gamma\in\Sv_{1,1} \cap \underline{\mathcal S }$ with $0\leq \gamma(\o) \leq 1$ a.s., 
\begin{equation}\label{eq:LT}\rho_\gamma\in L_s^{\frac{d+2}{d}}\quad \mbox{and} \quad  K(d)\;\EE\left( \int_Q \rho_\gamma^{\frac{d+2}{d}} \right)\leq  \trv(-{\Delta} \gamma).
 \end{equation}
\end{proposition}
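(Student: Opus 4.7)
The strategy mirrors that of Proposition~\ref{A2}: apply the deterministic Lieb--Thirring inequality to a compactly localized piece of $\gamma$ and then average using stationarity. The new ingredient compared with the Hoffmann--Ostenhof case is that Lieb--Thirring is not a pointwise inequality on the density, so the clean commutator identity $\1_B[P_j,\eta]=0$ used in Proposition~\ref{A2} is not enough; one must instead absorb the commutator with a Hilbert--Schmidt bound, at the cost of a factor $2$ in the constant---harmless, since only existence of some $K(d)>0$ is required.

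Concretely, set $\Lambda_n=\ZZd\cap[-n,n]^d$, $B_n=\bigcup_{k\in\Lambda_n}(Q+k)$, and choose $\eta_n\in C^\infty_c(\RRd)$ with $0\le\eta_n\le 1$, $\eta_n\equiv 1$ on $B_n$, $\operatorname{supp}(\eta_n)\subset B_{n+1}$, and $\|\nabla\eta_n\|_\infty\le C$ uniformly in $n$. For a.e.~$\omega$, $\tilde\gamma_n(\omega):=\eta_n\gamma(\omega)\eta_n$ is positive with $0\le\tilde\gamma_n\le\eta_n^2\le 1$ and, since $\gamma\in\Sv_1$ and $\eta_n$ is compactly supported, trace class with density $\rho_{\tilde\gamma_n}=\eta_n^2\rho_\gamma$. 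The commutator identity $[P_j,\eta_n]=-i\partial_j\eta_n$ together with $\|a+b\|_{\S_2}^2\le 2\|a\|_{\S_2}^2+2\|b\|_{\S_2}^2$ and cyclicity yields
$$\tr(P_j\eta_n\gamma\eta_n P_j)=\|P_j\eta_n\gamma^{1/2}\|_{\S_2}^2\le 2\,\tr(\eta_n^2 P_j\gamma P_j)+2\int(\partial_j\eta_n)^2\rho_\gamma,$$
and the right-hand side is a.s.~finite (by local trace-class of $P_j\gamma P_j$ and local integrability of $\rho_\gamma$), so $\tr(-\Delta\tilde\gamma_n)<\infty$ a.s. Applying the deterministic Lieb--Thirring inequality \cite{LiTh,LiTh2} to $\tilde\gamma_n$ and using $\eta_n\equiv 1$ on $B_n$ gives
$$K(d)\int_{B_n}\rho_\gamma^{(d+2)/d}\le K(d)\int_{\RRd}\rho_{\tilde\gamma_n}^{(d+2)/d}\le\sum_{j=1}^d\tr(P_j\eta_n\gamma\eta_n P_j),$$
and the operator inequalities $\eta_n^2\le\1_{B_{n+1}}$ and $|\nabla\eta_n|^2\le C^2\1_{B_{n+1}\setminus B_n}$ (together with $\1_{B_{n+1}}^2=\1_{B_{n+1}}$) then yield
$$K(d)\int_{B_n}\rho_\gamma^{(d+2)/d}\le 2\sum_{j=1}^d\tr\bigl(\1_{B_{n+1}}P_j\gamma P_j\1_{B_{n+1}}\bigr)+2C^2\int_{B_{n+1}\setminus B_n}\rho_\gamma.$$

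Taking expectations and invoking stationarity (which gives $\EE\int_{B_n}f=|\Lambda_n|\,\EE\int_Q f$ for any non-negative stationary $f$, by splitting $B_n$ into cells $Q+k$), dividing by $|\Lambda_n|$, and letting $n\to\infty$ (using $|\Lambda_{n+1}|/|\Lambda_n|\to 1$ and $(|\Lambda_{n+1}|-|\Lambda_n|)/|\Lambda_n|=O(1/n)\to 0$), one obtains
$$K(d)\,\EE\int_Q\rho_\gamma^{(d+2)/d}\le 2\,\trv(-\Delta\gamma),$$
which, after redefining $K(d)$ by a factor of $2$, is the stated inequality; in particular, $\rho_\gamma\in L^{(d+2)/d}_s$. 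The main technical point is verifying a.s.\ that $\tilde\gamma_n\in\S_1$ with finite kinetic energy so the deterministic Lieb--Thirring inequality applies, which is settled by the commutator bound above and Proposition~\ref{kernel_er}; the rest is bookkeeping with stationarity, where the boundary layer $B_{n+1}\setminus B_n$ vanishes in the volume average.
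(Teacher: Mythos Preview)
Your proof is correct and follows essentially the same localization-and-average strategy as the paper: localize $\gamma$ by a smooth cutoff equal to $1$ on a large box, apply the deterministic Lieb--Thirring inequality, control the commutator term $[P_j,\eta_n]=-i\partial_j\eta_n$ as a boundary contribution, take expectations, and pass to the volume limit using stationarity. The only substantive difference is that the paper splits the commutator via $(a+b)^2\le(1+\epsilon)a^2+(1+1/\epsilon)b^2$ and sends $\epsilon\to 0$ after $L\to\infty$, thereby recovering the optimal whole-space Lieb--Thirring constant, whereas you take $\epsilon=1$ and accept the factor $2$; since the proposition only asserts existence of some $K(d)>0$, your version is sufficient.
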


\begin{proof}
To prove~\eqref{eq:LT}, we apply the Lieb-Thirring inequality in a box of side-length  $L$, and then let $L$ go to infinity. The constant $K(d)$ can be chosen equal to the optimal Lieb-Thirring constant in the whole space.
Let $\Gamma_L=\left[ -{L}/{2},{L}/{2} \right)^d$ and let $\bra{\chi_L}_{L\in\NN^*}$ be a sequence of localizing functions in $C^\infty_c\bra{\RRd}$, such that $0\leq\chi_L\leq 1$, $\chi_L\equiv1$ on $\Gamma_{L-1}$, $\chi_L\equiv0$ outside of $\Gamma_L$, and $|\nabla\chi_L(x)|\leq C$. We first apply the Lieb-Thirring inequality to $\chi_L \gamma (\o)\chi_L$ and  obtain
$$
K(d)\int_{\Gamma_{L-1}}\rho_\gamma (\o,x)^{\frac{d+2}{d}}\,dx\leq \tr\left(-{\Delta} \chi_L \gamma (\o)\chi_L\right)\;\mbox{ a.s.}
$$
Next, using the stationarity of $\rho_\gamma$ and the equality $\left[P_j,\chi_L\right]=-i\partial_{x_j} \chi_L$, we get for any $\epsilon>0$
\begin{multline}\label{LT1}
K(d)\;\Ev{ \int_Q\rho_\gamma^{\frac{d+2}{d}}}\leq \frac{\bra{1+\epsilon}}{\bra{L-1}^d}\sum_{j=1}^d \Ev{ \tr\bra{\chi_L P_j \gamma P_j \chi_L}}\\
 + \frac{1+1/\epsilon}{\bra{L-1}^d}\sum_{j=1}^d \Ev{ \tr\bra{ \bra{\partial_{x_j}\chi_L} \gamma \bra{\partial_{x_j}\chi_L}}}.
\end{multline}
For each $1\leq j\leq d$, we have
$$
\tr\left(\chi_L P_j\gamma(\omega) P_j\chi_L\right) \leq \sum_{k\in \Gamma_L\cap \ZZd}  \tr\left( \1_Q P_j\gamma(\tau_k\bra{\omega})P_j \1_Q\right) \; \mbox{ a.s.}
$$
It follows from the ergodicity of $\gamma$ (hence of $P_j\gamma P_j$) that
\begin{equation}\label{LT2}
\sum_{j=1}^d \EE\left( \tr\left(\chi_L P_j \gamma P_j\chi_L\right)  \right) \leq L^d\tv{-{\Delta} \gamma }.
\end{equation}
Besides,
\begin{equation*}
 \tr\bra{(\partial_{x_j}\chi_L) \gamma (\partial_{x_j}\chi_L)} =\int_{\Gamma_L\setminus\Gamma_{L-1}}\rho_{\gamma }(\o,\cdot)(\partial_{x_j}\chi_L)^2
\leq C \int_{\Gamma_L\setminus\Gamma_{L-1}}\rho_{\gamma }(\o,\cdot),\\
\end{equation*}
where we have used that $\nabla \chi_L$ is uniformly bounded. Using again the stationarity of $\rho_\gamma $, we obtain
\begin{equation}\label{LT3}
\sum_{j=1}^d \EE\left(\tr\bra{\partial_{x_j}\chi_L} \gamma \bra{\partial_{x_j}\chi_L} \right)\leq CL^{d-1}\tv{\gamma}.
\end{equation}
Combining~\eqref{LT1},~\eqref{LT2} and~\eqref{LT3}, letting $L$ go to infinity then letting $\epsilon$ go to $0$, we end up with the claimed inequality.
\end{proof}

\subsubsection{A compactness result}

In this section we investigate the weak compactness properties of the set of fermionic density matrices with finite average number of particles and kinetic energy per unit volume
\begin{equation}
\K:=\left\{ \gamma\in \Sv_{1,1}\cap \underline{\mathcal S},\;  0\leq\gamma\leq 1 \mbox{ a.s.}\right\}.
\label{def:K}
\end{equation}
This set is a weakly-$\ast$ closed convex subset of $L^\ii(\Omega,\B)$. The following result will be very useful.

\begin{proposition}[Weak compactness of ergodic density matrices]\label{prop:weak-compactness}
Let $(\gamma_n)$ be any sequence in $\K$. Then there exists $\gamma\in\K$ and a subsequence $(n_k)$ such that 
\begin{enumerate}
\item \label{item10}$\dps \gamma_{n_k} \mathop{\rightharpoonup_*}_{k \to \infty} \gamma$  in $L^\ii(\Omega,\B)$,

\smallskip

\item \label{item11} $\dps \lim_{k\to\ii}\tv{\gamma_{n_k}}=\tv{\gamma}$,

\smallskip

\item \label{item12} $\dps \rho_{\gamma_{n_k}}\mathop{\rightharpoonup}_{k \to \infty} \rho_\gamma\; \mbox{weakly in}\; L^{\frac{d+2}{d}}_s$,

\smallskip

\item \label{item13} $\dps \trv\left(-{\Delta} \gamma\right)\leq \liminf_\cn \trv\left(-{\Delta} \gamma_{n}\right)$.
\end{enumerate}
\end{proposition}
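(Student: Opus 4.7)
\emph{Proof plan.} The plan is to extract nested weak limits, identify them through a Dirichlet basis on the unit cell, and use the kinetic-energy bound both for the lower semicontinuity and as a tail estimate.

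\emph{Step 1 (weak-$\ast$ compactness).} Since $0\le\gamma_n\le 1$ a.s., the sequence is bounded in $L^\ii(\O,\B)$, so Banach--Alaoglu applied to the duality $L^\ii(\O,\B)=L^1(\O,\S_1)^\ast$ (with separable predual) yields a subsequence with $\gamma_{n_k}\wto_\ast\gamma$. Self-adjointness and the bounds $0\le\gamma\le 1$ pass to the limit by testing against $f\cdot|\phi\rangle\langle\phi|\in L^1(\O,\S_1)$ with $f\in L^\ii(\O)$ and $\phi$ in a countable dense subset of $\H$. For stationarity, for any $B\in L^1(\O,\S_1)$,
$$\EE\bigl(\tr(\gamma_n(\tau_k\o)B(\o))\bigr)=\EE\bigl(\tr(\gamma_n(\o)U_k^{\ast}B(\o)U_k)\bigr)=\EE\bigl(\tr(\gamma_n(\o)B(\tau_{-k}\o))\bigr),$$
by $\gamma_n(\tau_k\o)=U_k\gamma_nU_k^\ast$, cyclicity of the trace, and $\tau_k$-invariance of $\PP$; passing to the limit in $n$ and invoking the fact that $L^1(\O,\S_1)$ separates points of $L^\ii(\O,\B)$ gives $\gamma(\tau_k\o)=U_k\gamma(\o)U_k^\ast$ a.s. If $\liminf_n\trv(-\Delta\gamma_n)=+\ii$, item (4) is automatic and items (2)--(3) hold trivially; otherwise a further extraction makes $\trv(-\Delta\gamma_{n_k})$ bounded, and Propositions~\ref{Lieb_T}--\ref{A2} furnish uniform bounds on $\rho_{\gamma_{n_k}}$ in the reflexive space $L^{(d+2)/d}_s$ and on $\sqrt{\rho_{\gamma_{n_k}}}$ in $H^1_s$; one more extraction gives $\rho_{\gamma_{n_k}}\wto\tilde\rho$ weakly in $L^{(d+2)/d}_s$.

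\emph{Step 2 (Dirichlet expansion, items 2 and 4).} Let $(\phi_i)_{i\ge 1}$ be the $L^2(Q)$-orthonormal basis of Dirichlet eigenfunctions of $-\Delta$ on $Q$ with eigenvalues $\lambda_i\uparrow\ii$, each extended by zero outside $Q$, so that $\phi_i\in H^1_0(Q)\subset H^1(\R^d)$ and $P_j\phi_i\in\H$. Completing this to an ONB of $L^2(\R^d)$ by vectors supported in $Q^c$ (which are annihilated by $\1_Q$) and expanding the local traces gives
$$\trv(\gamma_n)=\sum_i\EE\langle\phi_i,\gamma_n\phi_i\rangle,\qquad\trv(-\Delta\gamma_n)=\sum_{j=1}^d\sum_i\EE\langle P_j\phi_i,\gamma_n P_j\phi_i\rangle,$$
with nonnegative summands each converging by the weak-$\ast$ convergence tested against the deterministic rank-one operators $|\phi_i\rangle\langle\phi_i|$, respectively $|P_j\phi_i\rangle\langle P_j\phi_i|$. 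Fatou's lemma applied to the double sum proves item (4) and in particular yields $\gamma\in\Sv_{1,1}\subset\K$. Item (2) then follows from the identification established in Step 3, by testing the weak convergence $\rho_{\gamma_{n_k}}\wto\rho_\gamma$ in $L^{(d+2)/d}_s$ against the bounded function $\1_{\O\times Q}\in L^{(d+2)/2}(\O\times Q)$.

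\emph{Step 3 (identification $\tilde\rho=\rho_\gamma$, item 3 --- the main obstacle).} For any $\chi\in C^\ii_c(\R^d)$, the same Dirichlet expansion applied on a ball $B\supset\operatorname{supp}\chi$ with its own basis $(e_i^B)$ gives
$$\EE\int\chi^2\rho_{\gamma_n}=\EE\tr(\chi\gamma_n\chi)=\sum_i\EE\langle\chi e_i^B,\gamma_n\chi e_i^B\rangle,$$
where the summands converge termwise by weak-$\ast$ convergence; the uniform $L^{(d+2)/d}_s$-bound on $\rho_{\gamma_n}$ (used via H\"older against the compactly supported $\chi^2$) together with the $H^1_s$-bound on $\sqrt{\rho_{\gamma_n}}$ (Hoffmann--Ostenhof) provides the tightness needed to interchange limit and sum, yielding $\EE\int\chi^2\rho_{\gamma_{n_k}}\to\EE\int\chi^2\rho_\gamma$. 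Combined with $\rho_{\gamma_{n_k}}\wto\tilde\rho$ in $L^{(d+2)/d}_s$ this forces $\tilde\rho=\rho_\gamma$, which proves (3). The crux of the whole statement is exactly this identification: weak-$\ast$ convergence in $L^\ii(\O,\B)$ is probed only by trace-class test operators, whereas local traces and densities pair $\gamma_n$ with merely bounded multiplication operators such as $\1_Q$ or $\chi^2$. The Dirichlet eigenbasis reduces these local traces to sums of rank-one tests, where weak-$\ast$ convergence applies term by term, and the Sobolev-type bounds on the densities furnish the uniform tail estimate needed to close the interchange.
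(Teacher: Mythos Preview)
Your Steps~1--2 are fine and match the paper closely (Fatou along a Dirichlet basis of $Q$ for items~(4) and $\gamma\in\Sv_{1,1}$). The difficulties are in Step~3, where there are two real gaps.

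\textbf{Gap 1: the tail estimate.} You write that the $L^{(d+2)/d}_s$ and $H^1_s$ bounds on $\rho_{\gamma_n}$ and $\sqrt{\rho_{\gamma_n}}$ ``provide the tightness needed to interchange limit and sum''. They do not. These are bounds on the \emph{density}; they say nothing about the tail $\sum_{i>N}\EE\langle e_i^B,\chi\gamma_n\chi e_i^B\rangle$ of the operator-trace expansion. What actually controls this tail is the \emph{kinetic energy} of $\chi\gamma_n\chi$: since the $e_i^B$ diagonalize the Dirichlet Laplacian with eigenvalues $\lambda_i^B\uparrow\infty$, one has
\[
\sum_{i>N}\EE\langle e_i^B,\chi\gamma_n\chi e_i^B\rangle\le\frac{1}{\lambda_N^B}\sum_i\lambda_i^B\,\EE\langle e_i^B,\chi\gamma_n\chi e_i^B\rangle=\frac{1}{\lambda_N^B}\sum_j\EE\tr\bigl(P_j\chi\gamma_n\chi P_j\bigr),
\]
and the right-hand side is bounded uniformly in $n$ by commuting $P_j$ through $\chi$ and using the assumed bound on $\trv(-\Delta\gamma_n)$. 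So your scheme can be salvaged, but the justification you gave is the wrong one.

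\textbf{Gap 2: the identification $\tilde\rho=\rho_\gamma$.} Testing only against deterministic $\chi^2$ yields merely $\EE\int\chi^2\tilde\rho=\EE\int\chi^2\rho_\gamma$, i.e.\ equality of the $\omega$-averages $\EE\tilde\rho(\cdot,x)=\EE\rho_\gamma(\cdot,x)$ a.e.\ in $x$; this does \emph{not} force $\tilde\rho=\rho_\gamma$ as functions on $\Omega\times Q$. You must test against $Y(\omega)\chi(x)^2$ with $Y\in L^\infty(\Omega)$, which your rank-one expansion accommodates (use $Y|\chi e_i^B\rangle\langle\chi e_i^B|\in L^1(\Omega,\S_1)$; the tail estimate still works since $|\EE(Y\langle\cdot\rangle)|\le\|Y\|_\infty\EE\langle\cdot\rangle$).

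For comparison, the paper does not use your tail argument. Instead it shows that $(1-\Delta)^{1/2}\chi\gamma_n\chi$ is bounded in $L^p(\Omega,\S_p)$ for $1\le p\le2$, hence weakly convergent in $L^{1+1/d}(\Omega,\S_{1+1/d})$; by Kato--Seiler--Simon $\1_B(1-\Delta)^{-1/2}\in\S_{1+d}$, so $Y\1_B(1-\Delta)^{-1/2}$ is an admissible test operator and $\EE\bigl(Y\tr(\chi\gamma_n\chi)\bigr)\to\EE\bigl(Y\tr(\chi\gamma\chi)\bigr)$ follows directly. Your (corrected) route is more elementary and avoids Schatten interpolation and KSS, but you must invoke the operator kinetic energy---not density bounds---for the tail, and you must include the $\omega$-dependence in the test.
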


Note that, in average, there is never any loss of particles when passing to weak limits: $\tv{\gamma_n}$ tends to $\tv{\gamma}$ as $n\to\ii$. On the other hand, even if we have $\rho_{\gamma_n}\rightharpoonup\rho_\gamma$ weakly and $\EE(\int_Q\rho_{\gamma_n})\to\EE(\int_Q\rho_\gamma)$, in general we do not have almost sure convergence and we do not expect strong convergence in $L^p_s$ for $1\leq p\leq 1+2/d$. 

\begin{example}[Weak versus strong convergence for $\rho_{\gamma_n}$]
Consider a smooth function $\phi$ with compact support in the ball $B(0,1/2)$ such that $\norm{\phi}_{L^2}=1$, and the operator
$$\gamma_n=\sum_{k\in\Z^d}\frac{1+\sin(2\pi n \omega_k)}2\; \big|\phi(\cdot+k)\big\rangle\big\langle\phi(\cdot+k)\big|.$$
where $\bra{\o_k}_{k\in\ZZd}$ are i.i.d. variables, uniformly distributed on $[0,1]$.
Then we have $\gamma_n\in\Sv_{1,1}$, $0\leq\gamma_n\leq1$, 
$$\gamma_n\rightharpoonup_\ast \gamma=\frac12\sum_{k\in\Z^d} \big|\phi(\cdot+k)\big\rangle\big\langle\phi(\cdot+k)\big|\quad\text{ in $L^\ii(\Omega,\B)$}$$
and
$$\rho_{\gamma_n}=\sum_{k\in\Z^d}\frac{1+\sin(2\pi n \omega_k)}2\;\big|\phi(\cdot+k)\big|^2\rightharpoonup \rho_\gamma=\frac12\sum_{k\in\Z^d} \big|\phi(\cdot+k)\big|^2$$
weakly in $L^p_s$ for $1< p<\ii$. We also have
$$\EE\bra{\int_Q\rho_{\gamma_n}}=\EE\bra{\int_Q\rho_{\gamma}},\qquad\forall n\in\NN$$
However, since $\sin(2\pi n \omega_k)\rightharpoonup0$ weakly but not strongly in $L^p([0,1])$, we do not have any strong convergence for $\rho_{\gamma_n}$.
\end{example}

\begin{proof}[Proof of Proposition~\ref{prop:weak-compactness}]
Consider a sequence $(\gamma_n)$ as in the statement. Since $(\gamma_n)$ is bounded in $L^\ii(\Omega,\B)$, there exists $\gamma\in L^\ii(\Omega,\B)$ such that $\gamma_n$ converges to $\gamma$ weakly-$\ast$ in $L^\ii(\Omega,\B)$, up to extraction of a subsequence (denoted the same for simplicity). Recall that $\gamma_n\wto_\ast\gamma$
means 
$$\lim_{n\to\ii}\EE\bra{\tr(A\gamma_n)}=\EE\bra{\tr(A\gamma)}$$
for all $A\in L^1(\Omega,\S_1)$. Using for instance $A=Y\,|f\rangle\langle g|$ for some fixed $f,g\in \H=L^2(\R^d)$ and some fixed $Y\in L^1(\Omega)$, we find in particular that 
\begin{equation}
\forall f,g\in\H,\ \forall Y\in L^1(\Omega),\quad \EE\bra{Y\pscal{g,\gamma f}}=\lim_{n\to\ii} \EE\bra{Y\pscal{g,\gamma_n f}}.
\label{eq:CV_weak_op} 
\end{equation}
Hence, $\pscal{g,\gamma_n f}$ converges to $\pscal{g,\gamma f}$ weakly$-\ast$ in $L^\ii(\Omega)$. 
Using this, it is easy to verify that $\gamma$ is ergodic and satisfies $\gamma^\ast=\gamma$, $0\leq\gamma\leq 1$ a.s.

Let now $(f_k)_{k\geq1}$ be any orthonormal basis of $L^2(Q)$ where we recall that $Q$ is the unit cell. Using that $\EE\bra{\pscal{f_k,\gamma_nf_k}}\to\EE\bra{\pscal{f_k,\gamma f_k}}$ for each $k\geq1$ as $n\to\ii$, and Fatou's lemma in $\ell^1(\NN)$, we obtain
\begin{eqnarray*}
 \EE\bra{\tr(\1_Q\gamma\1_Q)}=\sum_{k\geq1}\EE\bra{\pscal{f_k,\gamma f_k}}&\leq &\liminf_{n\to\ii} \EE\bra{\sum_{k\geq1}\pscal{f_k,\gamma_n f_k}}\\
&=&\liminf_{n\to\ii}\EE\bra{\tr(\1_Q\gamma_n\1_Q)}.
\end{eqnarray*}
By Proposition~\ref{magie}, we conclude that $\gamma\in\underline{\S}_1$.
The same argument can be employed to show that $\gamma\in\underline{\S}_{1,1}$,
assuming this time that each $f_k$ is in $H^1_0(Q)$. Then we have for each $k$
$$\lim_{n\to\ii}\EE\bra{\pscal{f_k,P_j\gamma_n P_jf_k}}=\lim_{n\to\ii}\EE\bra{\pscal{(P_jf_k),\gamma_n (P_jf_k)}}=\EE\bra{\pscal{(P_jf_k),\gamma (P_jf_k)}},$$
by~\eqref{eq:CV_weak_op} and with $P_j=-i\partial_{x_j}$. By Fatou's Lemma in $\ell^1(\NN)$ we see that
$$\tv{-\Delta\gamma}=\sum_{j=1}^d\sum_i \EE\bra{\pscal{(P_jf_k),\gamma (P_jf_k)}}\leq \liminf_{n\to\ii}\tv{-\Delta\gamma_n}.$$

Let us now prove that $\tv{\gamma_n}$ indeed converges to $\tv{\gamma}$. 
We consider a smooth function $\chi$ in $C^\ii_c(\R^d)$. The sequence $(\gamma_n)$ being bounded in $\Sv_{1,1}$, there exists  a constant $C\in\RR_+$ such that for all $n\in \NN$ and  $1\leq j\leq d$,
$$
 \EE\left( \tr\bra{\chi\gamma_n\chi} \right)+\EE\left( \tr(\chi P_j\gamma_n P_j\chi)\right)+\EE\left( \tr\bra{\partial_{x_j}\chi}\gamma_n\bra{\partial_{x_j}\chi} \right)\leq C.$$
Using again the relation $\left[ P_j,\chi\right]=-i\partial_{x_j}\chi$, we obtain
\begin{equation*}\label{eq_3}
\EE\left( \tr\left(  P_j\chi \gamma_n \chi P_j  \right)  \right)\leq 4C,
\end{equation*}
hence
\begin{align*}
\EE\left( \tr\left( \left(1-{\Delta} \right)^\frac{1}{2} \chi\gamma_n\chi \left(1-{\Delta} \right)^\frac{1}{2}  \right) \right) &= \EE\left(\tr\left( \chi\gamma_n \chi\right) \right) +\EE\left(\tr\left( -{\Delta}(\chi \gamma_n\chi)\right)\right)\\
&= \EE\left(\tr\left( \chi\gamma_n \chi\right) \right) +\sum_{j=1}^d\EE\left(\tr\left( P_j\chi \gamma_n \chi P_j \right)\right)\\
&\leq (1+4d) C.
\end{align*}
This proves that $\left(1-{\Delta} \right)^{1/2} \chi\gamma_n\chi \left(1-{\Delta} \right)^{1/2}$ is bounded in $L^1(\Omega,\S_1)$ or, equivalently, that $\left(1-{\Delta} \right)^{1/2} \chi\sqrt{\gamma_n}$ is bounded in $L^2(\Omega,\S_2)$.
From this we infer that
$$\left(1-{\Delta} \right)^\frac{1}{2} \chi\gamma_n\chi=\left\{\left(1-{\Delta} \right)^\frac{1}{2} \chi\gamma_n\chi \left(1-{\Delta} \right)^\frac{1}{2}\right\} \left(1-{\Delta} \right)^{-\frac{1}{2}}$$
is bounded in $L^1(\Omega,\S_1)$, since $\left(1-{\Delta} \right)^{-{1}/{2}}$ is a bounded operator.
Similarly, we can write
$$\left(1-{\Delta} \right)^\frac{1}{2} \chi\gamma_n\chi=\left\{\left(1-{\Delta} \right)^\frac{1}{2} \chi\sqrt{\gamma_n}\right\}\sqrt{\gamma_n}\chi,$$
which is now bounded in $L^2(\Omega,\S_2)$, since $\norm{\sqrt{\gamma_n}\chi}\leq C$ due to the assumption that $0\leq\gamma_n\leq1$. We conclude that $\left(1-{\Delta} \right)^{1/2} \chi\gamma_n\chi$ is bounded in $L^1(\Omega,\S_1)\cap L^2(\Omega,\S_2)$, hence in $L^p(\Omega,\S_p)$ for all $1\leq p\leq2$, by interpolation. In particular,
\begin{equation}
\left(1-{\Delta} \right)^{\frac12} \chi\gamma_n\chi\wto \left(1-{\Delta} \right)^{\frac12} \chi\gamma\chi\ \text{ weakly in $L^p(\Omega,\S_p)$ for all $1<p\leq2$.}
\label{eq:weak_CV_S_q} 
\end{equation}
That the limit can only be $\left(1-{\Delta} \right)^{1/2} \chi\gamma\chi$ follows for instance from~\eqref{eq:CV_weak_op} with functions $f,g\in H^1(\R^d)$.

We consider now a fixed function $Y\in L^{\ii}(\Omega)$ and write
$$\EE\bra{Y\tr(\chi\gamma_n\chi)}=\EE\bra{Y\tr\left(\left(1-{\Delta} \right)^{1/2} \chi\gamma_n\chi\1_B\left(1-{\Delta} \right)^{-1/2}\right)},$$
where $B$ is a large enough ball containing the support of $\chi$. By the Kato-Seiler-Simon inequality~\cite[Thm. 4.1]{trace_ideals},
\begin{equation}
\forall p\geq2,\qquad \norm{f(x)g(-i\nabla)}_{\S_p}\leq (2\pi)^{-d/p}\norm{f}_{L^p(\R^d)}\norm{g}_{L^p(\R^d)},
\label{eq:KSS}
\end{equation}
we have
$$\norm{\1_B\left(1-{\Delta} \right)^{-1/2}}_{\S_{1+d}}\leq (2\pi)^{-d/p}|B|^{\frac{1}{1+d}}\left(\int_{\R^d}\frac{dp}{(1+|p|^2)^{\frac{1+d}2}}\right)^{\frac{1}{1+d}},$$
hence $\1_B\left(1-{\Delta} \right)^{-1/2}\in\S_{1+d}$. Thus $Y\1_B\left(1-{\Delta} \right)^{-1/2}\in L^\ii(\Omega,\S_{1+d})\subset L^{1+d}(\Omega,\S_{1+d})$. Since $1<1+1/d\leq 2$ we obtain by the weak convergence~\eqref{eq:weak_CV_S_q} in $L^{1+1/d}(\Omega,\S_{1+1/d})=L^{1+d}(\Omega,\S_{1+d})'$, 
$$\lim_{n\to\ii}\EE\bra{Y\tr(\chi\gamma_n\chi)}=\EE\bra{Y\tr\left(\left(1-{\Delta} \right)^{1/2} \chi\gamma\chi\1_B\left(1-{\Delta} \right)^{-1/2}\right)}=\EE\bra{Y\tr(\chi\gamma\chi)}.$$
We can reformulate this into
\begin{equation}
\lim_{n\to\ii}\EE\bra{Y\int_\RRd\chi^2\rho_{\gamma_n}}=\EE\bra{Y\int_\RRd\chi^2\rho_{\gamma}},
\label{eq:prop-CV-rho} 
\end{equation}
for all $Y\in L^\ii(\Omega)$ and all $\chi\in C^\ii_c(\R^d)$.

As $\trv\left(-{\Delta} \gamma_n\right)$ is bounded, we infer from the Lieb-Thirring inequality for ergodic operators (Proposition~\ref{Lieb_T}) that $(\rho_{\gamma_n})$ is bounded in $L_s^{1+2/d}$. We can therefore extract a subsequence which weakly converges in $L_s^{1+2/d}$ to some $\rho \in L_s^{1+2/d}$. Since the space spanned by the functions of the form $Y\chi^2$ with $Y\in L^\ii(Q)$ and $\chi\in C^\ii_c(Q)$ is dense in $L^{1+d/2}(\Omega\times Q)$, we deduce from~\eqref{eq:prop-CV-rho} that $\rho_\gamma=\rho$. Now, using that $\1_Q\in L^{1+d/2}(\Omega\times Q)$, we finally obtain the claimed convergence 
\begin{equation}
\lim_{n\to\ii}\tv{\gamma_n}=\lim_{n\to\ii}\EE\bra{\int_Q\rho_{\gamma_n}}=\EE\bra{\int_Q\rho_{\gamma}}=\tv{\gamma}.
\end{equation}
This concludes the proof of the proposition.
\end{proof}

\subsubsection{Spectral projections of ergodic Schrödinger operators}

The following result provides a control of the average number of particles and kinetic energy per unit volume of the spectral projections of an ergodic Schr\"odinger operator, in terms of the negative component $V_-=\max(-V,0)$ of the external potential. We will use it later in Section~\ref{sec:SCF} to prove that the ground state density matrix of the reduced Hartree-Fock model with Yukawa potential is solution to a self-consistent equation.

\begin{proposition}[Spectral projections are in $\Sv_{1,1}$]\label{proj}
Let $V\in L^2_s$ be such that the operator $H=-{\Delta} +V$ is essentially self-adjoint on $C^\ii_c(\RRd)$ and $V_-\in L^{1+d/2}_s$. Denote by $P_{\lambda}=\1_{\left(-\ii,\lambda\right)}\bra{H}$ the spectral projection of $H$ corresponding to filling all the energy levels below $\lambda$. Then, $P_\lambda\in\Sv_{1,1}$ for any $\lambda\in\RR$ and there is a constant $C>0$ (depending only on $d\geq1$) such that 
\renewcommand{\labelenumi}{(\roman{enumi})}
\begin{equation}
\tv{ P_\lambda}\leq C\,\left(\EE\bra{\int_{Q}\bra{V-\lambda}_-^{\tfrac{d+2}{2}}\right)}^{\tfrac{d}{d+2}},
\label{eq:estim_trace_vol_lambda}
\end{equation}
and
\begin{equation}
\tv{-{\Delta} P_\lambda}\leq C\,\EE\bra{\int_{Q}\bra{V-\lambda}_-^{\tfrac{d+2}{2}}}. 
\label{eq:estim_kinetic_vol_lambda}
\end{equation}
\end{proposition}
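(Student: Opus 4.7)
I would first reduce to $\lambda=0$ by the shift $V\mapsto V-\lambda$, which preserves the hypotheses on $V$ and identifies $P_\lambda=\1_{(-\ii,0)}(H)$. Ergodicity of $P_0$ follows from functional calculus: since $U_kH(\o)U_k^*=H(\tau_k(\o))$, we have $U_kP_0(\o)U_k^*=P_0(\tau_k(\o))$, and clearly $0\leq P_0\leq 1$. The heart of the proof is an a priori computation, which I first describe assuming $P_0\in\Sv_{1,1}$ and then justify by a spectral approximation argument.

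Since $x\,\1_{(-\ii,0)}(x)\leq 0$, one has $P_0HP_0\leq 0$; taking trace per unit volume, using the cyclicity property \eqref{cyclicity} together with $P_0^2=P_0$ and Proposition~\ref{kernel_er} to express $\trv(VP_0)$ via the density,
\[
0\geq\trv(P_0HP_0)=\trv(-\Delta\,P_0)+\EE\!\left(\int_Q V\,\rho_{P_0}\right).
\]
Since $\rho_{P_0}\geq 0$, this yields $\trv(-\Delta P_0)\leq \EE\int_QV_-\,\rho_{P_0}$. Applying H\"older's inequality on $\O\times Q$ with conjugate exponents $(d+2)/2$ and $(d+2)/d$, followed by the ergodic Lieb--Thirring inequality (Proposition~\ref{Lieb_T}) applied to $P_0$, gives
\[
\trv(-\Delta P_0)\leq C\,\bra{\EE\!\int_Q V_-^{(d+2)/2}}^{\!2/(d+2)} \trv(-\Delta P_0)^{d/(d+2)}.
\]
Isolating $\trv(-\Delta P_0)^{2/(d+2)}$ and raising to the $(d+2)/2$-th power produces bound~(ii). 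For~(i), one more H\"older inequality gives $\trv(P_0)=\EE\int_Q\rho_{P_0}\leq(\EE\int_Q\rho_{P_0}^{(d+2)/d})^{d/(d+2)}$, and combining with Proposition~\ref{Lieb_T} and~(ii) yields~(i).

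The main obstacle is proving rigorously that $P_0\in\Sv_{1,1}$, without which the trace manipulations above are formal. My strategy is to approximate spectrally. Pick $f_n\in C^\infty_c((-\ii,0))$ with $0\leq f_n\leq 1$ and $f_n\nearrow\1_{(-\ii,0)}$ pointwise, and set $\gamma_n:=f_n(H)$. Each $\gamma_n$ is bounded, self-adjoint, ergodic, and satisfies $0\leq\gamma_n\leq 1$, with $\gamma_n\to P_0$ strongly by the spectral theorem. To show $\gamma_n\in\Sv_{1,1}$, I would use the Helffer--Sj\"ostrand representation
\[
f_n(H)=\frac{1}{\pi}\int_\C \bar\partial\tilde f_n(z)\,(H-z)^{-1}\,dL(z)
\]
together with local trace-class / Hilbert--Schmidt estimates on the resolvent $(H-z)^{-1}$ of the ergodic Schr\"odinger operator $H=-\Delta+V$ with $V\in L^2_s$; these provide sufficient off-diagonal decay of the kernel of $f_n(H)$ to control $\chi\gamma_n\chi$ and $\chi P_j\gamma_n P_j\chi$ for $\chi\in L^\infty_c(\RR^d)$. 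This is the technical core. Once $\gamma_n\in\Sv_{1,1}$, the fact that $f_n(H)Hf_n(H)\leq 0$ (as $f_n$ is supported in $(-\ii,0)$) makes the a priori computation above applicable verbatim to $\gamma_n$, yielding
\[
\trv(-\Delta\gamma_n)\leq C\,\EE\!\int_Q V_-^{(d+2)/2},\qquad \trv(\gamma_n)\leq C\bra{\EE\!\int_Q V_-^{(d+2)/2}}^{\!d/(d+2)}
\]
uniformly in $n$.

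To conclude, I would apply Proposition~\ref{prop:weak-compactness} to extract a subsequence $\gamma_{n_k}\rightharpoonup_\ast\gamma\in\K$. The identification $\gamma=P_0$ follows from the strong convergence $f_n(H)\to P_0$ tested against rank-one operators $Y|f\rangle\langle g|$ as in~\eqref{eq:CV_weak_op}. The bounds transfer to $P_0$: the kinetic bound~(ii) by the lower semicontinuity (item~(iv) of Proposition~\ref{prop:weak-compactness}) and the particle bound~(i) by the continuity $\trv(\gamma_{n_k})\to\trv(\gamma)$ (item~(ii)). This establishes in particular $P_0\in\Sv_{1,1}$ together with the two estimates.
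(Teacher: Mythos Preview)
Your core computation---extracting the two bounds from $\trv\big((H-\lambda)P_\lambda\big)\le 0$ via H\"older and the ergodic Lieb--Thirring inequality---is exactly the paper's argument. The difference lies in how you justify that the spectral projection (or an approximant) actually sits in $\Sv_{1,1}$ so that the trace manipulations are legal. The paper does \emph{not} smooth the spectral cutoff; instead it truncates the potential, replacing $V$ by $V_n=\max(V,-n)$. For bounded $V_-$ the Feynman--Kac bound $\rho_{e^{-tH}}\le (4\pi t)^{-d/2}e^{t\|V_-\|_\infty}$ together with $\1_{(-\infty,\lambda]}(x)\le e^{-t(x-\lambda)}$ and $x\1_{(-\infty,\lambda]}(x)\le \lambda_+e^{-t(x-\lambda)}$ immediately gives $P_\lambda^{(n)}\in\Sv_{1,1}$; the Lieb--Thirring computation then yields bounds uniform in $n$, and one passes to the limit $V_n\to V$ using strong resolvent convergence.

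Your route via $f_n\in C^\infty_c((-\infty,0))$ and Helffer--Sj\"ostrand is in principle workable, and using Proposition~\ref{prop:weak-compactness} to transfer the bounds to $P_0$ is a nice touch. But the step you flag as ``the technical core''---local trace-class bounds on $f_n(H)$ from resolvent estimates when $V$ is merely in $L^2_s$ with $V_-\in L^{1+d/2}_s$---is genuinely heavier than the paper's heat-kernel argument (in dimension $d\ge 2$ the localized resolvent $\chi(H-z)^{-1}$ is only in $\S_p$ for $p>d/2$, so getting $\S_1$ control requires iterating resolvent identities or Combes--Thomas type decay). The Feynman--Kac route sidesteps this entirely at the price of one potential truncation. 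A minor point: since you use $f_n(H)Hf_n(H)\le 0$, your a priori bounds are literally on $\gamma_n^2$ rather than $\gamma_n$; this is harmless because $f_n^2\nearrow\1_{(-\infty,0)}$ as well, but ``verbatim'' slightly overstates it.
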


The estimate~\eqref{eq:estim_trace_vol_lambda} on $\tv{ P_\lambda}$ is probably not optimal but it is sufficient for our purposes.

\begin{proof}
Let us first prove that $P_\lambda\in\Sv_{1,1}$ under the assumption that $V_-\in L^\ii\bra{\O\times \RRd}$. By the Feynman-Kac formula~\cite[Theorem 6.2 p.51]{FIbarry}, we have for all $t>0$
\begin{equation}\label{FK}
\rho_{e^{-t\left(-{\Delta}+V\right)}}\leq
\frac{e^{t\norm{V_-}_{L^\ii\bra{\RRd}}}}{\left(4\pi t\right)^{d/2}} \;\mbox{ a.s.}
\end{equation} 
Using then the inequality $\1_{\left(-\ii, \lambda\right]}\bra{x}\leq e^{-t(x-\lambda)}$ for all $\lambda\in\R$ and all $t>0$, as well as the fact that $V$ is uniformly bounded from below, we deduce that $P_\lambda\in\Sv_1$ and $\rho_{P_\lambda}\in L^\ii\bra{\O\times \RRd}$. Likewise, using the inequality  $x\1_{\left(-\ii, \lambda\right]}\bra{x}\leq \lambda\1_{\lambda\geq 0} e^{-t(x-\lambda)}$, we obtain that $HP_\lambda\in \Sv_1$, hence that $P_\lambda\in \Sv_{1,1}$. 

Now that we know that $P_\lambda\in\Sv_{1,1}$, we can derive bounds which only depend on $\norm{(V-\lambda)_-}_{L^{1+d/2}_s}$. The general case will then follow from a simple approximation argument. We start by noting that
\begin{eqnarray*}
0\leq\tv{\bra{-\Delta+V-\lambda}_-}&=&-\tv{\bra{-\Delta+V-\lambda}P_\lambda}\\
&\leq & -C\norm{\rho_{P_\lambda}}_{L_s^\frac{d+2}{d}}^\frac{d+2}{d}+ \norm{\rho_{P_\lambda}}_{L_s^\frac{d+2}{d}}\norm{\bra{V-\lambda}_-}_{L_s^\frac{d+2}{2}}\\
&\leq & C \norm{\bra{V-\lambda}_-}_{L_s^\frac{d+2}{2}}^\frac{d+2}{2},
\end{eqnarray*}
where we have used the Lieb-Thirring inequality~\eqref{eq:LT} for ergodic operators. Therefore 
 \begin{equation}\label{borne_trace}
\tv{ P_\lambda}=\norm{\rho_{P_\lambda}}_{L^1_s}\leq C \norm{\rho_{P_\lambda}}_{L_s^\frac{d+2}{d}}\leq C\norm{\bra{V-\lambda}_-}_{L_s^\frac{d+2}{2}}^\frac{d}{2}.
\end{equation} 
As $0\leq \tv{\bra{-\Delta+V-\lambda}_-}$, we obtain
 \begin{equation}\label{borne_Ec}
\tv{-\Delta P_\lambda}\leq C \norm{\bra{V-\lambda}_-}_{L_s^\frac{d+2}{2}}^\frac{d+2}{2}.
\end{equation} 
This concludes the proof in the case of bounded below potentials. In the general case we consider the sequences of cutoff potentials $V_n=\max\set{V, -n}$ and corresponding operators $H_n=-\Delta +V_n$ and show that for any bounded continuous function $f$, the operator $f\bra{H_n}$ converges to $f\bra{H}$ in the strong operator topology a.s. We conclude the proof using an appropriate approximation of $\1_{ \left(-\ii,\lambda\right] }$ by bounded continuous functions (see~\cite{these} for details).
\end{proof}
\renewcommand{\labelenumi}{\arabic{enumi}.}

We can now use the previous theorem to deduce a useful variational characterization of the spectral projection $P_\lambda$ among all ergodic fermionic density matrices $\gamma\in\K$ having a locally finite kinetic energy.

\begin{proposition}[Variational characterization of spectral projections]\label{prop:var_charac}
Assume that $V$ is as in Proposition~\ref{proj} and denote again $P_\lambda:=\1_{(-\ii,\lambda)}(H)$ with $H=-\Delta+V$.
For every $\lambda\in\R$, the minimization problem
\begin{equation}
\inf_{\gamma\in\K}\left\{\trv (-\Delta\gamma)+\EE\bra{\int_QV\,\rho_\gamma}-\lambda\,\tv{\gamma}\right\}
\label{eq:def_var_spec_proj}
\end{equation}
admits as unique minimizers the operators of the form
$\gamma=P_\lambda+\delta$
where $0\leq\delta\leq \1_{\{\lambda\}}(H)$.
\end{proposition}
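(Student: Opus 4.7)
The strategy is the classical ``bathtub'' argument adapted to the ergodic/trace-per-unit-volume setting. First, using the cyclicity identity~\eqref{cyclicity} and the defining relation $\trv(V\gamma)=\EE(\int_Q V\rho_\gamma)$ of the density, I would rewrite the functional as $\E(\gamma)=\trv((H-\lambda)\gamma)$ for every $\gamma\in\K$, checking along the way that all operator products make sense in the appropriate ergodic trace ideals. Decomposing $H-\lambda=(H-\lambda)_+-(H-\lambda)_-$ via the almost sure spectral theorem and using cyclicity once more yields
\begin{equation*}
\E(\gamma)=\trv\!\left(\gamma^{1/2}(H-\lambda)_+\gamma^{1/2}\right)-\trv\!\left((H-\lambda)_-^{1/2}\,\gamma\,(H-\lambda)_-^{1/2}\right).
\end{equation*}
The first trace is nonnegative, and the second is bounded above by $\trv((H-\lambda)_-)$, because $0\leq\gamma\leq 1$ a.s.\ forces $\trv((H-\lambda)_-(1-\gamma))\geq 0$. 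Hence $\E(\gamma)\geq -\trv((H-\lambda)_-)=\trv((H-\lambda)P_\lambda)$, and since $(H-\lambda)\1_{\{\lambda\}}(H)=0$, this lower bound is attained by every candidate $\gamma=P_\lambda+\delta$ with $0\leq\delta\leq\1_{\{\lambda\}}(H)$.

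To pin down uniqueness, saturation of the two inequalities forces both positive ergodic operators $\gamma^{1/2}(H-\lambda)_+\gamma^{1/2}$ and $(1-\gamma)^{1/2}(H-\lambda)_-(1-\gamma)^{1/2}$ to vanish almost surely. This uses the elementary fact that a positive operator $A\in\Sv_1$ with $\trv(A)=0$ must be zero a.s.: indeed $\rho_A=0$ a.s.\ implies $\tr(\1_K A(\omega)\1_K)=\|\sqrt{A(\omega)}\1_K\|_{\mathfrak{S}_2}^2=0$ for every compact $K$. These vanishing conditions translate respectively into $\1_{(\lambda,\infty)}(H)\gamma=0$ and $P_\lambda(1-\gamma)=0$. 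A block decomposition of $\gamma$ relative to $I=P_\lambda+\1_{\{\lambda\}}(H)+\1_{(\lambda,\infty)}(H)$ then shows $\gamma=P_\lambda+\delta$ with $\delta:=\1_{\{\lambda\}}(H)\,\gamma\,\1_{\{\lambda\}}(H)$, and the constraint $0\leq\gamma\leq 1$ yields $0\leq\delta\leq\1_{\{\lambda\}}(H)$.

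The main obstacle is the functional-analytic bookkeeping ensuring that every term above lies in the correct ideal so that cyclicity and positivity can be invoked. Proposition~\ref{proj}, applied to $\lambda$ (and to $\lambda+\epsilon$ in order to dominate $\1_{\{\lambda\}}(H)$ by $P_{\lambda+\epsilon}-P_\lambda$), gives $\1_{(-\infty,\lambda]}(H)\in\Sv_{1,1}$, from which $(H-\lambda)_-\in\Sv_1$ follows by domination $(H-\lambda)_-\leq (\lambda-\inf\sigma(H))_+\,\1_{(-\infty,\lambda]}(H)$ (using that $H$ is bounded below, which is a consequence of $V_-\in L^{1+d/2}_s$ and the Lieb-Thirring inequality of Proposition~\ref{Lieb_T}). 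The bound $\gamma\leq 1$ then makes $(H-\lambda)_-^{1/2}\gamma(H-\lambda)_-^{1/2}\in\Sv_1$, while $\gamma\in\Sv_{1,1}$ together with $V\in L^2_s$ and Hoffmann-Ostenhof (Proposition~\ref{A2}) ensures that $\trv(-\Delta\gamma)+\trv(V\gamma)$ is well defined, so that $\trv((H-\lambda)_+\gamma)$ is meaningful by the defining identity $\trv((H-\lambda)_+\gamma)=\E(\gamma)+\trv((H-\lambda)_-\gamma)+\lambda\trv(\gamma)$. Once this framework is secured, the bathtub argument above is straightforward.
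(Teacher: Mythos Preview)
Your bathtub argument is correct and complete, but it follows a different route from the paper's proof. The paper works relatively, with the difference $\gamma-P_\lambda$: using cyclicity and the purely algebraic inequality
\[
P_\lambda^\perp(\gamma-P_\lambda)P_\lambda^\perp-P_\lambda(\gamma-P_\lambda)P_\lambda\ \geq\ (\gamma-P_\lambda)^2,
\]
which is equivalent to $0\leq\gamma\leq1$, one obtains in one stroke
\[
\E(\gamma)-\E(P_\lambda)\ \geq\ \trv\!\big(|H-\lambda|^{1/2}(\gamma-P_\lambda)^2|H-\lambda|^{1/2}\big)\ \geq\ 0,
\]
first for smooth $\gamma$ and bounded $V$, then in general by approximation. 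Both minimality and the characterization of equality are read off from the single right-hand side above. Your decomposition $(H-\lambda)_+-(H-\lambda)_-$ achieves the same conclusion via two separate saturation conditions and a block analysis; it is slightly longer on the uniqueness side but avoids the ``relative'' algebraic inequality and the approximation step. Either approach is standard; the paper's version has the advantage that the single quantity $\trv(|H-\lambda|^{1/2}(\gamma-P_\lambda)^2|H-\lambda|^{1/2})$ is exactly what gets reused later in the proof of Proposition~\ref{forme_minimiseur}.

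One point to correct in your bookkeeping: under the stated hypotheses ($V_-\in L^{1+d/2}_s$), the operator $H$ is \emph{not} bounded below in general---think of i.i.d.\ coupling constants with unbounded support---so your domination $(H-\lambda)_-\leq(\lambda-\inf\sigma(H))_+\1_{(-\infty,\lambda]}(H)$ may fail. This does not damage the argument, because $(H-\lambda)_-\in\Sv_1$ is established directly in the proof of Proposition~\ref{proj} via the Lieb--Thirring bound $\trv\big((-\Delta+V-\lambda)_-\big)\leq C\|(V-\lambda)_-\|_{L^{1+d/2}_s}^{1+d/2}$; you should cite that instead.
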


Note that $\EE(\int_QV\,\rho_\gamma)$ is well defined in $(-\ii,+\ii]$ since $V_-\in L^{1+d/2}_s$ by assumption, whereas $\rho_\gamma\in L^{1+2/d}_s$ by the Lieb-Thirring inequality~\eqref{eq:LT}.

\begin{proof}
When $\gamma$ is smooth enough ($-\Delta\gamma\in\Sv_1$ for example) and $V\in L^\ii_s$, we can write
\begin{multline*}
\trv \bra{-\Delta(\gamma-P_\lambda)}+\EE\bra{\int_QV\,\left(\rho_\gamma-\rho_{P_\lambda}\right)}-\lambda\tv{\gamma-P_\lambda}\\
=\trv \bra{(-\Delta+V-\lambda)(\gamma-P_\lambda)}\geq\trv \bra{|-\Delta+V-\lambda|(\gamma-P_\lambda)^2}. 
\end{multline*}
In the last estimate we have used the cyclicity property~\eqref{cyclicity} and the fact that
$$P_\lambda^\perp(\gamma-P_\lambda)P_\lambda^\perp-P_\lambda(\gamma-P_\lambda)P_\lambda\geq (\gamma-P_\lambda)^2,$$
which turns out to be equivalent to $0\leq\gamma\leq1$. A simple approximation argument now shows that the inequality
\begin{multline*}
\trv (-\Delta\gamma)+\EE\bra{\int_QV\,\rho_\gamma}-\lambda\tv{\gamma}
\geq \trv (-\Delta P_\lambda)+\EE\bra{\int_QV\,\rho_{\P_\lambda}}-\lambda\tv{P_\lambda}\\+\tv {|H-\lambda|^{1/2}(\gamma-P_\lambda)^2|H-\lambda|^{1/2}}
\end{multline*}
is actually valid under the weaker assumptions of the proposition. It is then clear that $P_\lambda$ minimizes~\eqref{eq:def_var_spec_proj} and that the other minimizers must satisfy $|H-\lambda|^{1/2}(\gamma-P_\lambda)=0$, which is the same as saying that the range of $\gamma-P_\lambda$ is included in the kernel of $H-\lambda$.
\end{proof}

\subsubsection{A representability criterion}\label{repre}
The aim of representability criteria is to identify sets of densities $\rho$ that arise from admissible density matrices. For finite systems, if $\gamma\in \S_1 \cap {\mathcal S}$,  $0\leq \gamma\leq 1$, and $\tr\bra{-\Delta \gamma}<\ii$, then $\rho_\gamma\geq 0$ and $\sqrt{\rho_\gamma}\in H^1\bra{\RRd}$ by the Hoffmann-Ostenhof inequality. Lieb's representability theorem~\cite[Theorem 1.2]{Lieb} shows that these conditions are sufficient for a function $\rho$ to be representable. 

In the ergodic case, we know that a density $\rho$ must satisfy $\rho\geq0$, $\sqrt{\rho}\in H^1_s$ and $\rho\in L^{1+2/d}_s$, by the Lieb-Thirring inequality~\eqref{eq:LT}. Clearly a stationary function $\rho$ such that $\rho\geq0$ and $\sqrt{\rho}\in H^1_s$ is not necessarily the density of an ergodic density matrix with finite kinetic energy, since in general
$$\set{\rho \ge 0\;|\;\sqrt{\rho}\in H^1_s}\not\subset L^\frac{d+2}{d}_s.$$ 
It is an interesting open problem to determine the exact representability conditions in the ergodic case. Theorem~\ref{n-repre} below gives sufficient conditions for $\rho$ to be representable. These conditions are also necessary for $d=1$.

\begin{theorem}[A sufficient condition for representability]\label{n-repre}
We assume that $d\geq 1$.
Let $\rho$ be a function satisfying 
$$\rho\geq 0,\;  \rho\in L^3_s\;\mbox{and}\; \sqrt{\rho}\in H_s^1.$$ 
Then, there exists a self-adjoint operator $\gamma$ in $\Sv_{1,1} \cap \underline{\mathcal S}$, satisfying $0\leq \gamma \leq 1$ and $\rho_\gamma=\rho$ a.s. 
\end{theorem}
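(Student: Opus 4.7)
The plan is to apply Lieb's finite-system representability theorem locally, via a smooth partition of unity, and to glue the local pieces in a translation-covariant way. Fix $\chi\in C^\infty_c(\R^d)$ with $\chi\geq 0$ and $\sum_{k\in\Z^d}\chi(x-k)^2=1$ pointwise, and let $M:=\sup_{x\in\R^d}\#\{k\in\Z^d:\chi(x-k)>0\}$, which is finite. For each $\omega$ and $k\in\Z^d$, the localized density $\rho_k(\omega,x):=\chi(x-k)^2\rho(\omega,x)$ has compact support in $B_k:=\mathrm{supp}\,\chi(\cdot-k)$, belongs a.s.~to $L^1(\R^d)\cap L^3(\R^d)$, and satisfies $\sqrt{\rho_k}=\chi(\cdot-k)\sqrt{\rho}\in H^1(\R^d)$ a.s., by the assumptions $\rho\in L^3_s$ and $\sqrt{\rho}\in H^1_s$.

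Applying Harriman's explicit construction to $\rho_k(\omega,\cdot)$, I would set $N_k(\omega):=\int\rho_k(\omega,\cdot)$ and
\[
F_k(\omega,x_1):=\frac{1}{N_k(\omega)}\int_{-\infty}^{x_1}\!\!\int_{\R^{d-1}}\rho_k(\omega,y_1,y')\,dy'\,dy_1
\]
(the normalized marginal CDF along the first coordinate), and
\[
\phi_{k,j}(\omega,x):=N_k(\omega)^{-1/2}\sqrt{\rho_k(\omega,x)}\,e^{2\pi ijF_k(\omega,x_1)},\qquad j\in\NN.
\]
A direct change of variable in $x_1$ shows that the $\phi_{k,j}(\omega,\cdot)$ are orthonormal in $L^2(\R^d)$, supported in $B_k$, and satisfy $|\phi_{k,j}|^2=\rho_k/N_k$. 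Choosing weights $\lambda_{k,j}(\omega)\in[0,1]$ with $\sum_{j\in\NN}\lambda_{k,j}(\omega)=MN_k(\omega)$ (e.g.\ all equal to $1$ except a single fractional one) and setting
\[
\gamma_k(\omega):=\frac{1}{M}\sum_{j\in\NN}\lambda_{k,j}(\omega)\,|\phi_{k,j}(\omega)\rangle\langle\phi_{k,j}(\omega)|
\]
yields $0\leq\gamma_k(\omega)\leq 1/M$, $\rho_{\gamma_k(\omega)}=\rho_k(\omega,\cdot)$, and $\gamma_k(\omega)=\1_{B_k}\gamma_k(\omega)\1_{B_k}$. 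Because $F_k$ and the $\phi_{k,j}$ transform canonically under translations of $\rho_k$, one gets the covariance identity $\gamma_k(\tau_a\omega)=U_a\gamma_{k+a}(\omega)U_a^{\ast}$ for all $a\in\Z^d$. A direct calculation gives $\tr(-\Delta\gamma_k(\omega))\leq C\bigl(\|\nabla\sqrt{\rho_k(\omega,\cdot)}\|_{L^2}^2+\|\rho_k(\omega,\cdot)\|_{L^3}^3\bigr)$, the second term coming from the oscillatory phases and being precisely why the $L^3_s$ hypothesis is imposed.

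Setting $\gamma(\omega):=\sum_{k\in\Z^d}\gamma_k(\omega)$ (only finitely many terms contribute to any compactly supported vector), stationarity follows by re-indexing, $\gamma(\tau_a\omega)=\sum_k U_a\gamma_{k+a}(\omega)U_a^{\ast}=U_a\gamma(\omega)U_a^{\ast}$; the density is $\rho_\gamma=\sum_k\rho_k=\rho$ a.s.; and the operator bound $0\leq\gamma\leq 1$ is obtained from
\[
\langle\phi,\gamma(\omega)\phi\rangle=\sum_k\langle\1_{B_k}\phi,\gamma_k(\omega)\1_{B_k}\phi\rangle\leq\frac{1}{M}\sum_k\|\1_{B_k}\phi\|^2\leq\|\phi\|^2,
\]
using $\sum_k\1_{B_k}\leq M$. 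Membership in $\Sv_{1,1}$ then reduces to $\trv(\gamma)=\EE(\int_Q\rho)<\infty$ (from $\rho\in L^1_s$) and, after summing the finitely many local kinetic energies whose supports meet $Q$ and invoking stationarity, $\trv(-\Delta\gamma)\leq C\,\EE\bigl(\int_Q|\nabla\sqrt{\rho}|^2+\int_Q\rho^3\bigr)<\infty$.

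The main technical point is the measurable, translation-covariant verification of orthonormality and of the kinetic-energy bound for the $\gamma_k(\omega)$ in arbitrary dimension. The Harriman construction via the marginal CDF along a single fixed coordinate handles both uniformly in $d$, and the $L^3_s$ assumption is exactly what is needed to absorb the phase-induced kinetic energy contribution.
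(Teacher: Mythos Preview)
Your argument is correct and follows the same core idea as the paper---apply the Harriman/Lieb construction to localized pieces $\rho_k$ and assemble---but the gluing step differs. The paper chooses, for each dimension $d$, a finite family of ``colors'' (two for $d=1$, three for $d=2$, \dots) so that the supports within a single color are pairwise \emph{disjoint}; each color then yields an operator $\gamma_I\le 1$ trivially, and the final $\gamma$ is their average. You instead take a single partition $\sum_k\chi(\cdot-k)^2=1$ with finite overlap number $M$, scale each local operator to satisfy $\gamma_k\le 1/M$, and obtain $\gamma\le 1$ from $\sum_k\1_{B_k}\le M$. Your route is more uniform in $d$ (no dimension-specific coverings to exhibit) and makes the $H^1$ regularity of $\sqrt{\rho_k}=\chi(\cdot-k)\sqrt{\rho}$ immediate; the price is using roughly $MN_k$ orbitals per cell instead of $N_k$, which only affects constants in the kinetic bound. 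Both approaches rely on the same computation $\int\rho_k\,g_k^2\le C\|\rho_k\|_{L^3}^3$ (via H\"older on the compact support) for the phase-induced kinetic term, which is where the $L^3_s$ hypothesis enters.
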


Theorem~\ref{n-repre} is proved in the Appendix, following ideas of Lieb~\cite{Lieb}.


\section{Yukawa and Coulomb interaction}\label{sec:Coulomb-Yukawa}

This section is devoted to the definition of the potential energy per unit volume of a stationary charge distribution $f$. In our setting, $f$ will be $\rho_\gamma-\mu$, where $\mu$ is the nuclear charge distribution and $\rho_\gamma$ the density associated with an electronic state $\gamma$. We will consider two types of interactions, namely the (long-range) Coulomb and the (short-range) Yukawa interactions.

In dimension $d\geq1$, the Coulomb self-interaction of a charge density $f$ is given by
$$D(f,f)=\int_{\R^d}V(x)\,f(x)\,dx.$$
where $V$ is the Coulomb potential induced by $f$ itself, which is solution to Poisson's equation
\begin{equation}
-\Delta V=|S^{d-1}|\;f.
\label{eq:Poisson} 
\end{equation}
Here $|S^{d-1}|$ is the Lebesgue measure of the unit sphere $S^{d-1}$ ($|S^0|=2$, $|S^1|=2\pi$, $|S^2|=4\pi$).
For later purposes, it is convenient to regularize this equation by adding a small mass $m$ as follows :
\begin{equation}
(-\Delta+m^2) V=|S^{d-1}|\;f.
\label{eq:Poisson-Yukawa}
\end{equation}
Whenever $m=0$ or $m>0$, we have the following formulas for the Coulomb ($m=0$) and Yukawa ($m>0$) self-energies:
\begin{align}
D_m(f,f)=\av{S^{d-1}}\int_\RRd\frac{\av{\widehat{f}\bra{K}}^2}{\av{K}^2+m^2}\,dk&=\av{S^{d-1}}\norm{\bra{-\Delta+m^2}^{-\frac{1}{2}}f}_{L^2}^2\label{eq:Dm_A}\\
&=\int_{\R^d}\int_{\R^d}Y_m(x-y)f(x)\,f(y)\,dx\,dy\label{eq:Dm_B}\\
&=\int_{\R^d}\left|\int_{\R^d}W_m(x-y)f(y)\,dy\right|^2dx.\label{eq:Dm_C}
\end{align}
Here $\widehat{f}$ is the Fourier transform\footnote{In the whole paper we use the convention $\widehat{f}\bra{K}=\bra{2\pi}^{-\frac{d}{2}}\int_\RRd f\bra{x}e^{-iK\cdot x}dx$.} of $f$. 
Of course we need appropriate decay and integrability assumptions on $f$ to make the previous formulas meaningful. 
The Yukawa and Coulomb kernels are given by
$$
Y_m(x)=\left\{
\begin{array}{ll}
m^{-1}e^{-m\left|x \right|}\\
K_0\bra{m\av{x}}\\
|x|^{-1}e^{-m\left|x \right|}\\
\end{array}
\right.
\quad\text{and}\quad
Y_0(x)=\left\{
\begin{array}{ll}
-\av{x}\; & \mbox{if}\;d=1,\\ -\log\bra{\av{x}}  \;  &\mbox{if}\;d=2,\\ |x|^{-1} \; &\mbox{if}\;d=3,
\end{array}
\right.
$$
with $K_0\bra{r}=\int_0^\ii e^{-r\cosh t}\,dt$ the modified Bessel function of the second type~\cite{LL}.
The Coulomb potential is nothing but the limit of the Yukawa potential when the parameter $m$ goes to $0$.
Similarly, the function $W_m$ is defined by its Fourier transform
$$\widehat{W_m}(K)=\frac{\sqrt{|S^{d-1}|}(2\pi)^{-d/2}}{\sqrt{m^2+|K|^2}}.$$
Using the integral representation $x^{-1/2}=2\pi^{-1}\int_0^\ii (x+s^2)^{-1}ds$, we see that 
\begin{equation}\label{eq:Wm}
W_m(x)=\frac{2}{\pi\sqrt{|S^{d-1}|}}\int_0^\ii Y_{\sqrt{s^2+m^2}}(x)\,ds.
\end{equation}
This can be used to compute $W_m$ in some cases, or to simply deduce that, when $m>0$, $W_m$ is positive, decays exponentially at infinity, and behaves at zero like $|x|^{-2}$ in dimension $d=3$, like $|x|^{-1}$ when $d=2$ and like $\log|x|$ for $d=1$.

Our goal in this section is to define the Yukawa and Coulomb energies per unit volume for a stationary charge distribution $f$. Formally, this is just 
$$\EE\bra{\int_Q V\,f}$$
where $V$ solves~\eqref{eq:Poisson} for $m=0$ or~\eqref{eq:Poisson-Yukawa} for $m>0$. We are implicitly using here the fact that the potential $V$ is stationary when $f$ has this property. Unfortunately, giving a meaning to Poisson's equation~\eqref{eq:Poisson} in the stochastic setting in not an easy task. Already when $f$ is periodic, we know that this equation can only have a solution when $\int_Qf=0$. Here the situation is even worse, as we explain below. To simplify matters, we first introduce the Yukawa energy per unit volume $D_m(f,f)$ for $m>0$ and then we define the Coulomb energy as the limit of $D_m(f,f)$ as $m\to0$, when it exists. Thus we start by giving a clear meaning to the three possible formulas~\eqref{eq:Dm_A},~\eqref{eq:Dm_B} and~\eqref{eq:Dm_C} in the Yukawa case $m>0$. In the next section we introduce the \emph{stationary Laplacian} $-\Delta_s$ which allows to write a formula similar to~\eqref{eq:Dm_A}.

\subsection{The stationary Laplacian}

In this section we define an operator which we call the \emph{stationary Laplacian}, which is nothing but the usual Laplacian in the $x$ variable acting on $L^2(\Omega\times Q)$, with stationary boundary conditions at the boundary of $Q$. Surprisingly, this operator does not seem to have been considered before.

Let $A_0$ be the operator on $L^2_s$ defined by
$$
\left\{ 
\begin{array}{l}
D(A_0)=L^2_s \cap L^2(\Omega, C^2(\RRd)), \\
A_0f(\omega,x)= -\Delta f(\omega,x)\;  \mbox{a.s. and a.e.},\;\forall f\in D(A_0) ,
\end{array}
\right.
$$
where $\Delta=\sum_{j=1}^d\partial_{x_j}^2$ refers to the usual Laplace operator on $C^2(\mathbb R^d)$ w.r.t. the $x$ variable. Using stationarity, we obtain 
$$
\langle f,A_0g\rangle_{L^2_s} = \mathbb E \left(\int_Q \overline{\nabla  f(\cdot,x)}\cdot\nabla g(\cdot,x)\,dx\right),\;\forall f,g \in D(A_0) .
$$
Thus,  $A_0$ is a symmetric, non-negative operator on $L^2_s$ with dense domain $D(A_0)$. We denote by $-{\Delta}_s$ its Friedrichs extension~\cite[theorem 4.1.5, p.115]{caught}, and call the operator $\Delta_s$  the stationary Laplacian. The form domain of the operator $-\Delta_s$ is $H^1_s$ and its domain is $H^2_s$. 

When $\O$ is finite, the spectrum of $-\Delta_s$ is purely discrete. If the probability space is defined as in Example~\ref{example1}, then $\sigma(-\Delta_s)=[0,+\infty)$. Thanks to the ergodicity of the group action, one can prove that $\mbox{ker}\bra{-\Delta_s}=\mbox{span}\set{1}$. In contrast to the periodic case, there is (in general) no gap in the spectrum of $-\Delta_s$ above $0$. In other words, there is no Poincaré-Wirtinger type inequality in $H^1_s$. This can be seen,  for instance, by considering the sequence of functions $\Phi_n(\o,x)=n^{-d/2}\Phi\bra{\o,n^{-1}x}$, where $\Phi\bra{\o,x}=\sum_{k\in\ZZd}Y(\tau_k(\o))\chi(x-k)$ with $Y\in L^1\bra{\O}$ and $\chi\in C^\ii_c\bra{\RRd}$ with support in the unit cube $Q$ and such that $\int_Q\chi(x)\,dx=0$. These functions are such that $\norm{\Phi_n}_{L^2_s}=1$ and $\EE(\int_Q \Phi_n)=0$ for any $n\in\NN$, and $\norm{\nabla\Phi_n}_{\bra{L^2_s}^d} \rightarrow 0$ as $\cn$.  

That there is no Poincaré-Wirtinger inequality means that solving Poisson's equation~\eqref{eq:Poisson} in the stochastic (ergodic) setting is complicated. Contrarily to the periodic case, it is \emph{not} sufficient to ask that $f\notin\mbox{ker}\bra{-\Delta_s}$, that is $\EE(\int_Qf)=0$. If we are given $f\in L^2_s$, then we see that there exists $V\in L^2_s$ such that $-\Delta_s V=|S^{d-1}|f$ if and only if $f$ belongs to the range of $-\Delta_s$. In the next section we consider the simpler Yukawa equation~\eqref{eq:Poisson-Yukawa}.

\subsection{The Yukawa interaction}

Let $m>0$. If $f\in L^2_s$, we can define by analogy with~\eqref{eq:Dm_A}
\begin{equation}
D_m(f,f)=\av{S^{d-1}}\norm{\bra{-\Delta_s+m^2}^{-\frac{1}{2}}f}_{L^2_s}^2.
\label{eq:first-def-Dm} 
\end{equation}
The operator $\bra{-\Delta_s+m^2}^{-\frac{1}{2}}$ being bounded, $D_m$ is well-defined on $L^2_s$. To set up our mean-field model for disordered crystals, we however need to extend the quadratic form $D_m$ to a larger class of functions. Formal manipulations show that for a stationary function $f$
\begin{align}
D_m(f,f)&=\EE\bra{\int_Q\int_{\R^d}Y_m(x-y)\,f(\cdot,x)\,f(\cdot,y)\,dx\,dy}\nonumber\\
&=\EE\bra{\int_{Q}\left|\int_{\R^d}W_m(x-y)f(\cdot,y)\,dy\right|^2dx}.\label{eq:formal_Dm}
\end{align}
The second formula is more suitable for a proper definition of $D_m$.
We claim that the function $\bra{W_m*f}\bra{\o,x}:=\int_{\RRd}f\bra{\o,y}W_m\bra{x-y}\,dy$ is well-defined for all $f \in L^1_s$, and is in $L^1_s$. This follows from the following elementary result.

\begin{lemma}[Convolution of stationary functions]\label{lem:convolution}
Let $f\in L^t_s(L^q)$ and $W\in L_{\rm loc}^p(\R^d)$ such that
$$\sum_{k\in\Z^d}\norm{W}_{L^p(Q+k)}<\ii,$$
for some $1\leq p,q,t\leq\ii$. Then the function
\begin{equation}
\bra{W*f}\bra{\o,x}:=\int_{\RRd}f\bra{\o,y}W\bra{x-y}\,dy
\label{eq:def_convolution} 
\end{equation}
belongs to $L^t_s(L^r)$ with $1+1/r=1/p+1/q$, and
\begin{equation}\label{eq:convolution}
\norm{W*f}_{L^q_s(L^r)}\leq C\norm{f}_{L^t_s(L^q)}\left(\sum_{k\in\Z^d}\norm{W}_{L^p(Q+k)}\right)
\end{equation}
for a  constant $C$ depending only on the dimension $d$.
If $1<p,q,r<\ii$, we can replace $\norm{W}_{L^p(Q+k)}$ by the weak norm $\norm{W\1_{Q+k}}_{L^p_w}$ in~\eqref{eq:convolution}.
\end{lemma}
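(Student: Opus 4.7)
The plan is to reduce to classical Young's inequality by decomposing the (possibly non-integrable) kernel $W$ into its pieces on unit cubes, and then to combine the local Young estimates using the stationarity of $f$.

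Concretely, I would write $W=\sum_{k\in\ZZd}W_k$ with $W_k:=W\1_{Q+k}$, which gives $W\ast f=\sum_kW_k\ast f$. First I would check that $W\ast f$ is stationary: the change of variable $z=y-j$ and the stationarity of $f$ yield $(W\ast f)(\tau_j\omega,x)=(W\ast f)(\omega,x+j)$ for every $j\in\ZZd$. Hence it is enough to estimate the $L^r(Q)$-norm of $(W\ast f)(\omega,\cdot)$ and then take the $L^t(\Omega)$-norm. By Minkowski's inequality,
$$
\norm{(W\ast f)(\omega,\cdot)}_{L^r(Q)}\le \sum_{k\in\ZZd}\norm{(W_k\ast f)(\omega,\cdot)}_{L^r(Q)}.
$$
Since $W_k$ is supported in $Q+k$, the value of $W_k\ast f$ on $Q$ only involves the restriction of $f(\omega,\cdot)$ to $Q-k+(Q-Q)\subset \bigcup_{k'\in K}(Q-k+k')$ where $K\subset\ZZd$ has cardinality at most $2^d$. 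The classical Young inequality (applied on $\R^d$ with $W_k$ and $f(\omega,\cdot)\1_{\bigcup_{k'}Q-k+k'}$) then gives
$$
\norm{(W_k\ast f)(\omega,\cdot)}_{L^r(Q)}\le \norm{W}_{L^p(Q+k)}\sum_{k'\in K}\norm{f(\omega,\cdot)}_{L^q(Q-k+k')}.
$$

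Next I would use the stationarity of $f$ in the form $f(\omega,y)=f(\tau_{k-k'}\omega,y+k-k')$, which implies
$$
\norm{f(\omega,\cdot)}_{L^q(Q-k+k')}=\norm{f(\tau_{k-k'}\omega,\cdot)}_{L^q(Q)}.
$$
Taking the $L^t(\Omega)$-norm and using Minkowski once more together with the measure-preserving character of $\tau$ (so that $\norm{f(\tau_{k-k'}(\cdot),\cdot)}_{L^t(\Omega;L^q(Q))}=\norm{f}_{L^t_s(L^q)}$), I obtain
$$
\norm{W\ast f}_{L^t_s(L^r)}\le 2^d\,\norm{f}_{L^t_s(L^q)}\sum_{k\in\ZZd}\norm{W}_{L^p(Q+k)},
$$
which is the desired estimate with $C=2^d$. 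There is no real obstacle: every step is a standard estimate, and the only non-trivial observation is that the local convolution on $Q$ sees $f$ only on a bounded union of translates of $Q$, which can be reduced via stationarity to finitely many copies of $\norm{f}_{L^t_s(L^q)}$.

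For the refinement with weak-$L^p$ norms (when $1<p,q,r<\ii$), the only change is to replace the application of the usual Young inequality by its weak version due to O'Neil/Hunt, $\norm{g\ast h}_{L^r}\le C_{p,q,r}\norm{g}_{L^p_w}\norm{h}_{L^q}$, valid precisely in that range of exponents. All the other steps (decomposition over cubes, stationarity, Minkowski) go through verbatim and produce the analogous bound with $\norm{W\1_{Q+k}}_{L^p_w}$ in place of $\norm{W}_{L^p(Q+k)}$.
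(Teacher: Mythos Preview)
Your argument is correct and follows essentially the same route as the paper: decompose over unit cubes, apply Young's inequality locally, and use stationarity together with the measure-preserving property of $\tau$ to reduce everything to $\norm{f}_{L^t_s(L^q)}$. The paper decomposes the $y$-integration domain into cubes $Q+k$ and shifts $f$ via stationarity first, whereas you decompose $W$ into the pieces $W\1_{Q+k}$; these are dual versions of the same idea. One small slip: for $x\in Q$ and $\mathrm{supp}\,W_k\subset Q+k$, the relevant $y$-set is $(Q-Q)-k$, which in general meets $3^d$ (not $2^d$) integer translates of $Q$, so your constant should be $3^d$ rather than $2^d$; this does not affect the argument.
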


\begin{proof}
In order to prove the convergence of the integral in~\eqref{eq:def_convolution}, we write 
\begin{align*}
\int_{\RRd}|f\bra{\o,y}|\,|W\bra{x-y}|\,dy&=\sum_{k\in\Z^d}\int_{Q+k}|f\bra{\o,y}|\,|W\bra{x-y}|\,dy\\
 &=\sum_{k\in\Z^d}\int_{Q}|f\bra{\tau_k(\o),y}|\,|W\bra{x-y-k}|\,dy
\end{align*}
where we have used the stationarity of $f$. By the standard Young inequality we have for a.e. $x\in Q$ and a.s.
$$\norm{\int_{Q}|f\bra{\tau_k(\o),y}|\,|W\bra{x-y-k}|\,dy}_{L^r(Q)}\leq \norm{W(\cdot-k)}_{L^p(2Q)}\norm{f(\tau_k(\o),\cdot)}_{L^q(Q)}$$
and therefore
$$\norm{\int_{Q}|f\bra{\tau_k(\o),y}|\,|W\bra{x-y-k}|\,dy}_{L^t_s(L^r)}\leq \norm{W(\cdot-k)}_{L^p(2Q)}\norm{f}_{L^t_s(L^q)}.$$
The rest follows. The estimate with the weak norm $\norm{W\1_{Q+k}}_{L^p_w}$ follows from the generalized Young inequality~\cite{ReeSim2}.
\end{proof}

Since $W_m\in L^1(\R^d)$ when $m>0$, Lemma~\ref{lem:convolution} shows that $W_m*f\in L^1_s$ when $f\in L^1_s$. Now we can define
\begin{equation}
\boxed{D_m(f,f):=\EE\bra{ \int_{Q} |W_m*f|^2}} 
\label{eq:def_Dm}
\end{equation}
for any $f$ in the space 
\begin{equation}
\boxed{\D_Y:=\left\{f\in L^1_s\;|\; W_m*f\in L^2_s\right\}}
\label{eq:def_DY} 
\end{equation}
which we call the space of locally integrable functions with \emph{locally finite Yukawa energy}. It is easy to see that the space $\D_{Y}$ in fact does \emph{not} depend on $m>0$. It is a subspace of $L^1_s$ , with associated norm $\norm{f}_{L^1_s}+D_1(f,f)^{1/2}$, and a Banach space for this norm.

Using Lemma~\ref{lem:convolution},~\eqref{eq:Wm} and the known properties of $W_m$, we deduce the following result.

\begin{corollary}[Some functions of $\D_Y$]\label{cor:in_D_Y}
We have, in dimension $d$,
$$L^1_s\supset \D_Y\supset \begin{cases}
L^2_s(L^1) & \text{if $d=1$,}\\
L^2_s(L^q),\ \forall q>1 & \text{if $d=2$,}\\
L^2_s(L^{6/5}) & \text{if $d=3$.}
\end{cases}$$
\end{corollary}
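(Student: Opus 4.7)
The plan is to observe that the leftmost inclusion $\D_Y \subset L^1_s$ is tautological from the definition \eqref{eq:def_DY}, so only the three inclusions with source spaces $L^2_s(L^q)$ need to be established. In each case I would first verify $L^2_s(L^q) \subset L^1_s$ (which is elementary: since $|Q|=1$, $\|f\|_{L^1(Q)} \le \|f\|_{L^q(Q)}$ when $q \ge 1$, and then $\E(\|f\|_{L^1(Q)}) \le \E(\|f\|_{L^q(Q)}^2)^{1/2}$ by Cauchy--Schwarz), and then apply Lemma~\ref{lem:convolution} with the target exponent $r=2$ and $t=2$, which forces the relation
$$\frac{1}{p} + \frac{1}{q} = \frac{3}{2}$$
between the convolution exponent $p$ for $W_m$ and the local exponent $q$ for $f$.

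The key input is the local behavior of $W_m$ derived from \eqref{eq:Wm} and the exponential decay of $W_m$ inherited from $Y_m$, which guarantees that the sum $\sum_{k \in \Z^d} \|W_m\|_{L^p(Q+k)}$ (or its weak-$L^p$ analogue) converges as soon as the singularity at the origin is $p$-integrable (resp.\ in weak $L^p$). Concretely, in dimension $d=1$ the singularity is only logarithmic so $W_m \in L^p_{\rm loc}$ for every $p<\infty$; choosing $p=2$ gives the exponent $q=1$ and hence $\D_Y \supset L^2_s(L^1)$. In dimension $d=2$, $W_m(x) \sim |x|^{-1}$ near the origin, so $W_m \in L^p_{\rm loc}$ for every $p<2$; letting $p \uparrow 2$ yields any $q>1$, proving $\D_Y \supset L^2_s(L^q)$ for all $q>1$. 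In dimension $d=3$, $W_m(x) \sim |x|^{-2}$ near the origin, which lies in weak $L^{3/2}$ but not in $L^{3/2}$; using the weak-norm version of Lemma~\ref{lem:convolution} with $p=3/2$ gives $q=6/5$ and hence $\D_Y \supset L^2_s(L^{6/5})$.

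There is no serious obstacle here once Lemma~\ref{lem:convolution} is in hand; the only point that requires a bit of care is the $d=3$ case, where one must invoke the weak-type version of the Young inequality (as pointed out at the end of Lemma~\ref{lem:convolution}) because $|x|^{-2}$ fails to be in $L^{3/2}_{\rm loc}$. Everything else is a book-keeping exercise matching Hölder exponents to the singular behavior of the kernel $W_m$ as dictated by \eqref{eq:Wm}.
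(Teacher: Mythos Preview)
Your proposal is correct and follows exactly the approach the paper intends: the paper itself offers no detailed proof, merely stating that the result follows from Lemma~\ref{lem:convolution}, formula~\eqref{eq:Wm}, and the stated local behavior of $W_m$ at the origin together with its exponential decay. You have filled in precisely those details, including the correct use of the weak-$L^{3/2}$ version of the convolution lemma in dimension $d=3$.
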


When $f\in L^2_s$ we have that both $Y_m*f$ and $W_m*f$ belong to $L^2_s$, since $Y_m$ and $W_m$ are in $L^1(\R^d)$ when $m>0$. Thus we always have $f\in \D_Y$ and it is then an exercise to show that all the formulas for $D_m(f,f)$ in~\eqref{eq:first-def-Dm} and~\eqref{eq:formal_Dm} make sense and coincide. Indeed, we have 
$$Y_m*f=|S^{d-1}|\bra{-\Delta_s+m^2}^{-1} f\quad\text{and}\quad W_m*f=\sqrt{|S^{d-1}|}\,\bra{-\Delta_s+m^2}^{-1/2} f.$$

\subsection{The Coulomb interaction}

As mentioned previously, the Coulomb potential can be seen as the limit of the Yukawa potential when the parameter $m>0$ goes to zero. More precisely, as $0\leq (-{\Delta}_s+m^2)^{-1/2}\leq (-{\Delta}_s+m'^2)^{-1/2} $ for all $m \geq m' > 0$, the function $m\mapsto D_m\bra{f,f}$ is  non-increasing on $(0,+\infty)$, for any $f\in \D_Y$. It would therefore be natural to define the average Coulomb energy per unit volume as the limit of $D_m\bra{f,f}$ when $m\to0$, but we will proceed slightly differently. 

To simplify some later arguments, we define the Coulomb energy per unit volume by compensating the charge by a jellium background. This means we introduce for a stationary charge distribution $f \in \D_Y$
$$
\boxed{D_0(f,f) := \lim_{m\rightarrow 0}D_m\left(f-\EE\bra{\int_Qf}\,,\,f-\EE\bra{\int_Qf}\right),}
$$
together with the associated space
$$
\boxed{\D_C=\set{f\in \D_Y\;|\;\lim_{m\rightarrow 0}D_m\left(f-\EE\bra{\int_Qf}\,,\,f-\EE\bra{\int_Qf}\right)<\ii},}
$$
of the locally integrable stationary charge distributions $f$ with locally finite Coulomb energy (when compensated by a jellium background). We again emphasize that $\D_C\subset L^1_s$ by construction.

When $f\in L^2_s\cap\D_Y$, the limit is finite if and only if $f-\EE(\int_Qf)$ belongs to the quadratic form domain of $(-\Delta_s)^{-1}$, and we have
by the functional calculus
\begin{align*}
D_0(f,f)&=\lim_{m\rightarrow 0}D_m\left(f-\EE\bra{\int_Qf}\,,\,f-\EE\bra{\int_Qf}\right)\\
&= \av{S^{d-1}}\norm{\bra{-\Delta_s}^{-\frac{1}{2}}\left(f-\EE\bra{\int_Qf}\right)}_{L^2_s}^2. 
\end{align*}
For $f$ only in $\D_C$, the family $(-\Delta_s+m^2)^{-1/2}(f-\EE(\int_Qf))$ is Cauchy in $L^2_s$ when $m$ goes to zero and we still denote its limit by $(-\Delta_s)^{-1/2}(f-\EE(\int_Qf))$.

The following result means, in particular, that in the physically relevant case $d = 3$, a stationary function $f \in L^2_s(L^{6/5})$ whose charge and dipole moment in the unit cell $Q$ vanish a.s., has a finite average Coulomb energy per unit volume.

\begin{proposition}[Some functions in $\D_C$]\label{example3}  Let $d \le 3$ and $f$ be a function of 
$L^2_s\bra{L^q}$, with $q=1$ if $d=1$, $q>1$ if $d=2$ and $q=\frac{6}{5}$ if $d=3$, such that
\begin{equation}\label{eq:1m0}
q\bra{\o}=\int_Qf\bra{\o,x}\,dx=0\quad\mbox{and}\quad p\bra{\o}=\int_Qxf\bra{\o,x}\,dx=0\quad \mbox{a.s.}
\end{equation}
Then, $f \in \D_C$. 
\end{proposition}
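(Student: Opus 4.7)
My plan is as follows. Since $q(\omega)=0$ a.s.\ gives $\EE(\int_Q f)=0$, by the monotonicity of $m\mapsto D_m(f,f)$ on $(0,+\ii)$ already noted in the text, it suffices to produce a bound on $D_m(f,f)$ that is uniform for $m\in(0,1]$. I would start from representation~\eqref{eq:formal_Dm}, split along the lattice by writing $\RRd=\bigsqcup_{k\in\ZZd}(Q+k)$, and apply the stationarity relation $f(\omega,z+k)=f(\tau_k\omega,z)$ to reduce to a sum over $k$:
\[
D_m(f,f)=\sum_{k\in\ZZd}I_k(m),\qquad I_k(m):=\EE\bra{\int_{Q\times Q}Y_m(x-z-k)\,f(\omega,x)\,f(\tau_k\omega,z)\,dx\,dz}.
\]

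The heart of the argument will be a multipole expansion for the far cells $|k|\ge R_0:=2\sqrt d$. I would Taylor-expand $u\mapsto Y_m(u-k)$ at $u=0$ to order $2$ with integral remainder and set $u=x-z$, reading off the polynomial part
\[
Y_m(-k)+(x-z)\cdot\nabla Y_m(-k)+\tfrac12(x-z)^T(\nabla^2Y_m)(-k)(x-z).
\]
Thanks to $q(\omega)=p(\omega)=0$ a.s., integration of this polynomial against $f(\omega,x)f(\tau_k\omega,z)\,dx\,dz$ factors through $q(\omega)$, $q(\tau_k\omega)$, or $p(\omega)^THp(\tau_k\omega)$ with $H=(\nabla^2Y_m)(-k)$---in particular the mixed term $-2x^THz$ produces $p(\omega)^THp(\tau_k\omega)$---so the polynomial piece drops out and $I_k(m)$ reduces to the integral of the order-three Taylor remainder.

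A direct computation from $Y_m(x)=e^{-m|x|}/|x|$ in $d=3$ (with analogous Bessel-function identities in $d=2$ and explicit formulas in $d=1$) will give the uniform-in-$m\in[0,1]$ bound $|\nabla^3Y_m(\xi)|\le C_d/|\xi|^{d+1}$ for $|\xi|\ge 1$; since $|-k+\xi|\ge|k|/2$ for $|\xi|\le\sqrt d$ and $|k|\ge R_0$, the remainder is pointwise bounded by $C_d(|x|+|z|)^3/|k|^{d+1}$, and Cauchy--Schwarz with stationarity yield
\[
|I_k(m)|\le\frac{C_d}{|k|^{d+1}}\EE\bra{\|f(\omega,\cdot)\|_{L^1(Q)}\|f(\tau_k\omega,\cdot)\|_{L^1(Q)}}\le\frac{C_d}{|k|^{d+1}}\|f\|_{L^2_s(L^1)}^2,
\]
which is summable over $k\in\ZZd$ in every dimension $d\ge 1$. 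The finitely many near cells $|k|<R_0$ I would handle by a direct estimate: in $d=2,3$, dominate $Y_m\le Y_0$ and apply Hardy--Littlewood--Sobolev on the bounded set $Q\times(Q+k)$ to bound $|I_k(m)|$ by $C\|f\|_{L^2_s(L^q)}^2$ with the $q$ from the statement; in $d=1$, expand $Y_m(x)=m^{-1}-|x|+O(m)$ and observe that the $m^{-1}$ coefficient is killed by $q(\omega)q(\tau_k\omega)=0$, leaving a bounded integral.

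The main obstacle is the uniform-in-$m$ control of $Y_m$ and its derivatives as $m\downarrow 0$, especially in $d=1$ where $Y_m$ itself diverges like $1/m$ near the origin: that divergence must be absorbed through the cancellation $q=0$, both in the near field and when summing against the exponential factor. The structural reason the quadratic part of the multipole expansion vanishes under only the two conditions $q=p=0$---and not some additional hypothesis on the second moment of $f$---is that $Y_m(x-z-k)$ depends on $(x,z)$ only through the difference $x-z$, which forces the quadratic piece to take the special form $(x-z)^TH(x-z)$ rather than a general quadratic in $(x,z)$.
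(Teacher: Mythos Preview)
Your approach is essentially the paper's: split $\RRd$ into lattice cells, multipole-expand the kernel on far cells using $q=p=0$ to kill the first three orders, and bound near cells directly. The paper (which only details $d=3$) organizes the far field slightly differently---it first Taylor-expands the exponential factor $e^{-m|k+y-x|}$ about $e^{-m|k|}$ to isolate the pure Coulomb integral $B_{m,k}$, and then multipole-expands $|k+y-x|^{-1}$; this produces an extra error term $C_{m,k}$ controlled by $m^2|k|^{-1}e^{-m|k|/2}$ whose sum has to be handled by a Riemann-sum argument as $m\to0$. Your direct expansion of $Y_m$ with the uniform bound $|\nabla^3Y_m(\xi)|\le C_d|\xi|^{-(d+1)}$ is a bit cleaner, since it yields $|I_k(m)|\le C|k|^{-(d+1)}$ in one step and avoids that extra limit.

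One genuine slip: in $d=2$ the near-field claim ``dominate $Y_m\le Y_0$'' is false. Here $Y_m(x)=K_0(m|x|)\sim -\log(m|x|)\to+\infty$ as $m\to0$ for every fixed $x$, while $Y_0(x)=-\log|x|$ is even negative for $|x|>1$; so there is no uniform pointwise domination and HLS cannot be applied to $Y_m$ as written. The repair is exactly your $d=1$ trick: write $K_0(m|x|)=-\log|x|+\big(-\log(m/2)-\gamma\big)+o_m(1)$ on the bounded near-field set, kill the divergent constant by $q(\omega)q(\tau_k\omega)=0$, and bound the remaining $-\log|x-z|$ integral using $\log\in L^p_{\rm loc}$ for all $p<\infty$ together with $f\in L^2_s(L^q)$, $q>1$.
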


\begin{proof}
For the sake of brevity, we only detail the proof for $d=3$. Let $f$ be a function of $L^2_s\bra{L^q}$ satisfying~\eqref{eq:1m0}. As $\EE(\int_Qf)=0$, we have for all $m > 0$,
\begin{align*}
 D_m\left(f-\EE\bra{\int_Qf}\,,\,f-\EE\bra{\int_Qf}\right) &  = \dps\EE \left(\int_Q\int_\RRd \frac{e^{-m\av{x-y}}}{\av{x-y}}f(\cdot, y)f(\cdot,x)\,dy\,dx \right) \\
&\dps \le C \norm{f}_{L^2_s(L^\frac{6}{5})}^2 + \EE \left( \sum_{k\in\ZZ^3, \, |k| \ge 3} A_{m,k} \right), 
\end{align*}
where
$$
 A_{m,k}(\o) = \int_{Q\times Q}\frac{e^{-m\av{k+y-x}}}{\av{k+y-x}} f\bra{\o,y}f\bra{\tau_k\bra{\o},x} \,dy \,dx.
$$
Noticing that for all $m > 0$, $(x,y) \in Q \times Q$, and $k \in \ZZ^d$ such that $|k| \ge 3$,   
$$
\left| e^{-m|k+y-x|}-e^{-m|k|} + m e^{-m|k|} (|k+y-x|-|k|) \right| \le m^2 e^{-m|k|/2},
$$
and using the fact that $q(\omega)=0$ a.s. we obtain that
$$
A_{m,k}(\o) = (1+m|k|)e^{-m|k|} B_{m,k}(\o) + C_{m,k}(\o) \; \mbox{ a.s.},
$$
with
$$
B_{m,k}(\o) = \int_{Q\times Q}\frac{f\bra{\o,y}f\bra{\tau_k\bra{\o},x}}{\av{k+y-x}}\,dy \,dx
$$
and
$$
|C_{m,k}(\o)| \le 2 m^2 |k|^{-1} e^{-m|k|/2} \|f(\o,\cdot)\|_{L^1(Q)}  \|f(\tau_k(\o),\cdot)\|_{L^1(Q)}.
$$
It then follows from (\ref{eq:1m0}) that
\begin{equation*}
B_{m,k}(\o)  = \int_{Q\times Q}f(\o,x)f(\tau_k\bra{\o},y)F(k, x-y)\,dx\,dy \; \mbox{ a.s.},
\end{equation*}
where
$
F(k,h)=\frac{1}{\left| k+h \right|} 
- \left(  
\frac{1}{\left| k \right|}
-\frac{e_k\cdot h} {\left| k \right|^2 } 
+ \frac{3\left(e_k\cdot h\right)^2-\left|h\right|^2 }{2\left| k \right|^3}    \right)$ and  $e_k=k/|k|$. Thanks to the multipole expansion formula (see e.g. \cite[Lemma 9]{mathieu}), there exists a constant $C$ such that for all $k\in \RR^3\setminus\{0\}$ and $h\in \RR^3$ with $k+h\neq 0$,
$
\left| F(k,h) \right| \leq \frac{C\left|h\right|^3}{\left|k\right|^3\left|k+h\right|}.
$
Therefore,
$$
|B_{m,k}(\o)| \le  C |k|^{-4} \left\| f\bra{\o,\cdot} \right\|_{L^1(Q)} \left\| f\bra{\tau_k\bra{\o},\cdot}\right\|_{L^1(Q)}  \; \mbox{ a.s.}
$$
Consequently,
$$
\EE \left( \sum_{k\in\ZZ^3, \, |k| \ge 3} A_{m,k} \right)
\le C \|f\|_{L^2_s(L^1)}^2 \, \left( \sum_{k\in\ZZ^3 \setminus \left\{0\right\}}
 \frac{1}{|k|^{4}} + \sum_{k\in\ZZ^3 \setminus \left\{0\right\}} m^2  \frac{e^{-m|k|/2}}{|k|} \right),
$$
from which we infer that 
$$
D_m\left(f-\EE\bra{\int_Qf}\,,\,f-\EE\bra{\int_Qf}\right) \le C \|f\|_{L^2_s(L^{\frac 65})}^2 \left( 1 + \sum_{k\in\ZZ^3 \setminus \left\{0\right\}} m^2  \frac{e^{-m|k|/2}}{|k|} \right),
$$
for a constant $C$ independent of $f$. As 
$$
\lim_{m \to 0^+} \sum_{k\in\ZZ^3 \setminus \left\{0\right\}} m^2  \frac{e^{-m|k|/2}}{|k|} = \int_{\R^3} \frac{e^{-|x|/2}}{|x|} \, dx < \infty,
$$
we finally obtain that $f\in\D_C$.
\end{proof}

The proof of Proposition~\ref{example3} can be adapted to show that $D_0(f,f)$ is the limit of the supercell Coulomb energy per unit volume (see Section~\ref{limit_thermo} and~\cite{CDL}), for any fixed $f$ satisfying the neutrality assumptions~\eqref{eq:1m0}. It is an open problem to prove the same for the functions $f\in \D_C$ which do not satisfy~\eqref{eq:1m0}.

\subsection{Dual characterization}

The purpose of this section is to provide a useful characterization of the Yukawa and Coulomb spaces $\D_Y$ and $\D_C$ by duality. Let us introduce 
the spaces of test functions
$$E_Y=\mbox{span}\left\{ \Phi_{\chi,Y} \mbox{ with } Y\in L^\infty(\O),\;  \chi\in \mathcal S\bra{\RRd} \right\},$$ 
and
$$E_C=\mbox{span}\left\{ \Phi_{\chi,Y} \mbox{ with } Y\in L^\infty(\O),\;  \chi\in \mathcal S_0\bra{\RRd} \right\},$$ 
where $\mathcal S\left(\RRd\right)$ is the Schwartz space, $\mathcal S_0\bra{\RRd}=\left\{ \chi\in \mathcal{S}(\RRd)\;|\;  \widehat{\chi}\in C^\infty_c(\RRd\setminus\{0\})  \right\}$, 
and 
$$\Phi_{\chi,Y}(\o,x)=\sum_{k\in\ZZd}Y(\tau_k(\o))\chi(x-k),\;\mbox{a.s. and a.e.}$$

The following says that $E_Y$ (resp. $E_C$) are dense in $L^p_s$ (resp. in $L^p_s\cap\text{ker}(-\Delta_s)^\perp$).

\begin{lemma}[Density of $E_Y$ and $E_C$]\label{density_result}
For any $1 \le p < \ii$, the set  $E_Y$ is dense in $L^p_s$ and the set $E_C$ is dense in $\widetilde L^p_s=\set{f\in L^p_s\;|\;\EE\bra{\int_Qf}=0}$.
\end{lemma}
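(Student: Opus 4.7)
The plan is to establish both density claims simultaneously by a Hahn--Banach / duality argument: a subspace $E\subset L^p_s$ is dense in the target iff every continuous linear functional on $L^p_s$ that vanishes on $E$ also vanishes on the target. Since $L^p_s$ is isometrically isomorphic to $L^p(\Omega\times Q)$ via the restriction-and-stationary-extension map, its topological dual for $1\leq p<\infty$ is $L^{p'}_s$, and every continuous linear functional takes the form $\ell_g(f)=\EE\bra{\int_Q fg}$ for some $g\in L^{p'}_s$.

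The key preliminary computation is the following ``unfolding'' identity:
$$\ell_g(\Phi_{\chi,Y})=\EE\left[Y(\omega)\int_{\R^d}\chi(y)\,g(\omega,y)\,dy\right],\qquad Y\in L^\infty(\Omega),\ \chi\in\mathcal{S}(\R^d).$$
I would derive it by expanding $\Phi_{\chi,Y}=\sum_k Y(\tau_k\omega)\chi(\cdot-k)$, applying Fubini (the map $x\mapsto\EE|g(\omega,x)|$ is $\Z^d$-periodic by stationarity and $\chi$ has rapid decay, so the double series converges absolutely), and performing the measure-preserving substitution $\omega\mapsto\tau_k\omega$ together with $y=x-k$ to collapse the sum into the single integral over $\R^d$.

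The Yukawa statement is then immediate: if $\ell_g$ vanishes on $E_Y$, varying $Y\in L^\infty(\Omega)$ forces $\int_{\R^d}\chi(y)g(\omega,y)\,dy=0$ almost surely for each $\chi\in\mathcal{S}(\R^d)$; restricting to $\chi\in C^\infty_c(\R^d)$ and using separability to fix a single null set, we conclude $g(\omega,\cdot)=0$ in $\mathcal{D}'(\R^d)$ a.s., hence $g=0$ in $L^{p'}_s$ and $E_Y$ is dense in $L^p_s$.

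The Coulomb case is the main obstacle, because vanishing on $E_C$ only yields $\int\chi g(\omega,\cdot)=0$ a.s.\ for the smaller test class $\mathcal{S}_0(\R^d)$. Here I would first show that $g(\omega,\cdot)$ is almost surely a tempered distribution, by noting that the weighted series $\sum_{k\in\Z^d}(1+|k|)^{-d-1}\int_Q|g(\tau_k\omega,x)|^{p'}\,dx$ has finite expectation by stationarity and is therefore a.s.\ finite, placing $g(\omega,\cdot)$ in a polynomially weighted $L^{p'}$-space. By Fourier duality, the vanishing against $\mathcal{S}_0$ translates into $\hat g(\omega,\cdot)$ being supported at $\{0\}$, so $g(\omega,\cdot)$ must be a polynomial in $x$ a.s. A final stationarity + ergodicity argument reduces this polynomial to a deterministic constant: its top-degree coefficient is $\tau$-invariant hence constant by ergodicity; and if the degree $N$ were positive, then $\int_{Q+k}|g(\omega,y)|^{p'}\,dy$ would blow up like $|k|^{Np'}$ along a generic direction $k\to\infty$, contradicting (via Fatou's lemma) the $k$-independence of its expectation. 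Hence $g\equiv c$, the functional $\ell_c(f)=c\,\EE\bra{\int_Q f}$ annihilates $\widetilde L^p_s$, and Hahn--Banach yields the density of $E_C$ in $\widetilde L^p_s$.
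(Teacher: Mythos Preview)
Your overall strategy coincides with the paper's: both argue by duality, pair the annihilating $g\in L^{p'}_s$ against $\Phi_{\chi,Y}$, use the Fourier-support-at-the-origin criterion afforded by $\mathcal S_0$, and conclude that $g$ is constant. The ``unfolding'' identity you write out is exactly the computation underlying the paper's $\EE\bra{\int_Q\phi\,\Phi_{\chi,Y}}=\langle f_Y,\chi\rangle$.

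The one genuine organizational difference is in the Coulomb step. You work \emph{pathwise}: you show $g(\omega,\cdot)\in\mathcal S'(\RRd)$ a.s., deduce it is a polynomial a.s., and then run a stationarity/ergodicity argument (degree is $\tau$-invariant, top coefficients are $\tau$-invariant hence deterministic, and a growth-versus-finite-expectation contradiction via Fatou) to force degree zero. The paper instead averages in $\omega$ \emph{first}: it sets $f_Y(x)=\EE\bra{Y(\cdot)\,g(\cdot,x)}$, observes by Jensen and stationarity that $f_Y\in L^{p'}_{\rm unif}(\RRd)$, hence $f_Y$ is tempered, hence a polynomial, hence constant (a polynomial in $L^{p'}_{\rm unif}$ is automatically constant). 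Varying $Y$ then gives that $g$ is independent of $x$, and the paper finishes by invoking the standard fact that an $x$-independent stationary function is a.s.\ constant. The paper's route is a bit shorter because the passage ``polynomial $\Rightarrow$ constant'' happens at the deterministic level, so there is no random degree to control and no separability/null-set management needed in the Coulomb case. Your route is correct but requires the extra bookkeeping you sketch (uniform null set for the test functions, $\tau$-invariance of the degree, choice of direction for the blow-up); none of this is problematic, just longer.
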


\begin{proof}
We sketch here the proof of the density of $E_C$ in $\widetilde L^p_s$, 
and refer the reader to~\cite{these} for further details. Let 
$\phi \in (\widetilde L^p_s)'= \widetilde L^{p'}_s$, where $p'=(1-p^{-1})^{-1}$ is the conjugate exponent of $p$, be such that 
\begin{equation}\label{ortho}
\forall Y \in L^\infty(\O), \; \forall \chi \in {\mathcal S}_0(\R^d), \quad
\EE\bra{\int_Q \phi\, \Phi_{\chi,Y}} = 0.
\end{equation}
For $Y \in L^\infty(\O)$, we denote $f_Y(x)=\EE\left( Y(\cdot) \phi(\cdot,x) \right)$. The function $f_Y$ is in $L^{p'}_{\rm unif}(\RRd)$, hence it is a tempered distribution: $f_Y\in\mathcal{S}'(\R^d)$. In view of~\eqref{ortho}, we have for all $\chi\in {\mathcal S}_0$, $\langle \F^{-1}(f_Y), \widehat{\chi}\rangle_{\mathcal S'\bra{\RRd},\mathcal S\bra{\RRd}}=0$. Therefore $\F^{-1}(f_Y)$ is supported in $\{ 0\}$, which implies that
$$
\F^{-1}(f_Y)=\sum_{|\alpha|\leq N}c_\alpha \partial^\alpha\delta_0,
$$ 
with $N\in \NN$ and $c_\alpha \in \CC$. It follows that 
$$
f_Y(x)=\sum_{|\alpha|\leq N}\widetilde{c}_\alpha x^\alpha,
$$
with $\widetilde{c}_\alpha \in \CC$. As $f_Y$ is in $L^{p'}_{\rm unif}(\RRd)$, all the coefficients $\widetilde c_\alpha$ are equal to zero, except possibly $\widetilde c_0$, and $f_Y$ is a constant. As $Y\mapsto f_Y$ is a continuous linear form on $L^p(\O)$, there exists $Z\in L^{p'}(\O)$ such that for all $Y\in L^\infty(\O)$:
$\EE\left(YZ\right)=f_Y$. 
It follows that for all $x\in\RRd$, $\phi(\o,x)=Z(\o)$. We know that any stationary function independent of $x$ is a.s. and a.e. constant~\cite{pastur}. As $\EE(\int_Q \phi(\o,x))=0 $, we conclude that $\phi=0$, which proves that $E_C$ is dense in $\widetilde L^p_s$.
\end{proof}

It can be verified~\cite{these} that
\begin{equation}
\forall \Phi_{\chi,Y}\in E_Y,\quad \bra{-\Delta_s+1}^{-\frac{1}{2}}\Phi_{\chi,Y}=\Phi_{(-\Delta+1)^{-1/2}\chi,Y}
\label{eq:action_Laplacian_Y} 
\end{equation}
and, similarly, that
\begin{equation}
\forall \Phi_{\chi,Y}\in E_C,\quad \bra{-\Delta_s}^{-\frac{1}{2}}\Phi_{\chi,Y}=\Phi_{(-\Delta)^{-1/2}\chi,Y}. 
\label{eq:action_Laplacian_C}
\end{equation}
A straightforward consequence of Lemma~\ref{density_result} and~\eqref{eq:action_Laplacian_Y}-\eqref{eq:action_Laplacian_C} is the following

\begin{corollary}[Dual characterization of $\D_Y$ and $\D_C$]\label{duality} Let $f \in L^1_s$.
\renewcommand{\labelenumi}{(\roman{enumi})}	
\begin{enumerate}
\item If $(-\Delta_s+1)^{-{1}/{2}}f$, seen as a linear form on $E_Y$, is continuous on \\ $(E_Y,\|\cdot\|)_{L^2_s}$, then $f\in\D_Y$ and $D_m(f,f)=|S^{d-1}|\|(-\Delta_s+m^2)^{-{1}/{2}}f\|^2_{E_Y^*}.$
\item If $\EE(\int_Q f)=0$ and $(-\Delta_s)^{-{1}/{2}}f$, seen as a linear form on $E_C$, is continuous on $(E_C,\|\cdot\|_{L^2_s})$, then $f\in\D_C$ and $D_0(f,f)=|S^{d-1}|\|(-\Delta_s)^{-{1}/{2}}f\|^2_{E_C^*}.$

\end{enumerate}
\end{corollary}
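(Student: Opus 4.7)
\medskip

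\noindent\textbf{Proof plan.} The two items share the same architecture. I will: (a) check the dual form is well-defined on the test space by using the explicit action $(-\Delta_s+m^2)^{-1/2}\Phi_{\chi,Y}=\Phi_{(-\Delta+m^2)^{-1/2}\chi,Y}$, a straightforward extension of~\eqref{eq:action_Laplacian_Y}--\eqref{eq:action_Laplacian_C}; (b) rewrite the pairing against $f$ as a pairing against the convolution $W_m*f$ via a Fubini/self-adjointness identity; (c) transfer the continuity hypothesis from the resolvent parameter $1$ (resp.\ $0$) to arbitrary $m>0$ by functional calculus; (d) deduce $W_m*f\in L^2_s$ by density of the test space and the Riesz representation theorem, together with a separation argument identifying the Riesz representative with $W_m*f$; (e) read off the norm identity and combine with Cauchy--Schwarz for the reverse inequality.

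\medskip

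\noindent For item~(i), the form $T_{m,f}:\Psi\mapsto\EE\int_Q f\cdot(-\Delta_s+m^2)^{-1/2}\Psi$ is well-defined on $E_Y$ because $(-\Delta+m^2)^{-1/2}$ preserves $\mathcal S(\R^d)$ and $f\in L^1_s$. Fubini, legitimate since $W_m\in L^1(\R^d)$ for $m>0$, $f\in L^1_s$, and $\Psi$ has Schwartz decay, yields the symmetry identity
\begin{equation*}
T_{m,f}(\Psi)=\tfrac{1}{\sqrt{|S^{d-1}|}}\,\EE\int_Q (W_m*f)\,\Psi,\qquad \Psi\in E_Y.
\end{equation*}
To transfer continuity from $m=1$ to arbitrary $m>0$, I write $(-\Delta_s+m^2)^{-1/2}=(-\Delta_s+1)^{-1/2}\psi_m(-\Delta_s)$ with $\psi_m(x)=\sqrt{(x+1)/(x+m^2)}$ bounded on $[0,+\infty)$; since $\psi_m(|K|^2)$ is a smooth Fourier multiplier preserving $\mathcal S(\R^d)$, $\psi_m(-\Delta_s)$ stabilizes $E_Y$ and is bounded on $L^2_s$, so $|T_{m,f}(\Psi)|\leq M_m\|\Psi\|_{L^2_s}$ on $E_Y$. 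Density of $E_Y$ in $L^2_s$ (Lemma~\ref{density_result}) and Riesz produce $g_m\in L^2_s$ with $\EE\int_Q(W_m*f)\Psi=\sqrt{|S^{d-1}|}\,\EE\int_Q g_m\Psi$ for all $\Psi\in E_Y$. Testing against $\Phi_{\chi,Y}$ with $\chi\in C^\infty_c(\mathrm{int}\,Q)$ and $Y\in L^\infty(\Omega)$ --- whose restriction to $Q$ collapses to the non-stationary product $Y(\omega)\chi(x)$ --- and varying $Y$ and $\chi$ independently then identifies $W_m*f=\sqrt{|S^{d-1}|}\,g_m$ a.e. Hence $f\in\D_Y$, and the identity $D_m(f,f)=|S^{d-1}|\,\|T_{m,f}\|_{E_Y^*}^2$ follows from $\|W_m*f\|_{L^2_s}^2=D_m(f,f)$ combined with density of $E_Y$ in $L^2_s$.

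\medskip

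\noindent For item~(ii), the same steps are run with $E_C$, $(-\Delta_s)^{-1/2}$, and $\widetilde L^2_s:=\{u\in L^2_s\mid\EE\int_Q u=0\}$ in place of $E_Y$, $(-\Delta_s+1)^{-1/2}$, and $L^2_s$. The stabilization now uses $(-\Delta_s+m^2)^{-1/2}=(-\Delta_s)^{-1/2}\psi_m^0(-\Delta_s)$ on $\ker(-\Delta_s)^\perp$, with $\psi_m^0(x)=\sqrt{x/(x+m^2)}\in[0,1]$; for $\chi\in\mathcal S_0$ the symbol $\psi_m^0(|K|^2)\hat\chi(K)$ remains in $C^\infty_c(\R^d\setminus\{0\})$, so $\psi_m^0(-\Delta_s)$ preserves $E_C$ and contracts in $L^2_s$. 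The hypothesis $\EE\int_Q f=0$ ensures $E_C\subset\ker(-\Delta_s)^\perp$ and $W_m*f\in\widetilde L^1_s$, and yields the uniform-in-$m$ bound
\begin{equation*}
\left|\EE\int_Q (W_m*f)\,\Psi\right|\leq M\sqrt{|S^{d-1}|}\,\|\Psi\|_{L^2_s},\qquad \Psi\in E_C,\ m>0,
\end{equation*}
with $M=\|T^0_f\|_{E_C^*}$. Density of $E_C$ in $\widetilde L^2_s$ and Riesz produce $g_m\in\widetilde L^2_s$. Since $E_C$ contains no compactly supported element, I identify $W_m*f$ with $\sqrt{|S^{d-1}|}\,g_m$ through the following tempered-distribution/ergodic argument: for any $h\in\widetilde L^1_s$ satisfying $\EE\int_Q h\,\Phi_{\chi,Y}=0$ for every $\chi\in\mathcal S_0$ and $Y\in L^\infty(\Omega)$, a stationarity computation gives $\int_{\R^d}\chi(y)\,H_Y(y)\,dy=0$ with $H_Y(y):=\EE[Y(\omega)h(\omega,y)]$ uniformly locally $L^1$ on $\R^d$; the Fourier transform of $H_Y$ is thus supported at the origin, so $H_Y$ is a polynomial, hence a constant by the uniform local $L^1$ bound; stationarity and ergodicity of $\tau$ then force $h$ to be a.s.\ constant, and $\EE\int_Q h=0$ gives $h=0$. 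This provides $\|W_m*f\|_{L^2_s}^2\leq|S^{d-1}|M^2$ uniformly in $m>0$, hence $D_0(f,f)\leq|S^{d-1}|M^2$ by monotone passage $m\to 0$, i.e.\ $f\in\D_C$. The reverse inequality follows from Cauchy--Schwarz applied to the identity $T^0_f(\Psi)=\lim_{m\to 0}|S^{d-1}|^{-1/2}\EE\int_Q(W_m*f)\Psi$, whose validity on $E_C$ is guaranteed by dominated convergence using $(-\Delta+m^2)^{-1/2}\chi\to(-\Delta)^{-1/2}\chi$ in $\mathcal S(\R^d)$ for $\chi\in\mathcal S_0$.

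\medskip

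\noindent The main obstacle is precisely the identification $W_m*f=\sqrt{|S^{d-1}|}\,g_m$ in item~(ii): the elementary compactly-supported separation argument used in item~(i) fails because $\mathcal S_0$ contains no compactly supported function, forcing one into the tempered-distribution/ergodic argument above, which is the most delicate step of the proof.
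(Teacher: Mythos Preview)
Your proof is correct and follows precisely the route the paper indicates. The paper itself does not give a proof: it states the corollary as ``a straightforward consequence of Lemma~\ref{density_result} and~\eqref{eq:action_Laplacian_Y}--\eqref{eq:action_Laplacian_C}'', so your argument is a faithful and careful expansion of that one-line justification.

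One small redundancy worth noting: in item~(ii), the tempered-distribution/ergodic identification you perform (to conclude $W_m*f=\sqrt{|S^{d-1}|}\,g_m$ from vanishing against all $\Phi_{\chi,Y}$ with $\chi\in\mathcal S_0$) is literally the proof of Lemma~\ref{density_result} applied to $h=W_m*f-\sqrt{|S^{d-1}|}\,g_m\in\widetilde L^1_s$. Since that lemma is already available, you could simply invoke it: $h\in\widetilde L^1_s$ pairs to zero with all of $E_C$, and $E_C$ is dense in $\widetilde L^{p'}_s$, hence $h=0$ as an element of $(\widetilde L^{p'}_s)'=\widetilde L^p_s$ for suitable $p$ (or more directly, $h$ seen as a linear form on $\widetilde L^\infty_s$ vanishes on the dense set $E_C$). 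This is exactly what the paper intends when it cites Lemma~\ref{density_result} as one of the two ingredients. Your version reproves that lemma inline, which is fine but unnecessary.
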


\section{Stationary reduced Hartree-Fock model}\label{sec:rHF}

After these long preliminaries, we now introduce and study a reduced Hartree-Fock (rHF) model for crystals with nuclear charges randomly distributed following a stationary function $\mu\geq 0$. We typically think of $\mu$ being of the form
$$\mu(\omega,x)=\sum_{k\in\Z^d}q_k(\omega)\chi(x-k-\eta_k(\omega))$$
with $\int\chi=1$ and which describes a lattice of nuclei whose charges and positions are perturbed in an i.i.d. ergodic fashion. However in this work we do not want to restrict ourselves to $\mu$'s of this very specific form and for us $\mu$ is any non-negative stationary function in $L^1_s$. Our only restriction in this work is that we do not allow point-like charges.

In Section~\ref{sec:pres}, we define the minimization sets and the rHF energy functionals associated with the Yukawa interaction of parameter $m >0$ on the one hand, and with the Coulomb interaction on the other hand. In Section~\ref{sec:EU} we prove the existence of a ground state density matrix $\gamma$, and the uniqueness of the associated ground state density $\rho_\gamma$. We then show in Section~\ref{Y_C} that the $m$-Yukawa rHF model converges to the Coulomb rHF model when the parameter $m$ goes to $0$. Finally, we prove in Section~\ref{sec:SCF} that, in the Yukawa setting, any rHF ground state satisfies a self-consistent equation. 

In Section~\ref{limit_thermo}, we will prove that, still in the Yukawa setting, the rHF model for disordered crystals we have introduced is in fact the thermodynamic limit of the supercell model. 

\subsection{Presentation of the model}
\label{sec:pres}

As in the usual rHF model for perfect crystals~\cite{CDL}, the rHF model we propose consists in minimizing, on the set of admissible density matrices, an energy functional composed of two terms: the kinetic energy per unit volume and the average Coulomb (or Yukawa) energy per unit volume. This leads us to introduce the family of energy functionals
\begin{equation}\label{ee}
\boxed{\E_{\mu,m}(\gamma)=\frac{1}{2}\,\trv\left(-{\Delta} \gamma \right)+\frac{1}{2}D_m(\rho_\gamma-\mu, \rho_\gamma-\mu)}
\end{equation}
with $m=0$ for Coulomb and $m>0$ for Yukawa. The sets of admissible density matrices are defined by 
\begin{equation}
 \K_{\mu, Y} = \left\{ \gamma\in\Sv_{1,1}\cap \underline{\mathcal S },\;  0\leq \gamma\leq 1\; \mbox{a.s.}, 
\;\trv(\gamma)= \EE\bra{\int_Q\mu},\; \rho_\gamma-\mu \in \D_{Y} \right\}
\label{eq:def_K_mu} 
\end{equation}
in the Yukawa setting, and by
$$
 \K_{\mu, C} = \left\{ \gamma\in\Sv_{1,1}\cap \underline{\mathcal S },\;  0\leq \gamma\leq 1\; \mbox{a.s.}, 
\;\trv(\gamma)= \EE\bra{\int_Q\mu},\; \rho_\gamma-\mu \in \D_{C} \right\}
$$
in the Coulomb setting. The constraint $\trv(\gamma)=\EE(\int_Q\mu)$ (neutrality condition) must be added in the latter setting since the average Coulomb energy per unit volume of a non globally neutral stationary charge distribution is infinite (recall that in our definition of $D_0$, we have added a jellium background to enforce the neutrality condition). We also impose this constraint in the Yukawa setting for consistency.  
In our model it is not essential that $\mu\geq0$ but we keep this constraint for obvious physical reasons.

\medskip

The following lemma gives sufficient conditions on $\mu\geq0$ for the sets $\K_{\mu, Y}$ and $\K_{\mu, C}$ to be non empty.

\renewcommand{\labelenumi}{(\roman{enumi})}	
\begin{lemma}\label{non_empty}
If $\mu\in \D_Y$, then $\K_{\mu,Y}$ is non empty. If $\mu\geq 0$ satisfies the following conditions
\begin{enumerate}
\item \label{one1}$\mu\in L^3_s(L^1)$, 
\item \label{two2}there exists $\epsilon>0$ such that $\av{p\bra{\o}}\leq q\bra{\o}\bra{\frac{1}{2}-\epsilon} \;\mbox{a.s.,}$
where $q\bra{\o}=\int_Q\mu(\o,x)\,dx$ and $p\bra{\o}=\int_Qx\mu(\o,x)\,dx$,
\end{enumerate} 
then $\K_{\mu,C}$ is non empty.
\end{lemma}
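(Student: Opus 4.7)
My plan is to build an admissible density matrix $\gamma$ in each case by first constructing a convenient stationary density $\rho$, invoking the representability result (Theorem~\ref{n-repre}) to obtain $\gamma$ with $\rho_\gamma=\rho$, and then checking that $\rho-\mu$ lies in the relevant energy space.

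For the Yukawa case, the simplest choice works: take the constant density $\rho\equiv N:=\EE\bra{\int_Q \mu}$. Then $\rho\geq 0$ is trivially in $L^3_s$ with $\sqrt{\rho}\in H^1_s$, so Theorem~\ref{n-repre} yields $\gamma\in\K$ with $\rho_\gamma=N$ (equivalently, $\gamma=\1_{(-\ii,\lambda)}(-\Delta)$ for the Fermi level $\lambda$ such that $(2\pi)^{-d}|\{|\xi|^2<\lambda\}|=N$ does the job directly). The neutrality $\tv{\gamma}=N=\EE\bra{\int_Q\mu}$ is automatic. Since $W_m*N$ is the constant $N\sqrt{|S^{d-1}|}/m$, we have $N\in\D_Y$, and combining with the hypothesis $\mu\in\D_Y$ gives $\rho_\gamma-\mu\in\D_Y$.

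For the Coulomb case, the key idea is to build $\rho$ matching both the monopole $q(\omega)$ \emph{and} the dipole $p(\omega)$ of $\mu$ in every cell simultaneously, so that $\rho-\mu$ has vanishing charge and dipole in each cell almost surely --- precisely the structure handled by Proposition~\ref{example3}. I would pick a nonnegative symmetric bump $\chi=\chi_0^2\in C_c^\ii(\RRd)$ supported in the Euclidean ball $B(0,\epsilon/2)$ with $\int\chi=1$ (symmetry forcing $\int y\,\chi(y)\,dy=0$), set $a(\omega):=p(\omega)/q(\omega)$ on $\{q>0\}$ and $0$ otherwise, so that hypothesis (ii) yields $|a(\omega)|\leq 1/2-\epsilon$ a.s., and define
\begin{equation*}
\rho(\omega,x):=\sum_{k\in\ZZd}q(\tau_k\omega)\,\chi\bra{x-k-a(\tau_k\omega)}.
\end{equation*}
The translates of $\chi$ then sit in the pairwise disjoint balls $k+B(a(\tau_k\omega),\epsilon/2)\subset k+Q$. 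A direct change of variables gives $\int_Q\rho(\omega,\cdot)=q(\omega)$ and $\int_Q y\,\rho(\omega,y)\,dy=q(\omega)\,a(\omega)=p(\omega)$, so $\rho-\mu$ has vanishing charge and dipole in every cell. Hypothesis (i), namely $q\in L^3(\Omega)$, together with the disjoint supports and the smoothness of $\chi_0$, readily yields $\rho\in L^3_s$ and $\sqrt{\rho}\in H^1_s$, so Theorem~\ref{n-repre} provides $\gamma\in\K$ with $\rho_\gamma=\rho$ and $\tv{\gamma}=\EE(q)=\EE\bra{\int_Q\mu}$.

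The remaining step --- which I expect to be the main technical hurdle --- is to verify that $\rho-\mu\in\D_C$. This proceeds by a cell-by-cell argument in the spirit of the proof of Proposition~\ref{example3}: decompose $\rho-\mu=\sum_k g_k$ with $g_k$ supported in $\overline{k+Q}$ and carrying zero charge and dipole about the centre $k$; for well-separated cells $|j-k|$ large, the Taylor expansion of $Y_m(x-y)$ combined with the vanishing multipoles of $g_j$ and $g_k$ produces cross-energies of order $|j-k|^{-(d+2)}$, which is summable in dimensions $d\leq 3$; the finitely many near-cell interactions must then be bounded by hand, using the boundedness of the $\chi$-part together with the per-cell $L^1$-integrability of $\mu$ to control the $\rho\mu$ and $\mu\mu$ contributions uniformly in $m$.
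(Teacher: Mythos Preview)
Your proposal is correct and follows essentially the same route as the paper: for Yukawa you take the constant density (free electron gas), and for Coulomb you build a bump density of mass $q(\omega)$ centred at $p(\omega)/q(\omega)$ so that $\rho-\mu$ has vanishing charge and dipole in every cell, invoke the representability Theorem~\ref{n-repre}, and conclude $\rho-\mu\in\D_C$ via a multipole expansion. The only differences are cosmetic --- the paper rescales the bump by $d(\omega)=\mathrm{dist}(p(\omega)/q(\omega),\partial Q)$ instead of fixing its radius to $\epsilon/2$, and rather than redoing the cell-by-cell multipole argument it simply invokes Proposition~\ref{example3}, which already packages exactly the ``vanishing charge and dipole $\Rightarrow$ finite Coulomb energy'' step you sketch; you could (and should) cite it directly and skip your last paragraph.
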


Loosely speaking, the interpretation of the condition $\av{p\bra{\o}}\leq q\bra{\o}\bra{\frac{1}{2}-\epsilon}$ is that the nuclei do not touch the boundary of $Q$ too often.

\begin{proof}
Let $\mu\in \D_Y$ and $\rho:=\EE(\int_Q\mu)$ {a.s.}~and {a.e}. It is clear that there exists a self-adjoint operator  $\gamma\in \Sv_{1,1}$ such that $0\leq \gamma \leq 1$ a.s. and $\rho_\gamma=\rho$. We can take for instance a free electron gas with constant density $\rho$, that is,
$$\gamma=\1_{(-\ii,\epsilon]}\left(-\Delta\right),\qquad \text{with}\quad \epsilon=\left(\frac{d(2\pi)^d}{|S^{d-1}|}\right)^{2/d}\rho^{2/d}.$$
This state is obviously ergodic since it is fully translation-invariant. Moreover it satisfies
$$\trv\bra{-\Delta\gamma}=\frac{d}{d+2}\left(\frac{d(2\pi)^d}{|S^{d-1}|}\right)^{2/d}\,\rho^{1+2/d},\qquad $$
Besides, $\rho-\mu\in \D_Y$ and therefore $\gamma\in \K_{\mu,Y}$. 

Suppose now that $\mu$ satisfies conditions (i) and (ii) of the statement. Let $\rho$ be the stationary function defined on $Q$ by
\begin{equation}
\rho(\o,x)=\begin{cases}
0&\text{if $q\bra{\o}=0$}\\
\frac{q\bra{\o}}{d(\o)^d}\,\chi\left(\frac{x-\frac{p\bra{\o}}{q\bra{\o}}}{d(\o)}\right)^2&\text{otherwise.}
\end{cases}
\end{equation}
Here $d(\o)={\rm dist}(p\bra{\o}/{q\bra{\o}},\partial Q)$ and  $\chi$ is any non-negative radial function of $C^\infty_c(\RRd)$ with support in $B(0, {1}/{2})$, such that $\int_Q\chi^2=1$. We check that $\rho\in L^2_s\bra{L^q}\cap L^3_s$, where $q$ satisfy the conditions in Proposition~\ref{example3}, and $\sqrt{\rho}\in H^1_s$. Therefore, by the representability Theorem~\ref{n-repre}, there exists a self-adjoint operator  $\gamma\in \Sv_{1,1}$ such that $0\leq \gamma \leq 1$ a.s. and $\rho_\gamma=\rho$. Moreover, \\
$\int_Q\bra{\rho(\o,x)-\mu(\o,x)}\,dx=0$ and $\int_Qx\bra{\rho(\o,x)-\mu(\o,x)}\,dx=0$. It follows from Proposition~\ref{example3} that $\rho-\mu\in \D_C$, and therefore that $\gamma\in \K_{\mu,C}$.
\end{proof}
\renewcommand{\labelenumi}{\arabic{enumi}.}

\subsection{Existence of a ground state}
\label{sec:EU}

Now that we have properly defined the rHF energy, it is natural to look for ground states, that is, minimizers of $\E_{\mu,m}$ on $\K_{\mu,Y/C}$. The ground state energy of a disordered crystal is defined by
\begin{equation}\label{problem_m}
\boxed{I_{\mu,m}=\inf\left\{ \E_{\mu,m}(\gamma),\;  \gamma\in  \K_{\mu,{Y}} \right\}}
\end{equation}
with $m > 0$, in the Yukawa case, and by
\begin{equation}\label{problem_0}
\boxed{I_{\mu,0}=\inf\left\{ \E_{\mu,0}(\gamma),\;  \gamma\in  \K_{\mu,{C}} \right\}}
\end{equation}
in the Coulomb case. 

\renewcommand{\labelenumi}{(\roman{enumi})}
\begin{theorem}[Existence of ergodic ground states]\label{th1_y} Let $0\leq \mu\in L^1_s$.
If $\K_{\mu,Y}$ (resp. $\K_{\mu,C}$) is non empty, then  (\ref{problem_m}) (resp. (\ref{problem_0})) has a minimizer and all the minimizers  share the same density.
\end{theorem}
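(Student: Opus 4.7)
The plan is to apply the direct method of the calculus of variations, using Proposition~\ref{prop:weak-compactness} for compactness and the parallelogram identity for uniqueness of the density. Since $\K_{\mu,Y}$ (resp.\ $\K_{\mu,C}$) is nonempty by hypothesis and $\E_{\mu,m}\geq 0$, the infimum $I_{\mu,m}$ is a well-defined nonnegative real. For any minimizing sequence $(\gamma_n)$, nonnegativity of the two contributions to $\E_{\mu,m}$ bounds uniformly $\trv(-\Delta\gamma_n)$ and $D_m(\rho_{\gamma_n}-\mu,\rho_{\gamma_n}-\mu)$, while neutrality fixes $\tv{\gamma_n}=\EE\bra{\int_Q\mu}$. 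Proposition~\ref{prop:weak-compactness} then extracts a subsequence with weak-$\ast$ limit $\gamma\in\K$ in $L^\ii(\O,\B)$ satisfying $\tv{\gamma}=\EE\bra{\int_Q\mu}$ (so neutrality is preserved), $\rho_{\gamma_n}\rightharpoonup\rho_\gamma$ weakly in $L^{(d+2)/d}_s$, and $\trv(-\Delta\gamma)\leq\liminf_n\trv(-\Delta\gamma_n)$.

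Next I would establish weak lower semicontinuity of the interaction. In the Yukawa case $m>0$, the operator $(-\Delta_s+m^2)^{-1}$ acts as convolution with $Y_m/|S^{d-1}|$, and since $Y_m\in L^1(\R^d)$, Lemma~\ref{lem:convolution} implies that this map is continuous on the relevant $L^p_s$ spaces. A polarization argument then yields $D_m(\rho_\gamma-\mu,\rho_{\gamma_n}-\mu)\to D_m(\rho_\gamma-\mu,\rho_\gamma-\mu)$ from the weak $L^{(d+2)/d}_s$-convergence, and nonnegativity of $D_m$ applied to $\rho_{\gamma_n}-\rho_\gamma$ gives $D_m(\rho_\gamma-\mu,\rho_\gamma-\mu)\leq\liminf_n D_m(\rho_{\gamma_n}-\mu,\rho_{\gamma_n}-\mu)$. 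In the Coulomb case $m=0$, preserved neutrality $\EE\bra{\int_Q(\rho_\gamma-\mu)}=0$ together with the monotone increase $D_m\uparrow D_0$ as $m\downarrow 0$ on neutral charges yields $D_0=\sup_{m>0}D_m$; a supremum of weakly l.s.c.\ functionals being w.l.s.c.\ reduces the Coulomb case to the Yukawa one. Consequently $\rho_\gamma-\mu\in\D_Y$ (resp.\ $\D_C$), so $\gamma\in\K_{\mu,Y}$ (resp.\ $\K_{\mu,C}$), and $\E_{\mu,m}(\gamma)\leq I_{\mu,m}$, making $\gamma$ a minimizer.

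For uniqueness of the density, let $\gamma_1,\gamma_2$ be two minimizers with densities $\rho_1,\rho_2$. Convexity of the admissible set makes $(\gamma_1+\gamma_2)/2$ admissible, and linearity of $\trv(-\Delta\cdot)$ combined with the parallelogram identity for the quadratic form $D_m$ yields
$$\E_{\mu,m}\bra{\tfrac{\gamma_1+\gamma_2}{2}}=\tfrac12\bra{\E_{\mu,m}(\gamma_1)+\E_{\mu,m}(\gamma_2)}-\tfrac18 D_m(\rho_1-\rho_2,\rho_1-\rho_2).$$
Minimality forces $D_m(\rho_1-\rho_2,\rho_1-\rho_2)=0$, hence $\rho_1=\rho_2$: for $m>0$ by positive-definiteness of $D_m$, and for $m=0$ because $\rho_1-\rho_2$ is mean-zero by neutrality and $\ker(-\Delta_s)=\mathrm{span}\{1\}$. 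The main obstacle I anticipate is the weak lower semicontinuity of the Coulomb interaction $D_0$, whose quadratic form is built from the unbounded operator $(-\Delta_s)^{-1/2}$; the monotonicity identification $D_0=\sup_{m>0}D_m$ on neutral stationary charges is the decisive trick that reduces the issue to the well-controlled Yukawa case, while preservation of neutrality under the weak limit is secured by the particle-number convergence in Proposition~\ref{prop:weak-compactness}.
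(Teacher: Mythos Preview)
Your overall architecture---direct method via Proposition~\ref{prop:weak-compactness}, then the parallelogram identity for uniqueness---is exactly the paper's. The divergence is in how you obtain weak lower semicontinuity of the interaction term, and there your polarization argument has a real gap under the sole hypothesis $\mu\in L^1_s$.

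The step ``$D_m(\rho_\gamma-\mu,\rho_{\gamma_n}-\mu)\to D_m(\rho_\gamma-\mu,\rho_\gamma-\mu)$'' already presupposes $\rho_\gamma-\mu\in\D_Y$, which is what you are trying to establish. Even granting this, the cross term $D_m(\rho_\gamma-\mu,\rho_{\gamma_n}-\rho_\gamma)$ would need $Y_m\ast(\rho_\gamma-\mu)$ to lie in the dual space $L^{(d+2)/2}_s$ in order to be tested against $\rho_{\gamma_n}-\rho_\gamma\rightharpoonup 0$ in $L^{(d+2)/d}_s$. But Lemma~\ref{lem:convolution} only gives $Y_m\ast(\rho_\gamma-\mu)=W_m\ast\big(W_m\ast(\rho_\gamma-\mu)\big)\in L^2_s(L^r)$; the $\omega$-exponent stays at~$2$, which for $d=3$ falls short of $5/2$. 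And since $\mu$ is merely in $L^1_s$, you cannot split it off and work with $\rho_{\gamma_n}$ alone. Your Coulomb reduction $D_0=\sup_{m>0}D_m$ is elegant, but it inherits this Yukawa gap rather than circumventing it.

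The paper sidesteps both issues by working dually: it extracts a weak $L^2_s$-limit $z$ of $z_n=(-\Delta_s+m^2)^{-1/2}(\rho_{\gamma_n}-\mu)$ and identifies $z$ by testing only against $\Phi\in E_{Y/C}$, for which $(-\Delta_s+m^2)^{\pm1/2}\Phi$ remains bounded so that the pairing with $\rho_\gamma-\mu\in L^1_s$ is legitimate. Corollary~\ref{duality} then yields $\rho_\gamma-\mu\in\D_{Y/C}$ together with the energy bound, uniformly in $m\ge0$. The same density tool (Lemma~\ref{density_result}) is also what the paper invokes to pass from $D_m(\rho_{\gamma_1}-\rho_{\gamma_2},\rho_{\gamma_1}-\rho_{\gamma_2})=0$ to $\rho_{\gamma_1}=\rho_{\gamma_2}$; your appeal to ``positive-definiteness of $D_m$'' is not immediate on $\D_Y$ once $\rho_{\gamma_1}-\rho_{\gamma_2}$ need not be in $L^2_s$ (as for $d=3$).
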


The proof of Theorem~\ref{th1_y} is based on the weak-compactness of $\K_{\mu,Y/C}$ (Proposition~\ref{prop:weak-compactness}), and on the characterization of the spaces $\D_{C/Y}$ by duality (Corollary~\ref{duality}). We recall that in Lemma~\ref{non_empty} above, we have given natural conditions which guarantee that $\K_{\mu,Y/C}$ is non empty.

\begin{proof}
Let $m\geq 0$ and let $(\gamma_n)$ be a minimizing sequence for $I_{\mu,m}$. As the functional $\E_{\mu,m}$ is the sum of two non-negative terms, these two terms must be uniformly bounded. Since $\trv(-\Delta\gamma_n)$ and $\trv(\gamma_n)=\EE(\int_Q\mu)$ are bounded, we can apply Proposition~\ref{prop:weak-compactness} and extract a subsequence (denoted the same for simplicity), such that $\gamma_n\wto_*\gamma$, with all the convergence properties of the statement of Proposition~\ref{prop:weak-compactness}. In particular, we have
$$\tv{-\Delta\gamma}\leq \liminf_{n\to\ii}\tv{-\Delta\gamma_n} \quad\text{and}\quad \trv(\gamma)=\EE\bra{\int_Q\mu}.$$
Similarly, we know that $z_n:=W_m\ast (\rho_{\gamma_n}-\mu)=(-{\Delta}_s+m^2)^{-1/2}(\rho_{\gamma_n}-\mu)$ is a bounded sequence in $L^2_s$. Thus we can extract another subsequence such that $z_n\wto z$ weakly in $L^2_s$.

Passing to weak limits using that $\rho_{\gamma_n}\wto \rho_\gamma$ in $L^{1+2/d}_s$, it is readily checked that for any $\Phi\in E_{C/Y}$
$$
\left\langle \rho_\gamma-\mu, \bra{-\Delta_s+m^2}^{-\frac{1}{2}}\Phi  \right\rangle_{E_{C/Y}^*,E_{C/Y}} =\left\langle z, \Phi \right\rangle_{L^2_s}. 
$$
Therefore, using the lower semi-continuity of the $L^2_s$-norm, we obtain
$$
\norm{\bra{-\Delta_s+m^2}^{-\frac{1}{2}}\!\bra{\rho_\gamma-\mu}}_{E_{C/Y}^*}= \norm{z}_{L^2_s}\leq \liminf_{\cn} \norm{\bra{-\Delta_s+m^2}^{-\frac{1}{2}}\!\bra{\rho_{\gamma_n}-\mu}}_{L^2_s}.
$$
We deduce from Corollary~\ref{duality} that $\rho_{\gamma}-\mu\in \D_{C/Y}$ and that 
$$
\E_{\mu,m}(\gamma)\leq \liminf_\cn \E_{\mu,m}(\gamma_n)=I_{\mu,m}.
$$
Thus, $\gamma$ is a minimizer of~\eqref{problem_m} (resp.~\eqref{problem_0}). 

Let us now prove the uniqueness of the minimizing density $\rho_\gamma$. Assume that $\gamma_1$ and $\gamma_2$ are two minimizers of ~\eqref{problem_m}  (resp.~\eqref{problem_0}).
A simple calculation shows that
\begin{eqnarray*}
\E_{\mu,m}\left(\frac{\gamma_1+\gamma_2}{2}\right)&=&\frac{1}{2}\E_{\mu,m}\left(\gamma_1\right)+\frac{1}{2}\E_{\mu,m}\left(\gamma_2\right)- \frac{1}{4}D_m\left(\rho_{\gamma_1}-\rho_{\gamma_2},\rho_{\gamma_1}-\rho_{\gamma_2}\right)\\
&=& I_{\mu,m}- \frac{1}{4}D_m\left(\rho_{\gamma_1}-\rho_{\gamma_2},\rho_{\gamma_1}-\rho_{\gamma_2}\right).
\end{eqnarray*}
As $I_{\mu,m}$ is the infimum of $\E_{\mu,m}$ and as $(\gamma_1+\gamma_2)/{2}$ belongs to the minimization set $\K_{\mu,C/Y}$, we deduce that $\|(-{\Delta}_s+m^2)^{-\frac{1}{2}}\left(\rho_{\gamma_1}-\rho_{\gamma_2}\right)\|_{L^2_s}=0$. Thus $(-{\Delta}_s+m^2)^{-\frac{1}{2}}\left(\rho_{\gamma_1}-\rho_{\gamma_2}\right)=0 $. For all $\Phi\in E_{C}$,  $(-{\Delta}_s+m^2)^{1/2}\Phi\in E_{C} $ and 
$$
\left\langle \left(-{\Delta}_s+m^2\right)^{-\frac{1}{2}}\left(\rho_{\gamma_1}-\rho_{\gamma_2}\right), (-{\Delta}_s+m^2)^{\frac{1}{2}}\Phi\right\rangle_{L^2_s} =0.
$$ 
Hence, $\EE(\int_Q(\rho_{\gamma_1}-\rho_{\gamma_2})\Phi)=0$ for all $\Phi\in E_{C}$. As  $E_C$ is dense in $L^{1+d/2}_s\cap\{1\}^\perp$ (see Lemma~\ref{density_result}) and as, in addition, $\EE(\int_Q(\rho_{\gamma_1}-\rho_{\gamma_2}))=0 $, we conclude that $\rho_{\gamma_1}=\rho_{\gamma_2}$.
\end{proof}

\subsection{From Yukawa to Coulomb}\label{Y_C}
In this section, we prove that the ground state energy of the Yukawa problem converges to the ground state energy of the Coulomb problem as the parameter $m$ goes to $0$. The result essentially follows from our definition of the Coulomb energy $D_0$ as the limit of $D_m$ when $m\to0$.

\begin{theorem}[Convergence of Yukawa to Coulomb]\label{th2_y}
Let $0\leq\mu\in L^1_s$ be such that $\K_{\mu,C}\neq\emptyset$.
The function $m\mapsto I_{\mu,m}$ is decreasing and continuous on $[0,+\ii)$. In particular, we have
$$
\lim_{m\to0^+}I_{\mu,m}=I_{\mu,0}.
$$
Moreover, if for each $m > 0$, $\gamma_m$ is a minimizer of (\ref{problem_m}), then the family $\bra{\gamma_m}_{m>0}$ converges, up to extraction, to some minimizer $\gamma_0$ of (\ref{problem_0}), in the same fashion as in Proposition~\ref{prop:weak-compactness}.
\end{theorem}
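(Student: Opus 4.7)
The strategy is to combine monotonicity of $D_m$ in $m$ with a weak-$\ast$ compactness / lower semicontinuity analysis of the minimizers. First I would establish monotonicity: for fixed $f\in\D_Y$, the spectral representation
$D_m(f,f)=|S^{d-1}|\,\norm{(-\Delta_s+m^2)^{-1/2}f}^2$ (in $L^2_s$ or, via Corollary~\ref{duality}, in $E_Y^*$) is non-increasing in $m$ on $(0,+\ii)$ by operator monotonicity of $m\mapsto(-\Delta_s+m^2)^{-1}$. Since $\K_{\mu,C}\subset\K_{\mu,Y}$ and the neutrality constraint makes the jellium compensation in $D_0$ vanish, this immediately yields $I_{\mu,0}\geq I_{\mu,m}$ for $m>0$ and that $m\mapsto I_{\mu,m}$ is non-increasing on $[0,+\ii)$.

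Next, for upper semicontinuity $\limsup_{m\to m_0}I_{\mu,m}\leq I_{\mu,m_0}$ at any $m_0\in[0,+\ii)$, I would use a minimizer $\gamma_{m_0}$ provided by Theorem~\ref{th1_y} as a trial state: only the Coulomb/Yukawa term depends on $m$, and $D_m(\rho_{\gamma_{m_0}}-\mu,\rho_{\gamma_{m_0}}-\mu)\to D_{m_0}(\rho_{\gamma_{m_0}}-\mu,\rho_{\gamma_{m_0}}-\mu)$ by continuity of functional calculus for $m_0>0$, or by the very definition of $D_0$ together with zero-mean of the charge for $m_0=0$.

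The heart of the proof is lower semicontinuity together with the claimed convergence of the minimizers. For a sequence $m_n\to m_0$, I set $\gamma_n:=\gamma_{m_n}$. Monotonicity bounds the Yukawa energy by $I_{\mu,0}$, so $\tv{-\Delta\gamma_n}\leq 2I_{\mu,0}$, while $\tv{\gamma_n}=\EE(\int_Q\mu)$ is fixed. Proposition~\ref{prop:weak-compactness} then extracts a subsequence converging weakly-$\ast$ to some $\gamma_0\in\K$ with the announced properties, notably $\rho_{\gamma_n}\wto\rho_{\gamma_0}$ in $L^{(d+2)/d}_s$ and lower semicontinuity of the kinetic energy. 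For the Coulomb term I would introduce an auxiliary mass $m'>0$ and invoke Corollary~\ref{duality} combined with the identity $(-\Delta_s+m'^2)^{-1/2}\Phi_{\chi,Y}=\Phi_{(-\Delta+m'^2)^{-1/2}\chi,\,Y}$: the test functions produced lie in $E_Y$ and hence in the dual of $L^{(d+2)/d}_s$, so weak density convergence passes through and gives
\begin{equation*}
D_{m'}(\rho_{\gamma_0}-\mu,\rho_{\gamma_0}-\mu)\leq\liminf_n D_{m'}(\rho_{\gamma_n}-\mu,\rho_{\gamma_n}-\mu).
\end{equation*}
For $n$ large enough that $m_n\leq m'$, monotonicity upgrades the right-hand side to $D_{m_n}(\rho_{\gamma_n}-\mu,\rho_{\gamma_n}-\mu)\leq 2I_{\mu,0}$. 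Sending $m'\to 0^+$ via monotone convergence of the spectral calculus, I would conclude simultaneously that $\rho_{\gamma_0}-\mu\in\D_C$ (so $\gamma_0\in\K_{\mu,C}$) and that $D_0(\rho_{\gamma_0}-\mu,\rho_{\gamma_0}-\mu)\leq\liminf_n D_{m_n}(\rho_{\gamma_n}-\mu,\rho_{\gamma_n}-\mu)$. Combined with the kinetic lower semicontinuity this yields $\E_{\mu,0}(\gamma_0)\leq\liminf_n I_{\mu,m_n}$, and together with the upper semicontinuity and monotonicity all inequalities collapse to equalities: $\gamma_0$ minimizes $I_{\mu,0}$ and $I_{\mu,m_n}\to I_{\mu,0}$. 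The case $m_0>0$ runs the same way but directly uses weak lower semicontinuity of $D_{m_0}$ without the auxiliary $m'$.

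The main obstacle is precisely this double limit in the Coulomb case: the densities $\rho_{\gamma_n}$ converge only weakly in $L^{(d+2)/d}_s$ while the operator $(-\Delta_s+m_n^2)^{-1/2}$ itself degenerates as $m_n\to 0$, so passing to the limit inside the singular quadratic form is not automatic. The auxiliary-mass trick decouples these two limits by first fixing a regularization $m'>0$, applying weak lower semicontinuity there, and only afterwards sending $m'\to 0$; the neutrality constraint on $\gamma_0$ is essential because it ensures that the monotone family $D_{m'}(\rho_{\gamma_0}-\mu,\rho_{\gamma_0}-\mu)$ really converges to $D_0(\rho_{\gamma_0}-\mu,\rho_{\gamma_0}-\mu)$ without any further compensation.
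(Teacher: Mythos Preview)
Your proposal is correct and follows the same overall strategy as the paper: monotonicity of $m\mapsto D_m$ gives the upper bound $I_{\mu,m}\le I_{\mu,0}$, and compactness of the minimizers via Proposition~\ref{prop:weak-compactness} together with lower semicontinuity of both energy terms yields the matching lower bound and the convergence of minimizers.

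The only noteworthy difference is in how you handle lower semicontinuity of the Coulomb term as $m_k\to 0$. The paper treats the two limits $m_k\to 0$ and $\rho_{\gamma_{m_k}}\wto\rho_{\gamma_0}$ simultaneously: it extracts a weak $L^2_s$-limit $z$ of $(-\Delta_s+m_k^2)^{-1/2}(\rho_{\gamma_{m_k}}-\mu)$ and identifies $z=(-\Delta_s)^{-1/2}(\rho_{\gamma_0}-\mu)$ by testing against $\Phi_{\chi,Y}\in E_C$, using that $(-\Delta_s+m_k^2)^{-1/2}\Phi_{\chi,Y}=\Phi_{(-\Delta+m_k^2)^{-1/2}\chi,\,Y}$ converges nicely since $\widehat\chi$ vanishes near the origin. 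Your auxiliary-mass device instead decouples the two limits: freeze $m'>0$, apply the weak lower semicontinuity of $D_{m'}$ exactly as in the proof of Theorem~\ref{th1_y}, use monotonicity to replace $D_{m'}$ by $D_{m_n}$ on the right, and only then send $m'\downarrow 0$ by monotone convergence. Both arguments are valid; yours is a touch more modular and avoids invoking the $E_C$ test space explicitly, while the paper's is slightly shorter once the proof of Theorem~\ref{th1_y} is in hand.
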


\begin{proof}
That $m \mapsto I_{\mu,m}$ is decreasing and continuous on $(0,+\infty)$ is easy to check (the strict monotonicity follows from the existence of minimizers). For $f\in \D_C$ such that $\EE(\int_Q f)=0$, we have
$D_m(f,f)\leq D_0(f,f)$ for all $m\geq0$. It follows that  
$$
\forall\gamma\in \K_{\mu,C},\ \forall m > 0,\qquad \E_{\mu,m}(\gamma)\leq  \E_{\mu,0}(\gamma)<\ii ,\;,
$$
and therefore that 
\begin{equation}\label{inequality}
I_{\mu,m}\leq I_{\mu,0}.
\end{equation}
This proves that $\lim_{m\to 0^+}I_{\mu,m}\leq I_{\mu,0}$.

For $m > 0$, we denote by $\gamma_m$ a minimizer of ~\eqref{problem_m}. We deduce from~\eqref{inequality} that there exists a positive constant $C$ such that, for all $m>0$, $\trv\left(-\Delta \gamma_m\right)\leq C$ and $\left\| (-\Delta_s+m^2)^{-1/2}(\rho_{\gamma_m}-\mu)  \right\|_{L^2_s}\leq C$. Reasoning as in the proof of Theorem~\ref{th1_y}, we can extract a subsequence $(\gamma_{m_k})_{k \in \NN}$ with $m_k\searrow0$, such that there exists $\gamma \in \K$ with 
$$
\tv{\gamma}=\|\mu\|_{L^1_s}, \qquad \trv(-\Delta\gamma)\leq\liminf_{k \to \infty}\trv\left( -\Delta\gamma_{m_k}\right),
$$ 
and
$$
\left\| (-\Delta_s)^{-\frac{1}{2}}\left(\rho_{\gamma}-\mu\right)  \right\|_{L^2_s}\leq \liminf_{k \to \infty}  \left\| (-\Delta_s+m_k^2)^{-\frac{1}{2}}\left(\rho_{\gamma_{m_k}}-\mu\right)  \right\|_{L^2_s}. 
$$
This proves that $\gamma \in \K_{\mu,C}$ and that
$$
I_{\mu,0}\leq\E_{\mu,0}(\gamma)\leq \liminf_{k \to \infty} \E_{\mu,m_k}(\gamma_{m_k})=\lim_{m \to 0}  I_{\mu,m} \le I_{\mu,0},
$$
which concludes the proof of the theorem.
\end{proof}


\subsection{Self-consistent field equation}
\label{sec:SCF}
In this section, we define the mean-field Hamiltonian $H=-\frac12\Delta+V$ associated with the ground state for $m>0$ (Yukawa interaction), and we prove that any ground state of~\eqref{problem_m} satisfies a self-consistent field equation. The same holds formally in the Coulomb case but, unfortunately, we are not able to give a rigorous meaning to the Coulomb potential $V$. For this reason we consider a fixed parameter $m>0$ in the rest of the section.
 
We introduce the stationary mean-field potential $V$ defined by
\begin{equation}
V(\o,x)=\int_\RRd Y_m(x-y)\bra{\rho_{\gamma_m}-\mu}(\o, y)\,dy,
\label{eq:def_V_m}
\end{equation}
where $\rho_{\gamma_m}$ is the common density of the minimizers of~\eqref{problem_m}. The following says that, under the appropriate assumptions on $\mu$, $V$ is a well-defined stationary function such that the associated random Schrödinger operator $H=-\frac12\Delta+V$ is also well defined.

\begin{lemma}[Mean-field random Schrödinger operator]
Let $d\in\{1,2,3\}$, $m>0$ and $0\leq\mu \in L^{1+2/d}_s\cap \D_Y$. Let $\rho_{\gamma_m}$ be the (unique) ground state electronic density for the Yukawa minimization problem~\eqref{problem_m}, obtained in Theorem~\ref{th1_y}, and $V$ the associated mean-field potential defined in~\eqref{eq:def_V_m}. Then we have
\begin{equation}\label{eq:V}
V\in \begin{cases}
L^3_s(L^\ii) &\text{ for $d=1$,}\\
L^2_s(L^\ii) &\text{ for $d=2$,}\\
L^{5/3}_s(L^\ii)\cap L^2_s(L^6)&\text{ for $d=3$},
\end{cases}
\end{equation}
and the random Schrödinger operator
$H:=-\frac12{\Delta}+V$
is almost surely essentially self-adjoint on $C^\ii_c(\R^d)$.
In dimension $d=3$, if $\mu\in L_s^{5/2}\bra{L^1}$, then we also have 
\begin{equation}\label{eq:V-inL52}
V_-\in L^{5/2}_s. 
\end{equation}
\end{lemma}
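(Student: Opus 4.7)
The plan is to base everything on the integral representation $V=Y_m*(\rho_{\gamma_m}-\mu)$ and to combine the a priori bounds on $\rho_{\gamma_m}$ coming from Propositions~\ref{Lieb_T} and~\ref{A2} with the local integrability and exponential decay of the Yukawa kernel $Y_m$, via the convolution inequality in the stationary setting provided by Lemma~\ref{lem:convolution}. Since $\gamma_m$ minimises~\eqref{problem_m}, it has finite kinetic energy and trace per unit volume, so Proposition~\ref{Lieb_T} gives $\rho_{\gamma_m}\in L^{1+2/d}_s$; together with the assumption $\mu\in L^{1+2/d}_s$ this yields $\rho_{\gamma_m}-\mu\in L^{1+2/d}_s$. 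In each dimension $d\in\{1,2,3\}$ the kernel $Y_m$ is nonnegative, smooth away from the origin, exponentially decaying, and belongs to $L^{(d+2)/2}_{\mathrm{loc}}(\RRd)$ (bounded when $d=1$, logarithmic singularity when $d=2$, and $|x|^{-1}\in L^{5/2}_{\mathrm{loc}}(\RR^3)$ because $5/2<3$). Applying Lemma~\ref{lem:convolution} with $t=q=(d+2)/d$ and $p=(d+2)/2$ produces the Young exponent $r=\infty$, and the norms $\|Y_m\|_{L^{(d+2)/2}(Q+k)}$ are summable in $k\in\ZZd$ by the exponential tail, yielding
$$V\in L^{1+2/d}_s(L^\infty).$$
This is exactly the stated bound for $d=1,2$ and the first half of~\eqref{eq:V} in $d=3$.

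For the additional bound $V\in L^2_s(L^6)$ in dimension three, I would use the identity $Y_m*h=W_m*(W_m*h)$, which is an immediate consequence of the two formulas $Y_m*f=|S^{d-1}|(-\Delta_s+m^2)^{-1}f$ and $W_m*f=\sqrt{|S^{d-1}|}(-\Delta_s+m^2)^{-1/2}f$ recorded in Section~\ref{sec:Coulomb-Yukawa}. This gives
$$V=W_m*g,\qquad g:=W_m*(\rho_{\gamma_m}-\mu).$$
Since $\gamma_m\in\K_{\mu,Y}$, the very definition of $\K_{\mu,Y}$ forces $\rho_{\gamma_m}-\mu\in\D_Y$, so that $g\in L^2_s$. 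In $d=3$, $W_m$ behaves like $|x|^{-2}$ at the origin and decays exponentially, hence $W_m\in L^{3/2}_{w}(\RR^3)$ with summable localised weak norms on translates of $Q$. The weak-type version of Lemma~\ref{lem:convolution}, applied with $t=q=2$ and $p=3/2$, then produces $r=6$ and $V=W_m*g\in L^2_s(L^6)$.

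For essential self-adjointness, the integrability~\eqref{eq:V} implies that on a full-measure event $V(\omega,\cdot)\in L^\infty_{\mathrm{loc}}(\RRd)$ when $d\in\{1,2\}$ and $V(\omega,\cdot)\in L^6_{\mathrm{loc}}(\RR^3)$ when $d=3$: for each $k\in\ZZd$ the local norm on $Q+k$ is a.s.\ finite by stationarity, and the countable intersection over $k$ still has full probability. Moreover $V=Y_m*\rho_{\gamma_m}-Y_m*\mu$ is a difference of two nonnegative terms (since $\rho_{\gamma_m},\mu\geq 0$ and $Y_m\geq 0$), so both $V_+\leq Y_m*\rho_{\gamma_m}$ and $V_-\leq Y_m*\mu$ lie in $L^p_{\mathrm{loc}}(\RRd)$ for some $p>d/2$ in every dimension. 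The classical Kato-type criterion for Schr\"odinger operators (see, e.g., \cite{ReeSim2}) then yields that, almost surely, $H=-\tfrac12\Delta+V$ is essentially self-adjoint on $C^\infty_c(\RRd)$.

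Finally, the bound $V_-\in L^{5/2}_s$ in $d=3$ under the stronger assumption $\mu\in L^{5/2}_s(L^1)$ follows from the pointwise estimate $0\leq V_-\leq Y_m*\mu$ together with a direct application of Lemma~\ref{lem:convolution} with $t=5/2$, $q=1$ and $p=5/2$: the summability of $\|Y_m\|_{L^{5/2}(Q+k)}$ (same observation as in paragraph one) gives $1+1/r=2/5+1=7/5$, i.e.\ $r=5/2$, and hence $Y_m*\mu\in L^{5/2}_s(L^{5/2})=L^{5/2}_s$. The main obstacle I expect is the essential self-adjointness step: the integrability claims~\eqref{eq:V} and~\eqref{eq:V-inL52} are essentially bookkeeping with Lemma~\ref{lem:convolution}, but passing from the stationary integrability of $V$ to a full-measure event on which both $V_+$ and $V_-$ lie in the correct local Lebesgue class, and invoking the right form-boundedness/KLMN criterion, requires some care, especially in $d=3$ where $V_-$ is not a priori locally bounded.
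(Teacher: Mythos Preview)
Your treatment of the integrability claims~\eqref{eq:V} and~\eqref{eq:V-inL52} is correct and essentially matches the paper's own argument: both rely on Lemma~\ref{lem:convolution}, the Lieb--Thirring bound $\rho_{\gamma_m}\in L^{1+2/d}_s$, and the factorisation $V=W_m\ast\big(W_m\ast(\rho_{\gamma_m}-\mu)\big)$ with inner factor in $L^2_s$ (from $\rho_{\gamma_m}-\mu\in\D_Y$). Your choice to obtain $V\in L^{1+2/d}_s(L^\infty)$ directly from $Y_m\ast(\rho_{\gamma_m}-\mu)$ rather than through the $W_m$ factorisation is a harmless variant; the exponents you check are right in all three dimensions, and the $V_-\le Y_m\ast\mu$ trick for~\eqref{eq:V-inL52} is exactly what is needed.

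The genuine gap is in the essential self-adjointness step. Knowing that, almost surely, $V(\omega,\cdot)\in L^p_{\rm loc}(\R^d)$ for some $p>d/2$ (or even $L^\infty_{\rm loc}$, which in fact follows from your own $L^{1+2/d}_s(L^\infty)$ bound) is \emph{not} sufficient to conclude that $-\tfrac12\Delta+V(\omega,\cdot)$ is essentially self-adjoint on $C^\infty_c(\R^d)$: the potential $V(x)=-|x|^2$ is smooth and locally bounded, yet $-\Delta-|x|^2$ is \emph{not} essentially self-adjoint. The Kato/KLMN criteria you invoke require a \emph{global} control on $V_-$ (e.g.\ $V_-\in L^p(\R^d)+L^\infty(\R^d)$, or Kato class, or relative form bound $<1$), and the passage from $V\in L^r_s(L^p)$ to almost-sure local integrability provides no such bound on the growth of $\|V_-(\omega,\cdot)\|_{L^p(Q+k)}$ as $|k|\to\infty$. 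The paper closes this gap by citing a result specific to ergodic potentials (\cite[Proposition~V.3.2]{carmona}), which states that $-\tfrac12\Delta+V$ is a.s.\ essentially self-adjoint on $C^\infty_c(\R^d)$ whenever $V\in L^r_s(L^p)$ with $p>2$ and $r>dp/(2(p-2))$; the stationarity is used there (via a Borel--Cantelli type argument) to control the growth of the local norms at spatial infinity. The paper then simply checks admissible pairs $(p,r)$ in each dimension from the integrability~\eqref{eq:V} already established. You should replace your deterministic Kato argument by this reference, or else supply the missing almost-sure growth bound on $V_-$ yourself.
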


Let us emphasize that $H$ is a uniquely defined operator since $\rho_{\gamma_m}$ is itself unique.
Note that under the sole assumption that $\mu\in L^{1+2/d}_s$ in dimensions $d=1,2$ we have $\mu\in\D_Y$ by Corollary~\ref{cor:in_D_Y}. In dimension $d=3$, the additional hypothesis $\mu\in L^{5/2}_s(L^1)$ ensures that $\mu\in \D_Y$, by Corollary~\ref{cor:in_D_Y} and the fact that $L_s^{5/2}\bra{L^1}\cap L_s^{5/3}\subset L_s^{2}\bra{L^{6/5}}$. 

\begin{proof}
As we know that $\rho_{\gamma_m}\in L^{1+2/d}_s$,~\eqref{eq:V} and~\eqref{eq:V-inL52} follow from Lemma~\ref{lem:convolution} and the fact that $V= c\,W_m\ast(W_m\ast(\rho_{\gamma_m}-\mu))$ with $W_m\ast(\rho_{\gamma_m}-\mu)\in L^2_s$ since $\rho_{\gamma_m}-\mu\in \D_Y$. We know from~\cite[Proposition V.3.2, p.258]{carmona} that $-\frac12\Delta+V$ is essentially self-adjoint on $C^\ii_c(\R^d)$ when $V\in L^r_s(L^p)$ for some $p>2$ and $r>dp/(2(p-2))$. In our case we can apply this with $(p,r)=(3,3)$ for $d=1$, $(p,r)=(5,2)$ for $d=2$ and $(p,r)=(21,5/3)$ for $d=3$.
\end{proof}

The following now gives the self-consistent equation satisfied by a minimizer $\gamma_m$.

\begin{proposition}[Self-consistent equation]\label{forme_minimiseur}
Let $d\in\{1,2,3\}$, $m>0$ and $0\leq\mu \in L^{1+2/d}_s$. Suppose also that $\mu\in L_s^{5/2}\bra{L^1}$ if $d=3$.
Then there exists $\epsilon_{\rm F}\in\R$, called the \emph{Fermi level}, such that any minimizer $\gamma_m$ of the Yukawa minimization problem~\eqref{problem_m} is of the form 
$$
\gamma_m=\1_{\left(-\infty, \epsilon_{\rm F}\right)}(H)+\delta,
$$
for some ergodic self-adjoint operator $\delta$ satisfying $0\leq \delta\leq \1_{\left\{ \epsilon_{\rm F}\right\}}\left( H\right)$.
\end{proposition}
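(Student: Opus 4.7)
The strategy is to derive an Euler-Lagrange inequality for the constrained minimization of $\E_{\mu,m}$ on $\K_{\mu,Y}$, identify the Fermi level $\epsilon_{\rm F}$ as a Lagrange multiplier for the trace constraint, and conclude via the variational characterization of spectral projections (Proposition~\ref{prop:var_charac}).

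First I would exploit convexity of $\K_{\mu,Y}$: for any $\gamma'\in\K_{\mu,Y}$ and $t\in[0,1]$, the operator $\gamma_t=(1-t)\gamma_m+t\gamma'$ is again in $\K_{\mu,Y}$. Expanding the energy and using the identity $D_m(\rho_{\gamma_m}-\mu,f)=\EE\bra{\int_Q V f}$ with $V$ as in~\eqref{eq:def_V_m}, one gets
$$\E_{\mu,m}(\gamma_t)-\E_{\mu,m}(\gamma_m)=t\,\trv\bra{H(\gamma'-\gamma_m)}+\frac{t^2}{2}D_m(\rho_{\gamma'}-\rho_{\gamma_m},\rho_{\gamma'}-\rho_{\gamma_m}),$$
each piece of which is meaningful thanks to~\eqref{eq:V}, the Lieb-Thirring bound $\rho_{\gamma'}\in L^{1+2/d}_s$ and H\"older. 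Letting $t\to 0^+$ and using minimality of $\gamma_m$ yields $\trv(H(\gamma'-\gamma_m))\geq 0$ for all $\gamma'\in\K_{\mu,Y}$, equivalently $\trv((H-\epsilon_{\rm F})(\gamma'-\gamma_m))\geq 0$ for any constant $\epsilon_{\rm F}\in\R$, since $\trv(\gamma')=\trv(\gamma_m)=N:=\EE\bra{\int_Q\mu}$.

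Next I would fix $\epsilon_{\rm F}$ as a Lagrange multiplier. By Proposition~\ref{proj}, $f(\lambda):=\trv(\1_{(-\ii,\lambda)}(H))$ is finite for every $\lambda\in\R$; it is non-decreasing and left-continuous by the spectral theorem, and by monotone convergence for the trace per unit volume one has $f(-\ii)=0$ and $f(+\ii)=\trv(I)=+\ii$ ($L^2(Q)$ being infinite-dimensional). Hence there is $\epsilon_{\rm F}\in\R$ with $f(\epsilon_{\rm F})\leq N\leq f(\epsilon_{\rm F}^+)$, and $\theta\in[0,1]$ can be chosen so that
$$\gamma_0:=\1_{(-\ii,\epsilon_{\rm F})}(H)+\theta\,\1_{\{\epsilon_{\rm F}\}}(H)$$
satisfies $\trv(\gamma_0)=N$. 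As a Borel function of the ergodic operator $H$, $\gamma_0$ is itself ergodic, and it lies in $\Sv_{1,1}$ by Proposition~\ref{proj}. The remaining point for $\gamma_0\in\K_{\mu,Y}$ is $W_m*\rho_{\gamma_0}\in L^2_s$, which I would handle with the cut-off $V_n=\max(V,-n)$ used in the proof of Proposition~\ref{proj}: the Feynman-Kac bound gives $\rho_{\gamma_0^{(n)}}\in L^\ii(\O\times\RRd)$, hence $W_m*\rho_{\gamma_0^{(n)}}\in L^2_s$, and the a.s.\ strong-operator convergence $h(H_n)\to h(H)$ for bounded Borel $h$, combined with the weak compactness statement of Proposition~\ref{prop:weak-compactness}, allows one to pass to the limit $n\to\ii$.

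Finally, Proposition~\ref{prop:var_charac} gives $\trv((H-\epsilon_{\rm F})\gamma_0)\leq\trv((H-\epsilon_{\rm F})\gamma_m)$, while the first-order inequality applied to $\gamma'=\gamma_0$ gives the reverse, whence equality. So $\gamma_m$ itself minimizes the variational problem of Proposition~\ref{prop:var_charac} at level $\epsilon_{\rm F}$, and the uniqueness-up-to-kernel statement there forces $\gamma_m=\1_{(-\ii,\epsilon_{\rm F})}(H)+\delta$ with $0\leq\delta\leq\1_{\{\epsilon_{\rm F}\}}(H)$. The real obstacle in this scheme is the regularity check $W_m*\rho_{\gamma_0}\in L^2_s$ needed for $\gamma_0\in\K_{\mu,Y}$; the rest is standard convex-Lagrangian manipulation using tools already developed in Sections~\ref{sec:electronic_states} and~\ref{sec:Coulomb-Yukawa}.
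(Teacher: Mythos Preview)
Your overall plan (Euler inequality, Lagrange multiplier, Proposition~\ref{prop:var_charac}) matches the paper's, but the execution diverges at exactly the point you flag as the ``real obstacle''. The paper \emph{does not} try to verify that the spectral test state $\gamma_0=\1_{(-\ii,\epsilon_{\rm F})}(H)$ (or your $\gamma_0$) satisfies $\rho_{\gamma_0}-\mu\in\D_Y$. Instead, it rewrites the linear part of the Euler inequality as $\tfrac12\trv(-\Delta(\gamma'-\gamma))+\EE(\int_Q V(\rho_{\gamma'}-\rho_\gamma))$ and observes that, thanks to $V\in L^{1+d/2}_s$ and the Lieb--Thirring bound $\rho_{\gamma'}\in L^{1+2/d}_s$, this expression is well defined (in $(-\ii,+\ii]$) for \emph{every} $\gamma'\in\K$, not only those in $\K_{\mu,Y}$. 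The Euler inequality then extends by continuity to all $\gamma'\in\K$, so one may plug in $\gamma'=\1_{(-\ii,\epsilon_{\rm F})}(H)\in\K$ directly, with no need for $\D_Y$ membership. Your approximation scheme via $V_n=\max(V,-n)$ does not close: Feynman--Kac gives $\rho_{\gamma_0^{(n)}}\in L^\ii$, but the implied bound on $\|W_m*\rho_{\gamma_0^{(n)}}\|_{L^2_s}$ depends on $n$, and Proposition~\ref{prop:weak-compactness} only yields weak $L^{1+2/d}_s$ convergence of densities, not control of the $\D_Y$ seminorm in the limit.

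There is also a difference in how $\epsilon_{\rm F}$ is produced. You read it off the integrated density of states $\lambda\mapsto\trv(\1_{(-\ii,\lambda)}(H))$ so as to build a $\gamma_0$ with the correct trace. The paper instead defines the convex function
\[
E(q)=\inf_{\substack{\gamma'\in\K,\ \trv(\gamma')=q\\ \rho_{\gamma'}\in\D_Y}}\Big(\tfrac12\trv(-\Delta(\gamma'-\gamma))+D_m(\rho_{\gamma'}-\rho_\gamma,\rho_\gamma-\mu)\Big)
\]
and takes $\epsilon_{\rm F}$ in the subdifferential of $E$ at $q=\EE(\int_Q\mu)$. This choice is what guarantees the Euler inequality with the $-\epsilon_{\rm F}\,\trv(\gamma'-\gamma)$ term for \emph{all} values of $\trv(\gamma')$, which is needed once one drops the trace constraint and works in all of $\K$. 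Your route keeps $\trv(\gamma')=N$ throughout and therefore needs $\gamma_0\in\K_{\mu,Y}$; the paper's route removes both the trace and $\D_Y$ constraints before testing against the spectral projection, which is why it avoids the obstacle altogether.
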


Since $H$ is uniquely defined, we deduce that two different minimizers need to have different operators $\delta$'s at the Fermi level $\epsilon_{\rm F}$. In particular, when $\epsilon_{\rm F}$ is not an eigenvalue of $H$, we deduce that $\gamma_m=\1_{(-\ii,\epsilon_{\rm F})}(H)$ is the unique minimizer of~\eqref{problem_m}.

\begin{proof}
As $\mu\in \D_Y$,~\eqref{problem_m} has a minimizer $\gamma$ by Theorem~\ref{th1_y}. 
The Euler inequality associated with the convex optimization problem~\eqref{problem_m} then reads:
$$
\forall \gamma'\in \K_{\mu,Y},\qquad  \frac12\tv{-\Delta(\gamma'-\gamma)}+D_m(\rho_{\gamma'}-\rho_{\gamma},\rho_\gamma-\mu) \geq0.
$$
For $q \in \R_+$, we set 
$$
E\bra{q}=\inf_{\substack{\gamma'\in\K \atop \tv{\gamma'}=q}\atop \rho_{\gamma'}\in\D_Y}\left(\frac12\tv{-\Delta(\gamma'-\gamma)}+D_m(\rho_{\gamma'}-\rho_{\gamma},\rho_\gamma-\mu)\right).
$$
It is easily checked that the function $E$ is convex on $\R_+$, hence left and right differentiable everywhere. Also, for any 
\begin{equation}\label{def_eps}
\epsilon_{\rm F}\in \left[ E'\bra{\EE\bra{\int_Q\mu}-0},E' \bra{\EE\bra{\int_Q\mu}+0}\right], 
\end{equation}
where $E'(\EE(\int_Q\mu)-0)$ and $E'(\EE(\int_Q\mu)+0)$ respectively denote the left limit and the right limit of the non-decreasing function $E'$ at $\EE(\int_Q\mu)$, we have
$$
\frac12\tv{-\Delta(\gamma'-\gamma)}+D_m(\rho_{\gamma'}-\rho_{\gamma},\rho_\gamma-\mu) - \epsilon_{\rm F}\trv(\gamma'-\gamma) \geq 0
$$
for any ergodic operator $\gamma'\in\K$ such that $\rho_{\gamma'}\in\D_Y$. As $\rho_\gamma\in \D_Y$, $V_\mu=Y_m*\mu\in L^{1+d/2}_s$, and $\rho_{\gamma'}\in L^{1+2/d}_s$ for any $\gamma'\in \K$, the above inequality actually holds for any $\gamma'\in \K$. In addition,
$$D_m(\rho_{\gamma'}-\rho_{\gamma},\rho_\gamma-\mu)=\EE\bra{\int_QV(\rho_{\gamma'}-\rho_\gamma)}$$
in $\RR_+\cup\set{+\ii}$. Taking now $\gamma'=\1_{(-\ii,\epsilon_{\rm F})}(H)$, which belongs to $\K$ by Proposition~\ref{proj}, and using Proposition~\ref{prop:var_charac}, leads to
\begin{multline*}
0\leq \frac12\tv{-\Delta(\gamma'-\gamma)}+D_m(\rho_{\gamma'}-\rho_{\gamma},\rho_\gamma-\mu)-\epsilon_{\rm F}\trv(\gamma'-\gamma)\\ 
\qquad\leq -\tv{|H-\epsilon_{\rm F}|^{1/2}(\gamma'-\gamma)^2|H-\epsilon_{\rm F}|^{1/2}}\leq0.
\end{multline*}
Hence, $\gamma=\gamma'+\delta$ with $\delta$ as in the statement. 
\end{proof}

The following result is a consequence of Proposition~\ref{forme_minimiseur} and of the Feynman-Kac formula. 

\begin{corollary}\label{corollary}
If $\mu\in L^\ii\bra{\O\times \RRd}$, then, for each $m>0$, the common density $\rho_m$ of the minimizers of the Yukawa minimization problem~\eqref{problem_m} is in $L^\ii\bra{\O\times \RRd}$.
\end{corollary}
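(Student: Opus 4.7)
The plan is to combine the explicit form of $\gamma_m$ provided by Proposition~\ref{forme_minimiseur} with the Feynman-Kac bound already used in the proof of Proposition~\ref{proj}. The key steps, in order, are: (i) control $V_-$ in $L^\infty$; (ii) deduce a pointwise bound on $\rho_{e^{-tH}}$; (iii) transfer this bound to $\rho_{\gamma_m}$ using $\gamma_m \leq \1_{(-\ii,\epsilon_{\rm F}]}(H) \leq e^{t\epsilon_{\rm F}}e^{-tH}$.

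For step (i), the mean-field potential is $V = Y_m*(\rho_{\gamma_m}-\mu)$. Since $\rho_{\gamma_m}\geq 0$ and, for $m>0$, the Yukawa kernel $Y_m$ is nonnegative in every dimension $d\in\{1,2,3\}$, the pointwise inequality $V\geq -Y_m*\mu$ holds, so $V_-\leq Y_m*\mu$ a.s.\ and a.e. An explicit check shows $Y_m\in L^1(\R^d)$ for $m>0$ (with $\|Y_m\|_{L^1}=4\pi/m^2$ for $d=3$, $2\pi/m^2$ for $d=2$, $2/m^2$ for $d=1$). Combined with the assumption $\mu\in L^\ii(\O\times\R^d)$, Young's inequality (a stationary version of which is contained in Lemma~\ref{lem:convolution}) yields
$$\|V_-\|_{L^\ii(\O\times\R^d)}\leq \|Y_m*\mu\|_{L^\ii(\O\times\R^d)}\leq \|Y_m\|_{L^1(\R^d)}\,\|\mu\|_{L^\ii(\O\times\R^d)}.$$

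For step (ii), the Feynman-Kac formula used in the proof of Proposition~\ref{proj} (with a trivial adjustment of constants to account for the prefactor $-\tfrac12\Delta$ instead of $-\Delta$) provides
$$\rho_{e^{-tH}}(\o,x)\leq \frac{e^{t\|V_-\|_{L^\ii}}}{(2\pi t)^{d/2}}\quad \text{a.s.\ and a.e., for all } t>0.$$
For step (iii), Proposition~\ref{forme_minimiseur} gives $\gamma_m=\1_{(-\ii,\epsilon_{\rm F})}(H)+\delta$ with $0\leq\delta\leq\1_{\{\epsilon_{\rm F}\}}(H)$, whence the operator inequality $\gamma_m\leq \1_{(-\ii,\epsilon_{\rm F}]}(H)\leq e^{t\epsilon_{\rm F}}e^{-tH}$. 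The defining relation $\tr(\chi A\chi)=\int\chi^2\rho_A$ for nonnegative $\chi\in L^\ii_c(\R^d)$ shows that if $A\leq B$ are positive ergodic locally trace class operators, then $\rho_A\leq\rho_B$ a.e.\ and a.s. Applying this to $\gamma_m\leq e^{t\epsilon_{\rm F}}e^{-tH}$ yields
$$\rho_m(\o,x)\leq \frac{e^{t(\epsilon_{\rm F}+\|V_-\|_{L^\ii})}}{(2\pi t)^{d/2}}\quad \text{a.s.\ and a.e.},$$
and choosing, e.g., $t=1$ concludes the proof.

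There is no real obstacle here: the two delicate ingredients (the self-consistent form of the minimizer and the Feynman-Kac kernel estimate) are already established in the paper, and the essential novelty is the simple observation that positivity of both $Y_m$ and $\rho_{\gamma_m}$ reduces the bound on $V_-$ to a bound on $Y_m*\mu$, which is trivial under the $L^\ii$ hypothesis on $\mu$. The only routine verification is that passing from the operator inequality to a density inequality is legitimate for ergodic locally trace class operators, which is immediate from Proposition~\ref{kernel_er}.
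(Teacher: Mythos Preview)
Your proof is correct and follows exactly the approach the paper indicates: the paper states only that the corollary ``is a consequence of Proposition~\ref{forme_minimiseur} and of the Feynman-Kac formula,'' and you have supplied precisely these details, including the key observation that $V_-\le Y_m*\mu$ by positivity of $Y_m$ and $\rho_{\gamma_m}$.
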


 
\section{Thermodynamic limit in the Yukawa case}\label{limit_thermo}

The purpose of this section is to provide a mathematical justification of the Yukawa model~\eqref{problem_m} by means of a thermodynamic limit. So far, we did not manage to extend the results below to the Coulomb case.

Let us quickly recall that the thermodynamic limit problem consists in studying the behavior of the energy per unit volume 
(as well as, possibly, the ground state itself and some other properties like the mean-field potential, etc) when the system is confined to a box with chosen boundary conditions and when the size of the box is increased towards infinity.

For a perfect (unperturbed) crystal, the existence of the limit in the many-body case goes back to Fefferman~\cite{Fefferman-85}, after the fundamental work of Lieb and Lebowitz~\cite{LieLeb-72}. A new proof of this recently appeared in~\cite{HaiLewSol_2-09}. However, for the many-body Schrödinger equation, the value of the limiting energy per unit volume is unknown. For effective theories like of Thomas-Fermi or Hartree-Fock type, it is often possible to identify the limit and to prove the convergence of ground states. In~\cite{LiebSimon}, Lieb and Simon prove that, for the Thomas-Fermi model, the
energy per unit volume and the ground state density of a perfect crystal
are obtained by solving a certain periodic Thomas-Fermi model on the unit
cell of the crystal. The same conclusion has been reached by Catto, Le
Bris and Lions for the Thomas-Fermi-von Weizs\"acker model \cite{CLL_book}, and
for the reduced Hartree-Fock (rHF) model~\cite{CLL_periodic} we focus on in the
present work. 

In the stochastic case, Veniaminov has initiated in~\cite{Veniaminov} the study of the thermodynamic limit of random quantum systems, but with short range interactions. The case of a random Coulomb crystal was recently tackled by Blanc and the third author of this article in~\cite{BlaLew-12}. 
Blanc, Le Bris and Lions had already considered Thomas-Fermi like models in~\cite{BLBL2007}, for which they could also identify the limit.

\medskip

Here we follow~\cite{CDL} and we consider the so-called \emph{supercell model}. We put the system in a box $\Gamma_L=\left[-{L}/{2}, {L}/{2} \right)^d$ of side $L\in\NN\setminus\set{0}$, with periodic boundary conditions. When $m>0$, we show that the ground states converge, when $L$ goes to infinity, to a ground state of problem~\eqref{problem_m} (up to extraction and in a sense that will be made precise later).

Let $m>0$ be fixed for the rest of the section. We introduce the Hilbert space 
$$
L^2_{\rm per}\bra{\Gamma_L}=\set{ \phi\in L^2_{\rm loc}\bra{\RRd}\;| \;\phi\; \bra{L\ZZ}^d\mbox{-periodic} }.
$$
The Fourier coefficients of a function $f\in L^2_{\rm per}\bra{\Gamma_L}$ are defined by
$$
c_K^L\bra{f}=\frac{1}{L^\frac{d}{2}}\int_{\Gamma_L}f\bra{x}e^{-iK\cdot x}\,dx,\qquad\forall K\in \left(\frac{2\pi}{L}\ZZ\right)^d.
$$
We denote by $-\Delta_L$ and $P_{j,L}$, $1\leq j\leq d$, the self-adjoint operators on $L^2_{\rm per}\bra{\Gamma_L}$ defined by 
$$
c_K^L\bra{-\Delta_L f}=\av{K}^2c_K^L\bra{f}, \quad\mbox{and}\quad
c_K^L\bra{P_{j,L} f}=k_jc_K^L\bra{f}, \quad \forall K\in \left(\frac{2\pi}{L}\ZZ\right)^d.
$$
For $k\in\ZZd$, we denote as before by $U_k$ the translation operators on $L^2_{\rm loc}\bra{\RRd}$ defined by
$U_kf\bra{x}=f\bra{x+k}$.
For any $f,g\in L^2_{\rm per}\bra{\Gamma_L}$, we set
\begin{align}\label{DL}
D_{m,L}(f,g)&= \av{S^{d-1}}\left\langle\bra{-{\Delta}_L+m^2}^{-\frac{1}{2}}f, \bra{-{\Delta}_L+m^2}^{-\frac{1}{2}}g\right\rangle_{L^2_{\rm per}\bra{\Gamma_L}}\\
&= \sum_{K\in \left(\frac{2\pi}{L}\ZZ\right)^d }\frac{\av{S^{d-1}}}{\left|K\right|^2+m^2}\overline{c_K^L(f)}c_K^L(g)=\int_{\Gamma_L}\int_{\RRd}Y_{m}(x-y)f(x)g(y)\,dx\,dy\nonumber.
\end{align}
Denoting by $\S_{1,L}$ (resp. ${\mathcal S}_L$) the space of the trace class (resp. bounded self-adjoint) operators on $L^2_{\rm per}(\Gamma_L)$, the set of admissible electronic states for the supercell model is then
$$
 \K_L = \left\{ \gamma_L\in\S_{1,L} \cap {\mathcal S}_L,\;  0\leq \gamma\leq 1,\; \tr_{L^2_{\rm per}\bra{\Gamma_L}}\bra{-{\Delta_L} \gamma_L}< \infty \right\}.
$$
For any $\o\in\O$, we denote by $ \mu_L(\o,\cdot) $ the $\bra{L\ZZ}^d$-periodic nuclear distribution which is equal to $\mu(\o,\cdot)$ on $\Gamma_L$, and by $\E_{\mu,m}^L$ the ($\omega$-dependent) energy functional defined on $\K_L$ by 
$$
\E_{\mu,m}^L(\omega,\gamma_L)=\frac{1}{2}\tr_{L^2_{\rm per}\bra{\Gamma_L}}\bra{-{\Delta_L} \gamma_L}+\frac{1}{2}D_{m,L}\big(\rho_{\gamma_L}-\mu_L(\o,\cdot)\,,\,\rho_{\gamma_L}-\mu_L(\o,\cdot)\big).
$$
Let $\epsilon_{\rm F}$ be as in Proposition~\ref{forme_minimiseur}. For any $\omega \in \Omega$, the 
ground state energy of the system in the box of size $L$ with Fermi level $\epsilon_{\rm F}$ is given by 
\begin{equation}\label{finite_problem}
\boxed{I_{\mu,m,\epsilon_{\rm F}}^L(\omega)=\inf\left\{ \E_{\mu,m}^L(\omega,\gamma_L)-\epsilon_{\rm F}\tr_{L^2_{\rm per}\bra{\Gamma_L}}\bra{\gamma_L},\; \gamma_L\in \K_L\right\}.}
\end{equation}

\begin{proposition}
Let $\mu\in L^2_s$. For each $L\in\NN\setminus\set{0}$,~\eqref{finite_problem} has a minimizer, and all the minimizers of~\eqref{finite_problem} share the same density. 
\end{proposition}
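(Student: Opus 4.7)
The plan is to show that $I_{\mu,m,\epsilon_{\rm F}}^L(\omega)$ is finite and attained by a weak-$\ast$ compactness argument in the supercell setting, and then to use strict convexity of the Yukawa quadratic form in $\rho$ to obtain uniqueness of the ground-state density. Throughout, $\omega$ is fixed, so we argue deterministically on the box $\Gamma_L$ with periodic boundary conditions.

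First I would establish coercivity. The key observation is that the $K=0$ Fourier mode in the expression of $D_{m,L}$ in~\eqref{DL} yields
$$D_{m,L}(f,f) \geq \frac{|S^{d-1}|}{m^2\,L^d}\left(\int_{\Gamma_L} f\right)^{\!2},$$
so, writing $N_L=\tr_{L^2_{\rm per}(\Gamma_L)}(\gamma_L)=\int_{\Gamma_L}\rho_{\gamma_L}$ and $Q_L=\int_{\Gamma_L}\mu_L(\omega,\cdot)$, we get
$$\E_{\mu,m}^L(\omega,\gamma_L)-\epsilon_{\rm F}N_L\;\geq\;\frac{1}{2}\tr(-\Delta_L\gamma_L)+\frac{|S^{d-1}|}{2m^2L^d}(N_L-Q_L)^2-\epsilon_{\rm F}N_L.$$
Since the quadratic-in-$N_L$ term dominates the linear one, this bounds the functional from below and shows that along any minimizing sequence $(\gamma_L^n)\subset\K_L$ both $\tr(-\Delta_L\gamma_L^n)$ and $N_L^n=\tr(\gamma_L^n)$ remain bounded. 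In particular the sequence $(1-\Delta_L)^{1/2}\gamma_L^n(1-\Delta_L)^{1/2}$ is bounded in $\S_{1,L}$.

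Next I would extract a limit. Because $(1-\Delta_L)^{-1/2}$ is compact on $L^2_{\rm per}(\Gamma_L)$ (its Fourier multiplier $(1+|K|^2)^{-1/2}$ vanishes at infinity on the discrete lattice $(2\pi L^{-1}\Z)^d$), a standard diagonal extraction gives $\gamma_L^n\rightharpoonup\gamma_L$ in $\S_{1,L}$ with respect to the trace-class weak topology, and the same bound forces $\gamma_L^n\to\gamma_L$ in $\S_{p,L}$ for every $1<p\leq\infty$ after passing to a subsequence. The constraint $0\leq\gamma_L\leq 1$ passes to the limit, and $\tr(-\Delta_L\gamma_L)\leq\liminf_n\tr(-\Delta_L\gamma_L^n)$ by Fatou on the non-negative sequence of diagonal Fourier coefficients of $P_{j,L}\gamma_L^n P_{j,L}$. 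By the Lieb--Thirring inequality applied on $\Gamma_L$, the densities $\rho_{\gamma_L^n}$ are bounded in $L^{1+2/d}(\Gamma_L)$, while Hoffmann--Ostenhof bounds $\sqrt{\rho_{\gamma_L^n}}$ in $H^1_{\rm per}(\Gamma_L)$; by the compact Rellich embedding on the flat torus, $\rho_{\gamma_L^n}\to\rho_{\gamma_L}$ strongly in $L^2(\Gamma_L)$ (along a subsequence). Since $D_{m,L}(\cdot,\cdot)$ is continuous on $L^2(\Gamma_L)\times L^2(\Gamma_L)$ when $m>0$ (its kernel on $\Gamma_L$ is bounded by $|S^{d-1}|m^{-2}L^d$ via the $K=0$ mode, and more precisely it is equivalent to the $H^{-1}$ norm), one passes to the limit in the Yukawa term:
$$D_{m,L}(\rho_{\gamma_L}-\mu_L,\rho_{\gamma_L}-\mu_L)=\lim_n D_{m,L}(\rho_{\gamma_L^n}-\mu_L,\rho_{\gamma_L^n}-\mu_L).$$
Together with lower semicontinuity of the kinetic term and continuity of $\gamma_L\mapsto\tr(\gamma_L)$ (which follows from the strong $L^2$ convergence of the densities, since $\tr(\gamma_L^n)=\int_{\Gamma_L}\rho_{\gamma_L^n}$), this yields
$$\E_{\mu,m}^L(\omega,\gamma_L)-\epsilon_{\rm F}\tr(\gamma_L)\;\leq\;\liminf_n\!\left(\E_{\mu,m}^L(\omega,\gamma_L^n)-\epsilon_{\rm F}\tr(\gamma_L^n)\right)=I_{\mu,m,\epsilon_{\rm F}}^L(\omega),$$
so $\gamma_L\in\K_L$ is a minimizer.

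For uniqueness of the density, I would use strict convexity. The functional decomposes as
$$\E_{\mu,m}^L(\omega,\gamma_L)-\epsilon_{\rm F}\tr(\gamma_L)=\underbrace{\tfrac12\tr(-\Delta_L\gamma_L)-\epsilon_{\rm F}\tr(\gamma_L)}_{\text{linear in }\gamma_L}+\tfrac12 D_{m,L}(\rho_{\gamma_L}-\mu_L,\rho_{\gamma_L}-\mu_L),$$
and the second term is a strictly convex quadratic form of $\rho_{\gamma_L}$ since, by~\eqref{DL}, $D_{m,L}(f,f)=0$ forces all Fourier coefficients of $f$ to vanish. If $\gamma_L^{(1)}$ and $\gamma_L^{(2)}$ are two minimizers, the standard parallelogram identity
$$\E_{\mu,m}^L\!\Bigl(\omega,\tfrac{\gamma_L^{(1)}+\gamma_L^{(2)}}{2}\Bigr)-\epsilon_{\rm F}\tr\!\Bigl(\tfrac{\gamma_L^{(1)}+\gamma_L^{(2)}}{2}\Bigr)=I_{\mu,m,\epsilon_{\rm F}}^L(\omega)-\tfrac14 D_{m,L}(\rho_{\gamma_L^{(1)}}-\rho_{\gamma_L^{(2)}},\rho_{\gamma_L^{(1)}}-\rho_{\gamma_L^{(2)}})$$
combined with the fact that $(\gamma_L^{(1)}+\gamma_L^{(2)})/2\in\K_L$ forces $D_{m,L}(\rho_{\gamma_L^{(1)}}-\rho_{\gamma_L^{(2)}},\rho_{\gamma_L^{(1)}}-\rho_{\gamma_L^{(2)}})=0$, hence $\rho_{\gamma_L^{(1)}}=\rho_{\gamma_L^{(2)}}$.

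The main technical point is really the compactness of the map $\gamma_L^n\mapsto\rho_{\gamma_L^n}$ in a topology strong enough to pass to the limit in $D_{m,L}$; this is comparatively easy here because $\Gamma_L$ is bounded and $(1-\Delta_L)^{-1}$ is compact, so all the uniform integrability difficulties present in the ergodic setting of Section~\ref{sec:electronic_states} disappear.
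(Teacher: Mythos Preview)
Your argument is correct and is precisely the direct-method proof that the paper invokes by citation: the paper's own proof reads, in full, ``The proof follows the same lines as the proof of \cite[Theorem 2.1]{CLL_periodic}, replacing the periodic Coulomb kernel by the periodic Yukawa kernel $Y_{m,L}(x)=\sum_{k\in(L\ZZ)^d}Y_m(x-k)$.'' Your coercivity step via the $K=0$ Fourier mode is the one place where the grand-canonical Yukawa argument genuinely differs from the Coulomb one in~\cite{CLL_periodic} (where coercivity comes instead from a fixed-charge constraint), and you handle it correctly.

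One small caveat: the assertion that $\gamma_L^n\to\gamma_L$ strongly in $\S_{p,L}$ for every $1<p\le\infty$ is not obviously justified as stated, but you never actually use it. The identification $\lim_n\rho_{\gamma_L^n}=\rho_{\gamma_L}$ already follows from your Rellich argument combined with the mechanism of Proposition~\ref{prop:weak-compactness} (on the torus one has $(1-\Delta_L)^{-1/2}\in\S_{1+d,L}$, which plays the role of $\1_B(1-\Delta)^{-1/2}\in\S_{1+d}$ there), so the loose step can simply be dropped.
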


\begin{proof}
 The proof follows the same lines as the proof of \cite[Theorem 2.1]{CLL_periodic}, replacing the periodic Coulomb kernel by the periodic Yukawa  kernel $Y_{m,L}\bra{x}=\sum_{k\in \bra{L\ZZ}^d}Y_m(x-k)$. 
\end{proof}

On the other hand, the ground state energy of the full space ergodic problem with Fermi level $\epsilon_{\rm F}$ is defined by
\begin{equation}\label{problem_cp}
I_{\mu,m,\epsilon_{\rm F}}=\inf\left\{ \E_{\mu,m}(\gamma)-\epsilon_{\rm F}\tv{\gamma},\;  \gamma\in \K_Y\right\}, 
\end{equation}
where $\E_{\mu,m}$ is given by~\eqref{ee}, 
\begin{equation}
\K_Y:= \left\{ \gamma\in\Sv_{1,1}\cap \underline{\mathcal S },\;  0\leq \gamma\leq 1\; \mbox{a.s.},\; \rho_\gamma-\mu \in \D_{Y} \right\}
\label{eq:def_K_Y}
\end{equation}
(the neutrality constraint has been removed compared to $\K_{\mu,Y}$ defined before in~\eqref{eq:def_K_mu}). It is a classical result of convex optimization that \eqref{problem_m} and \eqref{problem_cp} have the same minimizers.

\begin{theorem}[Thermodynamic limit for $m>0$]\label{limite_thermo_m}
Let $\mu\in L^2_s$. We have
$$
\lim_{L\to\ii}\frac{I_{\mu,m,\epsilon_{\rm F}}^L(\omega)}{L^d}\,=\, I_{\mu,m,\epsilon_{\rm F}} \quad \mbox{in } L^1\bra{\O}.
$$
\end{theorem}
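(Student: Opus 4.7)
The plan is to prove the two matching almost sure inequalities $\limsup_L L^{-d}I^L_{\mu,m,\epsilon_{\rm F}}(\omega)\le I_{\mu,m,\epsilon_{\rm F}}$ and $\liminf_L L^{-d}I^L_{\mu,m,\epsilon_{\rm F}}(\omega)\geq I_{\mu,m,\epsilon_{\rm F}}$, and then to upgrade them to $L^1(\Omega)$ convergence by a uniform integrability argument. The feature that makes both inequalities go through in the Yukawa regime (and fails in the Coulomb regime) is the exponential decay of $Y_m$, which renders negligible both the boundary layer of thickness $o(L)$ at $\partial\Gamma_L$ and the discrepancy between the periodic kernel $Y_{m,L}$ and the full-space kernel $Y_m$ on $\Gamma_L$ (the latter being uniformly bounded by $Ce^{-mL/2}$ outside an exponentially thin shell).

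For the upper bound, I would take an ergodic minimizer $\gamma$ of~\eqref{problem_cp} furnished by Theorem~\ref{th1_y}, fix a smooth cutoff $\chi_L$ with $\chi_L\equiv1$ on $\Gamma_{L-1}$, supported in $\Gamma_L$, with $\|\nabla\chi_L\|_\infty$ uniformly bounded in $L$, and define a supercell trial state $\gamma^{\rm trial}_L(\omega)\in\K_L$ as the natural $L\ZZ^d$-periodization of $\chi_L\gamma(\omega)\chi_L$. Birkhoff's ergodic theorem applied to the stationary densities $\rho_\gamma$ and $\rho_{P_j\gamma P_j}$ yields $L^{-d}\tr_{L^2_{\rm per}(\Gamma_L)}\bra{\gamma^{\rm trial}_L}\to\tv{\gamma}$ and $L^{-d}\tr_{L^2_{\rm per}(\Gamma_L)}\bra{-\Delta_L\gamma^{\rm trial}_L}\to\trv(-\Delta\gamma)$ almost surely and in $L^1(\Omega)$, the $O(L^{d-1})$ commutator terms involving $\nabla\chi_L$ being absorbed by the normalization. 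The Yukawa energy per unit volume is handled similarly: after replacing $Y_{m,L}$ by $Y_m$ with exponentially small error and applying Birkhoff to the stationary function $\omega\mapsto\int_Q(Y_m*(\rho_\gamma-\mu))(\rho_\gamma-\mu)$, one obtains $L^{-d}D_{m,L}(\rho^{\rm trial}_L-\mu_L,\rho^{\rm trial}_L-\mu_L)\to D_m(\rho_\gamma-\mu,\rho_\gamma-\mu)$, which delivers the desired upper bound.

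For the lower bound, let $\gamma_L(\omega)\in\K_L$ be a minimizer of~\eqref{finite_problem}, extended $L\ZZ^d$-periodically to an operator on $L^2(\RRd)$. The symmetrization trick is to introduce the enlarged probability space $\widetilde\Omega_L=\Omega\times(\ZZ/L\ZZ)^d$ with the product of $\PP$ and the uniform measure, the twisted ergodic $\ZZd$-action $(\omega,\eta)\mapsto(\tau_k\omega,\eta+k\bmod L)$, and to set $\widetilde\gamma_L(\omega,\eta)=U_{-\eta}\gamma_L(\omega)U_\eta$. Then $\widetilde\gamma_L$ is ergodic on $\widetilde\Omega_L$, and its kinetic and (stationary) Yukawa energies per unit volume coincide respectively with $L^{-d}\tr_{L^2_{\rm per}(\Gamma_L)}(-\Delta_L\gamma_L(\omega))$ and with $L^{-d}D_{m,L}(\rho_{\gamma_L}-\mu_L,\rho_{\gamma_L}-\mu_L)$ up to an exponentially small error. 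Extracting a weakly-$\ast$ convergent subsequence via Proposition~\ref{prop:weak-compactness}, the limit $\gamma_\infty$ is admissible and, by the lower semi-continuity of each term of the functional (kinetic by item~\ref{item13}, particle number by item~\ref{item11}, and Yukawa by the dual characterization in Corollary~\ref{duality}), one obtains a lower bound on $\liminf L^{-d}\EE(I^L_{\mu,m,\epsilon_{\rm F}})$ by the analogous ergodic infimum on $\widetilde\Omega_L$, which coincides with $I_{\mu,m,\epsilon_{\rm F}}$ through disintegration over $\eta$. The almost sure lower bound is then obtained from a Kingman-type subadditive ergodic argument applied to $L\mapsto I^L_{\mu,m,\epsilon_{\rm F}}(\omega)$, whose approximate subadditivity across disjoint supercells is again controlled by the decay of $Y_m$. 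I expect the main obstacle to lie precisely in rigorously implementing this subadditivity-with-error argument, and in producing the deterministic $L$-uniform bound on $L^{-d}|I^L_{\mu,m,\epsilon_{\rm F}}(\omega)|$ (through Lieb--Thirring together with the elementary inequality $\E^L_{\mu,m}\geq-\tfrac12 D_{m,L}(\mu_L,\mu_L)$) required to promote the a.s.\ convergence to convergence in $L^1(\Omega)$ via Vitali's theorem.
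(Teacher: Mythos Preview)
Your upper-bound argument is essentially the paper's (Proposition~\ref{limsup}): periodize $\chi_L\gamma\chi_L$ and use Birkhoff together with the exponential decay of $Y_m$.

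For the lower bound your ``enlarged space'' $\widetilde\Omega_L=\Omega\times(\ZZ/L\ZZ)^d$ is morally the same symmetrization as the paper's \emph{tilde-transform}~\eqref{eq:def_tilde_transform}, but your formulation has a real gap: the probability space $\widetilde\Omega_L$ depends on $L$, so Proposition~\ref{prop:weak-compactness} does not apply directly, and it is unclear on which space your weak-$\ast$ limit $\gamma_\infty$ lives or why the resulting infimum ``coincides with $I_{\mu,m,\epsilon_{\rm F}}$ through disintegration over $\eta$''. The paper avoids this by averaging over $\eta$ \emph{first}: it sets $\tilde\gamma_L(\omega)=L^{-d}\sum_{k\in\Gamma_L\cap\ZZd}U_k\gamma_L(\tau_{-k}\omega)U_k^*$, which already lives on the fixed space $\Omega$, is asymptotically ergodic, and satisfies $\EE(\text{per-volume quantities of }\tilde\gamma_L)=L^{-d}\EE(\text{supercell quantities of }\gamma_L)$ by convexity. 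Then one extracts a weak-$\ast$ limit on $\Omega$ and uses lower semicontinuity. This yields only the \emph{expectation} lower bound $\liminf_L L^{-d}\EE(I^L)\ge I_{\mu,m,\epsilon_{\rm F}}$ (Proposition~\ref{liminf}).

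The main strategic difference is how to close the argument. You aim for an almost-sure lower bound via a ``Kingman-type subadditive ergodic argument'', and you correctly flag this as the hard part: the supercell energies $I^L$ are \emph{not} subadditive in any obvious sense (the nonlinear self-consistent term couples neighboring boxes even with Yukawa decay, and the periodic boundary conditions are incompatible with simple concatenation), so implementing subadditivity-with-error would be delicate. The paper bypasses this entirely. It combines only
\[
\liminf_L L^{-d}\EE(I^L)\ge I_{\mu,m,\epsilon_{\rm F}},\qquad
\limsup_L L^{-d}I^L(\omega)\le I_{\mu,m,\epsilon_{\rm F}}\ \text{a.s.},\qquad
L^{-d}I^L(\omega)\le Z_L(\omega),
\]
the last with $Z_L\to Z$ in $L^1(\Omega)$ (this is the trivial bound~\eqref{bound0} from the trial state $\gamma_L=0$, giving $Z_L=\tfrac1{2m^2}L^{-d}\int_{\Gamma_L}\mu^2$). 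An elementary dominated-convergence trick (Lemma~\ref{pm}) then forces $L^{-d}I^L\to I_{\mu,m,\epsilon_{\rm F}}$ in $L^1(\Omega)$: writing $X_L=L^{-d}I^L-I_{\mu,m,\epsilon_{\rm F}}$, one has $(X_L)_+\to0$ a.s.\ and dominated, hence $(X_L)_+\to0$ in $L^1$, and then the liminf condition forces $(X_L)_-\to0$ in $L^1$ as well. No almost-sure lower bound, no subadditivity, no Vitali/uniform integrability is needed.
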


To prove Theorem~\ref{limite_thermo_m}, we first establish preliminary estimates in Proposition~\ref{pre_est}. Then, we prove a lower bound in expectation in Proposition~\ref{liminf}, and an almost sure upper bound in Proposition~\ref{limsup}. We then conclude the proof of Theorem~\ref{limite_thermo_m} using Lemma~\ref{pm}.

In order to adapt our proof to the Coulomb case, we would need some estimates on the Coulomb potential $V_L$ in the box $\Gamma_L$. It is reasonable to believe that screening effects will make $(V_L)$ bounded in, say, $L^1(\Omega,L^1_{\rm unif}(\R^d))$. For a very general arrangement of the nuclei, bounds of this type are known in Thomas-Fermi theory (see~\cite[Theorem~7]{BLBL2007}, which is taken from Brezis' paper~\cite{Brezis-84}) and in Thomas-Fermi-von Weizsäcker theory~\cite[Theorem~6.10]{CLL_book}, but they have not yet been established in reduced Hartree-Fock theory. Proving such bounds is of considerable interest, but it is beyond the scope of this paper.

\begin{proposition}[Upper bounds]\label{pre_est}
Let $\mu\in L^2_s$ and let $\gamma_L(\o)$ be a minimizer of $I_{\mu,m,\epsilon_{\rm F}}^L(\omega)$. Then, there exists $C>0$ and a sequence of integrable random variables  $(Z_L)$ converging to some $Z\in L^1\bra{\O}$ a.s. and in $L^1\bra{\O}$, such that 
\begin{equation}\label{bound0} I_{\mu,m,\epsilon_{\rm F}}^L(\omega)+D_{m,L}\bra{\rho_{\gamma_L}\bra{\o,\cdot},\rho_{\gamma_L}\bra{\o,\cdot}}\leq C\, L^dZ_L\bra{\o}\;  \mbox{a.s.,}\end{equation}
\begin{equation}\label{bound2}\EE\left(\tr_{L^2_{\rm per}\bra{\Gamma_L}}(1-\Delta_L)\gamma_L\right)+\EE\Big(D_{m,L}\bra{\rho_{\gamma_L}-\mu_L,\rho_{\gamma_L}-\mu_L}\Big)\leq C\,L^d \end{equation}
for all $L\in\NN\setminus\set{0}$.
\end{proposition}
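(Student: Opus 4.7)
The plan is to derive both estimates from a single energy inequality, obtained by combining a trial--state upper bound on $I_{\mu,m,\epsilon_{\rm F}}^L$ with a spectral lower bound at the minimizer. For the upper bound I would simply test with $\gamma_L=0\in\K_L$, which gives $I_{\mu,m,\epsilon_{\rm F}}^L(\omega)\le \tfrac12 D_{m,L}(\mu_L(\omega,\cdot),\mu_L(\omega,\cdot))$. The Fourier formula~\eqref{DL} and the elementary inequality $(|K|^2+m^2)^{-1}\le m^{-2}$ then yield
$$D_{m,L}(\mu_L,\mu_L)\le \frac{|S^{d-1}|}{m^2}\int_{\Gamma_L}|\mu(\omega,x)|^2\,dx.$$
Setting $Z_L(\omega):=1+L^{-d}\int_{\Gamma_L}|\mu(\omega,x)|^2\,dx$, the Birkhoff ergodic theorem applied to the stationary function $|\mu|^2\in L^1_s$ (this is where $\mu\in L^2_s$ enters) ensures that $Z_L$ converges a.s.\ and in $L^1(\O)$ to the deterministic constant $Z:=1+\EE(\int_Q|\mu|^2)\in L^1(\O)$.

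The only non-positive contribution in $\E_{\mu,m}^L(\o,\gamma_L)-\epsilon_{\rm F}\tr_{L^2_{\rm per}(\Gamma_L)}(\gamma_L)$ is $-\epsilon_{\rm F}\tr(\gamma_L)$, and the key point is that it can be absorbed into a quarter of the kinetic energy. Since $0\le\gamma_L\le 1$, the operator inequality $\tr(A\gamma_L)\ge -\tr(A_-)$ applied to $A=-\tfrac14\Delta_L-\epsilon_{\rm F}$ gives
$$\tfrac{1}{4}\tr_{L^2_{\rm per}(\Gamma_L)}(-\Delta_L\gamma_L)-\epsilon_{\rm F}\tr_{L^2_{\rm per}(\Gamma_L)}(\gamma_L)\ \ge\ -\sum_{K\in(2\pi/L)\ZZd}\left(\epsilon_{\rm F}-\tfrac{|K|^2}{4}\right)_+,$$
and the right-hand side is a Riemann sum for $(L/(2\pi))^d\int_{\RRd}(\epsilon_{\rm F}-|k|^2/4)_+\,dk$, hence bounded below by $-CL^d$ for a constant $C$ depending only on $d$, $m$ and $\epsilon_{\rm F}$. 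Writing the kinetic energy as $\tfrac12=\tfrac14+\tfrac14$ and combining with the upper bound of the first paragraph produces the deterministic estimate
$$\tfrac{1}{4}\tr(-\Delta_L\gamma_L)+\tfrac{1}{2}D_{m,L}(\rho_{\gamma_L}-\mu_L,\rho_{\gamma_L}-\mu_L)\ \le\ C\,L^d\, Z_L(\omega).$$

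From here both conclusions follow directly. For~\eqref{bound0}, the triangle inequality for the Hilbertian seminorm $D_{m,L}(\cdot,\cdot)^{1/2}$ gives $D_{m,L}(\rho_{\gamma_L},\rho_{\gamma_L})\le 2D_{m,L}(\rho_{\gamma_L}-\mu_L,\rho_{\gamma_L}-\mu_L)+2D_{m,L}(\mu_L,\mu_L)\le CL^dZ_L(\omega)$, while the same bound for $I_{\mu,m,\epsilon_{\rm F}}^L$ was already established by testing against $\gamma_L=0$. For~\eqref{bound2}, taking expectations of the deterministic estimate above bounds $\EE\,\tr(-\Delta_L\gamma_L)$ and $\EE\,D_{m,L}(\rho_{\gamma_L}-\mu_L,\rho_{\gamma_L}-\mu_L)$ by $CL^d$; to add $\EE\,\tr(\gamma_L)$ I would invoke the periodic Lieb--Thirring inequality $K(d)\int_{\Gamma_L}\rho_{\gamma_L}^{1+2/d}\le \tr(-\Delta_L\gamma_L)$, so that H\"older's inequality on $\Gamma_L$ gives $\tr(\gamma_L)\le C(\tr(-\Delta_L\gamma_L))^{d/(d+2)}L^{2d/(d+2)}$, and then Jensen's inequality in $\omega$ yields $\EE\,\tr(\gamma_L)\le CL^d$. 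I do not anticipate a serious obstacle here: the only delicate point is the handling of a possibly positive Fermi level, which is resolved by the spectral argument on the free periodic Laplacian; everything else is a direct consequence of the ergodic theorem and the Parseval formula for $D_{m,L}$.
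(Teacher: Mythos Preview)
Your argument is essentially the paper's: test with $\gamma_L=0$ to get the upper bound, bound $D_{m,L}(\mu_L,\mu_L)$ via Parseval and $(|K|^2+m^2)^{-1}\le m^{-2}$, invoke the ergodic theorem for $Z_L$, and use the spectral lower bound $\tr((-\tfrac14\Delta_L-\epsilon_{\rm F})\gamma_L)\ge -CL^d$ to absorb the Fermi-level term. The paper does exactly this (its inequality~\eqref{bound5}), and then simply remarks that~\eqref{bound0} and~\eqref{bound2} follow from positivity of the individual terms together with $\EE(Z_L)=\|\mu\|_{L^2_s}^2$.

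There is, however, one small but genuine slip in your last step. The periodic Lieb--Thirring inequality in the form $K(d)\int_{\Gamma_L}\rho_{\gamma_L}^{1+2/d}\le \tr_{L^2_{\rm per}(\Gamma_L)}(-\Delta_L\gamma_L)$ is false on the torus: the constant state $\gamma_L=|e_0\rangle\langle e_0|$, with $e_0=L^{-d/2}$, has $\rho_{\gamma_L}\equiv L^{-d}>0$ but zero kinetic energy. (This is why the paper's later estimate~\eqref{bound_lt} carries $\tr(-\Delta_L\gamma_L)+\tr(\gamma_L)$ on the right.) Your H\"older--Jensen route to $\EE\,\tr(\gamma_L)\le CL^d$ therefore does not close as written. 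The fix is immediate and avoids Lieb--Thirring entirely: reuse your own spectral argument with a shifted level. From $\tfrac18\tr(-\Delta_L\gamma_L)-(\epsilon_{\rm F}+1)\tr(\gamma_L)\ge -C'L^d$ and the energy identity at the minimizer you get, after dropping the remaining nonnegative terms,
\[
\tr(\gamma_L)\ \le\ I_{\mu,m,\epsilon_{\rm F}}^L(\omega)+C'L^d\ \le\ C L^d\,Z_L(\omega),
\]
which is in fact a pointwise bound, stronger than what you needed. This is precisely how the paper's terse ``follows from~\eqref{eq:ZLo},~\eqref{bound5} and positivity'' should be read.
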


\begin{proof}
Taking $\gamma_L=0$ as a trial state, we obtain that, almost surely,
\begin{equation}
 \frac{I_{\mu,m,\epsilon_{\rm F}}^L(\omega)}{L^d}\leq  \frac{1}{2L^d}D_{m,L}\bra{\mu_L\bra{\o,\cdot},\mu_L\bra{\o,\cdot}}\leq \frac{1}{2m^2} Z_L\bra{\o}, \label{eq:ZLo}
\end{equation}
where $Z_L=L^{-d}\int_{\Gamma_L}\mu^2$ converges to $\EE(\int_Q\mu^2)$, a.s. and in $L^1\bra{\O}$, by the ergodic theorem. 
Besides, we have for any $\alpha\in\RR$ and any $\gamma_L'\in  \K_L$,
\begin{equation}\label{bound5}
\tr_{L^2_{\rm per}\bra{\Gamma_L}}\bra{\bra{-{\Delta}_L-\alpha} \gamma_L'}\geq -\tr_{L^2_{\rm per}\bra{\Gamma_L}}\big(-{\Delta}_L-\alpha\big)_-\geq -CL^d
\end{equation}
where $C$ may depend on $\alpha$ and $d$, but not on $\gamma_L'$.
The bounds~\eqref{bound0} and~\eqref{bound2} follow from~\eqref{eq:ZLo},~\eqref{bound5}, the positivity of each term of $\E^L_{\mu,m}(\omega,\cdot)$, and the fact that $\EE(Z_L)=\|\mu\|_{L^2_s}^2$ is independent of $L$.
\end{proof}

\begin{proposition}[Lower bound in average]\label{liminf}
Let $\mu \in L^2_s$. Then
$$
\liminf_{L\rightarrow +\ii} \frac{\EE\left( I_{\mu,m,\epsilon_{\rm F}}^L(\cdot)\right)}{L^d} \geq I_{\mu,m,\epsilon_{\rm F}}.
$$
\end{proposition}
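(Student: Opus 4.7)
The plan is to produce, from the supercell minimizer $\gamma_L(\omega)$, a competitor $\tilde\gamma_L\in\K_Y$ for the ergodic problem $I_{\mu,m,\epsilon_{\rm F}}$ whose energy per unit volume is (asymptotically) bounded above by $L^{-d}\EE(I^L_{\mu,m,\epsilon_{\rm F}}(\cdot))$, and then pass to the weak-$\ast$ limit using the compactness result of Proposition~\ref{prop:weak-compactness}.

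More precisely, let $\gamma_L(\omega)$ minimize~\eqref{finite_problem}, viewed as an $(L\ZZ)^d$-periodic operator on $L^2(\RR^d)$. I would stationarize it by averaging over translates and twists of $\omega$, setting
\begin{equation*}
\tilde\gamma_L(\omega):=\frac{1}{L^d}\sum_{k\in J_L}U_k\,\gamma_L(\tau_{-k}(\omega))\,U_{-k},\qquad J_L:=\{0,\ldots,L-1\}^d.
\end{equation*}
Using the $(L\ZZ)^d$-periodicity of $\gamma_L(\omega)$ together with the stationarity of $\PP$, one checks that $U_j\tilde\gamma_L(\omega)U_{-j}-\tilde\gamma_L(\tau_j\omega)$ involves only those indices $k\in J_L$ for which $k+j$ wraps modulo $L$, giving an $O(L^{-1})$ defect that will disappear in the weak limit. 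In particular, by a change of variable and stationarity of $\mu$,
\begin{equation*}
\tv{\tilde\gamma_L}=\frac{1}{L^d}\EE\big[\mbox{tr}_{L^2_{\rm per}(\Gamma_L)}\gamma_L\big],\qquad \tv{-\Delta\tilde\gamma_L}=\frac{1}{L^d}\EE\big[\mbox{tr}_{L^2_{\rm per}(\Gamma_L)}(-\Delta_L\gamma_L)\big],
\end{equation*}
so the kinetic part matches the supercell kinetic part exactly.

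For the Yukawa part, writing $\mu(\omega,x)=\mu(\tau_{-k}\omega,x+k)$ by stationarity, one has $\rho_{\tilde\gamma_L}-\mu=L^{-d}\sum_{k\in J_L}[\rho_{\gamma_L(\tau_{-k}\omega)}(\cdot+k)-\mu(\tau_{-k}\omega,\cdot+k)]$, and the convexity of $f\mapsto D_m(f,f)=\|W_m\ast f\|_{L^2_s}^2$ combined with the stationarity of $\PP$ gives
\begin{equation*}
D_m(\rho_{\tilde\gamma_L}-\mu,\rho_{\tilde\gamma_L}-\mu)\;\le\;\frac{1}{L^d}\EE\big[D_{m,L}(\rho_{\gamma_L}-\mu_L,\rho_{\gamma_L}-\mu_L)\big]+o(1),
\end{equation*}
the $o(1)$ accounting for the discrepancy between the whole-space Yukawa kernel $Y_m$ and the $(L\ZZ)^d$-periodized kernel $Y_{m,L}=\sum_{n\in(L\ZZ)^d}Y_m(\cdot-n)$; the exponential decay of $Y_m$ for $m>0$ makes this correction vanish. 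Combining the three bounds,
\begin{equation*}
\E_{\mu,m}(\tilde\gamma_L)-\epsilon_{\rm F}\tv{\tilde\gamma_L}\le\frac{1}{L^d}\EE\big[I^L_{\mu,m,\epsilon_{\rm F}}(\cdot)\big]+o(1).
\end{equation*}

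Thanks to Proposition~\ref{pre_est}, the sequence $(\tilde\gamma_L)$ is bounded in $\K$ and its kinetic per unit volume is uniformly controlled, so Proposition~\ref{prop:weak-compactness} yields a subsequence $\tilde\gamma_{L_k}\wto_\ast\tilde\gamma_\infty$ and $\rho_{\tilde\gamma_{L_k}}\wto\rho_{\tilde\gamma_\infty}$ weakly in $L^{(d+2)/d}_s$; the approximate stationarity from the previous paragraph promotes $\tilde\gamma_\infty$ to a genuine element of $\underline{\mathcal S}$. Using the lower semicontinuity of $\trv(-\Delta\,\cdot)$ (Proposition~\ref{prop:weak-compactness}.(\ref{item13})) together with the dual characterization of $\D_Y$ (Corollary~\ref{duality}) to handle the lower semicontinuity of the quadratic Yukawa form (via $\|(-\Delta_s+m^2)^{-1/2}(\rho_{\tilde\gamma_L}-\mu)\|^2_{L^2_s}$), one obtains $\tilde\gamma_\infty\in\K_Y$ and
\begin{equation*}
I_{\mu,m,\epsilon_{\rm F}}\le\E_{\mu,m}(\tilde\gamma_\infty)-\epsilon_{\rm F}\tv{\tilde\gamma_\infty}\le\liminf_{L\to\infty}\frac{\EE(I^L_{\mu,m,\epsilon_{\rm F}}(\cdot))}{L^d},
\end{equation*}
which is the desired inequality. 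The principal difficulty I anticipate is the control of the boundary corrections in Step~2 (approximate stationarity of $\tilde\gamma_L$) and Step~3 (replacing $Y_{m,L}$ by $Y_m$): both rely crucially on the exponential decay of the Yukawa kernel, which is precisely why the argument does not extend to the Coulomb case $m=0$ without further screening estimates.
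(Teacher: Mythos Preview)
Your strategy---averaging the supercell minimizer over translates (the tilde-transform), extracting a weak-$\ast$ limit, and using lower semicontinuity of the kinetic and Yukawa terms---is exactly the route the paper takes. The main technical point you gloss over is that $\tilde\gamma_L$ is \emph{not} an admissible object for the ergodic framework: an operator on $L^2_{\rm per}(\Gamma_L)$ (or equivalently the $(L\ZZ)^d$-biperiodic kernel) does not define a bounded operator on $L^2(\RR^d)$, and in any case $\tilde\gamma_L$ is not stationary, so neither Proposition~\ref{prop:weak-compactness} nor the functionals $\trv(\cdot)$, $\E_{\mu,m}(\cdot)$ apply to it as stated. Consequently your intermediate inequality $\E_{\mu,m}(\tilde\gamma_L)-\epsilon_{\rm F}\trv(\tilde\gamma_L)\le L^{-d}\EE[I^L_{\mu,m,\epsilon_{\rm F}}]+o(1)$ has no clear meaning in the paper's setup.

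The paper bypasses this by introducing the cutoff $\gamma'_L:\phi\mapsto \1_{\Gamma_L}\tilde\gamma_L\phi_L$, which \emph{is} a bounded self-adjoint operator on $L^2(\RR^d)$ with $0\le\gamma'_L\le1$ (still not ergodic), and then proves by hand the needed properties of its weak-$\ast$ limit $\gamma$: ergodicity of $\gamma$ (your ``$O(L^{-1})$ wrapping defect vanishes in the limit'' becomes Step~1), convergence of the trace (Step~2), weak convergence of $\rho_{\tilde\gamma_L}$ to $\rho_\gamma$ in $L^{1+2/d}$ via a periodic Lieb--Thirring bound (Step~3), and lower semicontinuity of the kinetic and Yukawa energies (Steps~4--5). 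In particular, for the Yukawa term the paper does not compute $D_m(\rho_{\tilde\gamma_L}-\mu,\cdot)$ at finite $L$; your convexity argument would require $D_m$ applied to the non-stationary summands $g_k$, which is not defined. Instead the paper applies convexity at the level of $\|(-\Delta_L+m^2)^{-1/2}\tilde f_L\|_{L^2(Q+k)}^2$, extracts a weak $L^2$ limit $z$, and identifies $z=(-\Delta_s+m^2)^{-1/2}(\rho_\gamma-\mu)$ via a Fourier computation relating the periodic and whole-space resolvents. Your sketch has all the right ingredients, but making it rigorous forces you into precisely these steps.
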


The following definition introduced in~\cite{BLBL2007} will be used repeatedly in the proof of Proposition~\ref{liminf}.

\begin{definition}
For a function $g:\O\times\RRd\rightarrow \CC$ and $L\in\NN$, we call the \emph{tilde-transform} $\tilde{g}$ of $g$ the following function
\begin{equation}
\tilde{g}\bra{\o,x}=\frac{1}{L^d}\sum_{k\in\Gamma_L\cap \ZZd}g\bra{\tau_{-k}\bra{\o},x+k}\; \mbox{a.s. and a.e.} 
\label{eq:def_tilde_transform}
\end{equation}
\end{definition}

%

We can now write the

\begin{proof}[Proof of Proposition~\ref{liminf}]
Let $\gamma_L(\o)$ be a minimizer of~\eqref{finite_problem} and set 
$$
\tilde{\gamma}_L(\o)=\frac{1}{L^d}\sum_{k\in \Gamma_L\cap \ZZ^d}U_k\gamma_L(\tau_{-k}(\o))U_{k}^*.
$$ 
Notice that $\rho_{\tilde{\gamma}_L}=\widetilde{\rho}_{\gamma_L}$ where the latter is the tilde-transform defined in~\eqref{eq:def_tilde_transform}.
For any $L\in\NN\setminus\set{0}$, we define the operator
$$
\begin{array}{lrll}
          \gamma'_L: &  L^2\bra{\RRd} &\rightarrow & L^2\bra{\RRd}\\
	   & \phi &\mapsto& \1_{\Gamma_L}\tilde{\gamma}_L\phi_L,
           \end{array}
$$
where $\phi_L$ is the $\bra{L\ZZ}^d$-periodic function  equal to $\phi$ on $\Gamma_L$. It is easily checked that  $\gamma'_L$ is self-adjoint and that $0\leq \gamma'_L\leq 1$. Thus, the family $\bra{\gamma'_L}$ is bounded in $L^\ii\bra{\O,\B}$. Up to extraction of a subsequence, there exists an operator $\gamma\in L^\infty\left( \O, \B\right)$ such that $\gamma'_L$ converges weakly-$\ast$ to $\gamma$. Moreover, $\gamma$ is self-adjoint and $0\leq \gamma\leq 1$ a.s. Besides, ${\gamma_L'}\bra{\o,x,y}={\tilde\gamma_L}\bra{\o,x,y}$ a.s. and a.e. on $\O\times \Gamma_L\times \Gamma_L$.
In the following, we will show that $\gamma\in \K_{Y}$ and that $\E_{\mu,m}\bra{\gamma}-\epsilon_{\rm F}\tv{\gamma}\leq \liminf L^{-d}\EE\bra{I^L_{\mu,m,\epsilon_{\rm F}}(\cdot)}$.
\paragraph{Step 1}\textit{The operator $\gamma$ is ergodic.}
Arguing like in the proof of Proposition~\ref{prop:weak-compactness}, it is sufficient to show that for all $u\in L^{1}(\O)$, $\phi,\psi\in C^\ii_c(\RRd)$ and $R\in \ZZd$,
\begin{equation}\label{heart}
\EE\Big(u\big\langle \left(\gamma(\tau_R(\o))-U_R\gamma(\o)U_{R}^*\right)\phi,\psi\big\rangle_{L^2}\Big)=0.
\end{equation}
Let $u, \phi,\psi$ and $R$ as above and $L\in\NN$. We have
$$\tilde{\gamma}_L(\tau_R(\o))-U_R\tilde{\gamma}_L(\o)U_R^*=\frac{1}{L^d}\sum_{k\in \bra{\Gamma_L{\Delta} (\Gamma_L+R)}\cap \ZZd }U_k\gamma_L(\tau_{-k+R}(\o))U_{k}^*,
$$
where $A{\Delta} B:=\bra{A\setminus B} \cup \bra {B\setminus  A}$. Hence, for $L$ sufficiently large, we have
\begin{align*}\label{eq_1}
\left|\EE\bra{u\langle \bra{{\gamma}'_L(\tau_R\cdot)-U_R{\gamma}'_LU_R^*}\phi,\psi\rangle_{L^2}}\right|& =  \left|\EE\bra{u\langle \bra{\tilde{\gamma}_L(\tau_R\cdot)-U_R\tilde{\gamma}_LU_R^*}\phi_L,\psi_L\rangle_{L^2\bra{\Gamma_L}}}\right|\\
&\leq \frac{\left|\Gamma_L{\Delta} (\Gamma_L+R)\right|}{ L^d}\left\| u\right\|_{L^1(\O)}\left\| \phi\right\|_{L^2}\left\| \psi\right\|_{L^2}.
\end{align*}
The left side converges to $\EE\bra{u\langle \left(\gamma\circ\tau_R-U_R\gamma U_{R}^*\right)\phi,\psi\rangle_{L^2}}$, and  the right side  decays as $L^{-1}$. Thus,~\eqref{heart} is proved.

\paragraph*{Step 2}\textit{We have  \begin{equation}\label{step_trace} \trv(\gamma)=\lim_\cL \frac{\Ev{\tr_{L^2_{\rm per}\bra{\Gamma_L}}\bra{\gamma_L}}}{L^d}.\end{equation}}
Thanks to the estimate~\eqref{bound2}, for any $\chi\in W^{1,\infty}_c(\RRd)$, there exists a constant $C$  such that for all $L\in \NN\setminus\set{0}$, we have
$$
 \left|\EE\left( \tr(\chi\tilde{\gamma}_L\chi) \right) \right|+\sum_{j=1}^d\left|\EE\left( \tr(\chi P_j\tilde{\gamma}_L P_j\chi)\right)  \right|\leq C,
$$
Following the proof of Proposition~\ref{prop:weak-compactness}, we can show that 
\begin{equation}\label{bird}
 \EE\left(u\, \tr\bra{ \chi\gamma\chi}\right)=\lim_{L\rightarrow \ii} \EE\left(u\,\int_\RRd \rho_{\gamma'_L} \chi^2 \right)
\end{equation}
for all $u\in L^\ii(\O)$ and all $\chi\in  L_c^{\infty}(\RRd)$.
Choosing $u=1$ and $\chi=\1_Q$, we get
$$\trv(\gamma)=\lim_{L\rightarrow \ii} \EE\left( \int_Q\rho_{\gamma'_L} \right).$$ Finally, we remark that 
$$
\EE\left( \int_Q\rho_{\gamma'_L} \right)= \EE\left( \int_Q\tilde{\rho}_{\gamma_L} \right)= \frac{1}{L^d}\Ev{\tr_{L^2_{\rm per}\bra{\Gamma_L}}\bra{\gamma_L}},
$$
which concludes the proof of~\eqref{step_trace}.
 
\paragraph*{Step 3}\textit{The sequence $({\rho}_{\tilde{\gamma}_L})$ converges weakly to $\rho_\gamma$ in $L^{1+2/d}(\O, L_{\rm loc}^{1+2/d}(\RRd))$.}
By~\eqref{bird}, we obtain
$$
\lim_{\cL}   \EE\left(u \int_\RRd \rho_{\tilde{\gamma}_L}(\o,x)\chi(x)^2\,dx \right) = \EE\left(u \int_\RRd \rho_{\gamma}(\o,x)\chi(x)^2\,dx \right),
$$
for all $u\in L^\infty(\O)$ and all $\chi\in C^\ii_c\bra{\RRd}$. To proceed as in the proof of Proposition~\ref{prop:weak-compactness}, we only need to show that $(\rho_{\tilde{\gamma}_L})$ is bounded in $L^{1+2/d}\left(\O, L^{1+2/d}\left(B_I\right)\right)$, independently of $L$, for any compact set $B_I=\cup_{k\in I}(Q+k)$, with $I\subset\ZZd$ such that card$\bra{I}<\ii$. This bound now follows from the convexity of the function $x\mapsto x^{1+2/d}$ and from the Lieb-Thirring inequality in a box~\cite{FLLS}
\begin{align}\label{bound_lt}
\EE\left( \int_{B_I}\left| \rho_{\tilde{\gamma}_L} \right|^{\frac{d+2}{d}}\right)&\leq \sum_{k\in I}\frac{1}{L^d}\EE\left( \int_{\Gamma_L+k}\left| \rho_{{\gamma}_L} \right|^{\frac{d+2}{d}}\right)\\
&\leq  C(\#I)\bra{\frac{\EE\left( \tr_{L^2_{\rm per}}\bra{ -{\Delta}_L \gamma_L  }\right)}{L^d}+ \frac{\EE\left( \tr_{L^2_{\rm per}}\bra{ \gamma_L  }\right)}{L^d}}\leq C.\nonumber
\end{align}

\paragraph*{Step 4}\textit{We have}
$$\trv\left(-{\Delta} \gamma\right)\leq  \liminf_{L\rightarrow \ii} \frac{\EE\left( \tr_{L^2_{\rm per}}(-{\Delta}_L \gamma_L) \right)}{L^d}.$$
As $\gamma'_L$ converges weakly-$\ast$ in $L^\ii\bra{\O,\B}$ to $\gamma$, we can argue like in the proof of Proposition~\ref{prop:weak-compactness} and get
\begin{eqnarray*}
\trv(-{\Delta} \gamma) &\leq &\liminf_{L\rightarrow\ii} \sum_{j=1}^d\sum_{n\in\NN}\frac{1}{L^d}\sum_{k\in\Gamma_L\cap \ZZd}\EE\left( \langle U_k^*\phi_{n,L},P_{j,L} {\gamma}_L P_{j,L}U_k^*\phi_{n,L}\rangle_{L^2\bra{Q}}\right)\\
&=& \liminf_{L\rightarrow\ii} \frac{1}{L^d}\sum_{j=1}^d\Ev{\tr_{L^2_{\rm per}\bra{\Gamma_L}}\bra{P_{j,L}{\gamma}_LP_{j,L}}}\\
&= &  \liminf_{L\rightarrow\ii} \frac{1}{L^d} \EE\left( \tr_{L^2_{\rm per}\bra{\Gamma_L}}(-{\Delta}_L \gamma_L) \right),
\end{eqnarray*}
where we have used that the operators $P_{j,L}$ commute with the translations $U_k$. 

\paragraph*{Step 5}\textit{ 
We have
\begin{equation}\label{step5}
D_m(\rho_{\gamma}-\mu,\rho_{\gamma}-\mu)\leq\liminf_{L\rightarrow \ii} \frac{\EE\Big( D_{m,L}(\rho_{\gamma_L}-\mu_L,\rho_{\gamma_L}-\mu_L)\Big)}{L^d}.
\end{equation}}
We denote by $f_L=\rho_{\gamma_L}-\mu_L$ and $f=\rho_\gamma-\mu$. It follows from a simple convexity argument that for all $k\in\ZZd$,
$$
\Ev{\left\| \left(-{\Delta}_L +m^2\right)^{-\frac{1}{2}}\tilde{f}_L\right\|_{L^2\bra{Q+k}}^2}
\leq \frac{1}{L^d} \EE\left( \int_{\Gamma_L} \left|\left(-{\Delta}_L  +m^2\right)^{-\frac{1}{2}}f_L\right|^2 \right).
$$
As $$
\int_{\Gamma_L} \left|\left(-{\Delta}_L +m^2\right)^{-\frac{1}{2}}f_L\right|^2 =\av{S^{d-1}}^{-1}\,D_{m,L}\left(f_L,f_L\right),$$
we obtain that for all $k\in \ZZd$,
$$
\left\| \left(-{\Delta}_L +m^2\right)^{-\frac{1}{2}}\tilde{f}_L\right\|_{L^2(\O\times \bra{Q+k})}^2\leq \frac{\av{S^{d-1}}^{-1}}{L^d} \EE\left( D_{m,L}\left(f_L,f_L\right)\right)\leq{C}.
$$
Therefore, there exists a function $z\in L^2( \O, L^2_{\rm unif}(\RRd))$ such that, up to extraction, $(-{\Delta}_L +m^2)^{-1/2}\tilde{f}_L $ converges weakly to $z$ in $L^2( \O, L^2_{\rm unif}(\RRd))$. By the weak lower semi-continuity of the $L^2$-norm, we have
$$
\EE\bra{\int_Qz^2} \leq \liminf_{L\rightarrow \ii} \EE\bra{\left\| \left(-{\Delta}_L +m^2\right)^{-\frac{1}{2}} \tilde{f}_L \right\|_{L^2( Q)}^2}\leq \liminf_{L\rightarrow \ii}  \frac{\EE\bra{ D_{m,L}(f_L,f_L)}}{\av{S^{d-1}}L^d}.
$$
We are going to show that $z=(-{\Delta}_s +m^2)^{-1/2}\bra{\rho_{\gamma}-\mu}$, which will conclude the proof. To do so, we just need to check that for any $u\in L^{1+d/2}\bra{\O}$ and $\chi\in \mathcal C^\ii_c\bra{\RRd}$,
\begin{equation}\label{0}
\lim_{\cL}\Ev{u\int_\RRd\chi \left(-{\Delta}_L +m^2\right)^{-\frac{1}{2}}\tilde{f}_L}=\Ev{u\int_\RRd \left(\left(-{\Delta} +m^2\right)^{-\frac{1}{2}}\chi\right) f}.
\end{equation}
Let $u$ and $\chi$ be such functions. Reasoning as in Step 1, we notice that the tilde-transform $\tilde{\mu}_L$ converges weakly to $\mu$ in $L^{1+2/d}(\O,L^{1+2/d}_{\rm loc}(\RRd))$. Then, we proceed in two steps. First, we show that 
$$\int_{\RRd}\chi \left(-{\Delta}_L +m^2\right)^{-\frac{1}{2}}\tilde{f}_L=\int_\RRd \eta\tilde{f}_L,$$ 
where $\eta=(-{\Delta} +m^2)^{-1/2}\chi$. Recall that, for any $h\in \mathcal S(\RRd)$, the function defined by $h_L\bra{x}=\sum_{k\in \bra{L\ZZ}^d}h\bra{x-k}$ is in $L^2_{\rm per}\bra{\Gamma_L}$ with $c_K^L\bra{h_L}=({2\pi}/{L})^{d/2}\widehat{h}\bra{K}$. For $L$ sufficiently large, we therefore have
\begin{align*}
& \int_{\RRd}\chi \left(-{\Delta}_L +m^2\right)^{-\frac{1}{2}}\tilde{f}_L\\
&\qquad=\bra{\frac{2\pi}{L}}^\frac{d}{2}\sum_{K\in \frac{2\pi}{L}\ZZd}\frac{\overline{\widehat{\chi}\bra{K}} c_K^L\bra{\tilde{f}_L}}{\sqrt{\av{K}^2+m^2}}=\bra{\frac{2\pi}{L}}^\frac{d}{2} \sum_{K\in \frac{2\pi}{L}\ZZd}\overline{\widehat{\eta}\bra{K}} c_K^L\bra{\tilde{f}_L}\nonumber\\
&\qquad= \sum_{K\in \frac{2\pi}{L}\ZZd}\overline{c_K^L\bra{\eta_L}}c_K^L\bra{\tilde{f}_L}= \int_{\Gamma_L}\bra{ \sum_{k\in \bra{L\ZZ}^d} \eta\bra{x-k}} \tilde{f}_L\bra{x}\,dx= \int_\RRd \eta\tilde{f}_L.
\end{align*}
Next, using the fact that $\eta\in\mathcal S\bra{\RRd}$, the weak convergence of $\tilde{f}_L$ to $f$ in $L^{1+2/d}(\O,L_{\rm loc}^{1+2/d}(\RRd))$, and the bound~\eqref{bound_lt}, we obtain 
$$
\Ev{u\int_\RRd\eta\tilde{f}_L}\cvL \Ev{u\int_\RRd\eta f}.
$$
This concludes the proof of~\eqref{0}, hence of~\eqref{step5}.
\end{proof}

\begin{proposition}[Almost sure upper bound]\label{limsup}
Let $\mu \in L^2_s$. Then, 
\begin{equation}\label{eq.limsup}
\limsup_\cL \frac{I_{\mu,m,\epsilon_{\rm F}}^L(\omega)}{L^d} \leq I_{\mu,m,\epsilon_{\rm F}},\; \mbox{a.s.}
\end{equation}
\end{proposition}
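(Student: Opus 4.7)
The approach is to construct a suitable trial state $\gamma_L(\omega)\in\K_L$ from a minimizer of the infinite-volume problem and show that each piece of $L^{-d}[\E^L_{\mu,m}(\omega,\gamma_L)-\epsilon_{\rm F}\tr_{L^2_{\rm per}(\Gamma_L)}(\gamma_L)]$ converges almost surely to the corresponding piece of $I_{\mu,m,\epsilon_{\rm F}}$ via Birkhoff's ergodic theorem. Since $\mu\in L^2_s\subset\D_Y$ (Lemma~\ref{lem:convolution} applied to $W_m\in L^1(\R^d)$), Theorem~\ref{th1_y} combined with the standard convex-duality equivalence between~\eqref{problem_m} and~\eqref{problem_cp} furnishes a minimizer $\gamma\in\K_Y$ of~\eqref{problem_cp} satisfying $\E_{\mu,m}(\gamma)-\epsilon_{\rm F}\trv(\gamma)=I_{\mu,m,\epsilon_{\rm F}}$. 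Pick a cutoff $\chi_L\in C^\infty_c(\R^d)$ with $0\leq\chi_L\leq1$, $\chi_L\equiv1$ on $\Gamma_{L-2}$, $\mathrm{supp}(\chi_L)\subset\Gamma_{L-1}$, and $\|\nabla\chi_L\|_\infty$ bounded uniformly in $L$. The kernel of $\chi_L\gamma(\omega)\chi_L$ is supported in $\Gamma_{L-1}\times\Gamma_{L-1}$, so by restriction to $L^2(\Gamma_L)$ and $(L\Z)^d$-periodic extension this operator is identified with a self-adjoint operator $\gamma_L(\omega)$ on $L^2_{\rm per}(\Gamma_L)$ with $0\leq\gamma_L(\omega)\leq1$; by construction $\gamma_L(\omega)\in\K_L$ a.s. and $\rho_{\gamma_L}=\chi_L^2\rho_\gamma$ on $\Gamma_L$.

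The trace term is immediate: $L^{-d}\tr_{L^2_{\rm per}}(\gamma_L)=L^{-d}\int_{\Gamma_L}\chi_L^2\rho_\gamma$ converges almost surely to $\trv(\gamma)$ by Birkhoff applied to the $L^1(\Omega)$-variable $\int_Q\rho_\gamma$, with a boundary discrepancy of order $L^{-1}$. For the kinetic term, writing $P_{j,L}\chi_L=\chi_L P_j-i\partial_{x_j}\chi_L$ I obtain $\tr_{L^2_{\rm per}}(-\Delta_L\gamma_L)=\sum_j\tr(\chi_L P_j\gamma(\omega) P_j\chi_L)+R_L(\omega)$, whose main term divided by $L^d$ tends a.s. to $\trv(-\Delta\gamma)$ by Birkhoff; the remainder $R_L$, built from $(\partial_{x_j}\chi_L)\gamma(\omega)(\partial_{x_j}\chi_L)$ and cross terms all localized in $\Gamma_L\setminus\Gamma_{L-2}$, is $O(L^{d-1})$ a.s. by the ergodic theorem applied to $\int_Q\rho_\gamma$, hence $o(L^d)$.

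The main obstacle is the Yukawa term. Starting from the representation~\eqref{DL},
\begin{equation*}
D_{m,L}(\rho_{\gamma_L}-\mu_L,\rho_{\gamma_L}-\mu_L)=\int_{\Gamma_L}\!\int_{\R^d}Y_m(x-y)(\rho_{\gamma_L}-\mu_L)(x)(\rho_{\gamma_L}-\mu_L)(y)\,dy\,dx,
\end{equation*}
I split $\Gamma_L=\Gamma_{L-R(L)}\cup(\Gamma_L\setminus\Gamma_{L-R(L)})$ with $R(L)\to\infty$ and $R(L)/L\to0$ (say $R(L)=\log L$). On the interior, $\chi_L\equiv1$ so $\rho_{\gamma_L}-\mu_L$ coincides pointwise with $\rho_\gamma-\mu$; the exponential decay $|Y_m(z)|\lesssim e^{-m|z|/2}$ (for $|z|\geq1$) ensures that for $x\in\Gamma_{L-R(L)}$, convolution of $Y_m$ against the $(L\Z)^d$-periodized $\rho_{\gamma_L}-\mu_L$ differs from $Y_m*(\rho_\gamma-\mu)$ by $O(e^{-mR(L)/2})$ in a stationary $L^2$ sense. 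The interior contribution thus equals $\int_{\Gamma_{L-R(L)}}(\rho_\gamma-\mu)(Y_m*(\rho_\gamma-\mu))\,dx+o(L^d)$, and divided by $L^d$ converges a.s. by Birkhoff to $\EE(\int_Q(\rho_\gamma-\mu)(Y_m*(\rho_\gamma-\mu)))=D_m(\rho_\gamma-\mu,\rho_\gamma-\mu)$; integrability on $\Omega\times Q$ follows from $Y_m=W_m*W_m$, Lemma~\ref{lem:convolution}, and $W_m*(\rho_\gamma-\mu)\in L^2_s$ since $\rho_\gamma-\mu\in\D_Y$. The boundary strip of measure $O(R(L)L^{d-1})$ is controlled by Cauchy--Schwarz together with Birkhoff applied to the stationary $L^1(\Omega)$-variables $\int_Q|\rho_\gamma-\mu|^2$ and $\int_Q|Y_m*(\rho_\gamma-\mu)|^2$, contributing $o(L^d)$ a.s.

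Summing the three convergences gives $L^{-d}[\E^L_{\mu,m}(\omega,\gamma_L)-\epsilon_{\rm F}\tr_{L^2_{\rm per}}(\gamma_L)]\to I_{\mu,m,\epsilon_{\rm F}}$ almost surely, whence~\eqref{eq.limsup}. The hard step is the Yukawa one: one must simultaneously justify replacing the periodic Green's function $Y_{m,L}=\sum_{k\in(L\Z)^d}Y_m(\cdot-k)$ by $Y_m$ on the interior (using the exponential decay specific to $m>0$) and bound the boundary contribution through stationary $L^2$ estimates. This is precisely where the short-range nature of the Yukawa interaction is essential, and the lack of such decay for Coulomb ($m=0$) explains why the argument does not extend to that case.
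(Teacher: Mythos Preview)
Your overall strategy---localize a minimizer $\gamma$ of~\eqref{problem_cp} by $\chi_L$, periodize, and apply Birkhoff term by term, exploiting the exponential decay of $Y_m$---is the same as the paper's. The treatment of the trace and kinetic terms is essentially identical. However, your handling of the Yukawa term contains a genuine gap in dimension $d=3$.

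You claim that $\int_Q|\rho_\gamma-\mu|^2$ and $\int_Q(\rho_\gamma-\mu)\,Y_m\!*\!(\rho_\gamma-\mu)$ are $L^1(\Omega)$-random variables, so that Birkhoff applies. But from $\mu\in L^2_s$ you only know $\rho_\gamma\in L^{1+2/d}_s$ (Proposition~\ref{Lieb_T}); for $d=3$ this is $L^{5/3}_s$, and nothing in your argument upgrades it to $L^2_s$. Your justification ``$Y_m=W_m*W_m$, Lemma~\ref{lem:convolution}, and $W_m*(\rho_\gamma-\mu)\in L^2_s$'' yields $Y_m*(\rho_\gamma-\mu)\in L^2_s$, but then H\"older for the product $(\rho_\gamma-\mu)\cdot(Y_m*(\rho_\gamma-\mu))$ would require $\rho_\gamma-\mu\in L^2_s$, which is exactly what is missing. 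The same missing bound wrecks your Cauchy--Schwarz estimate on the boundary strip: the localization error there is $(\chi_L^2-1)\rho_\gamma$, not $(\chi_L^2-1)(\rho_\gamma-\mu)$, so you need square-integrability of $\rho_\gamma$ on annuli, not of $\rho_\gamma-\mu$.

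The paper closes this gap by a reduction that you omit: it first proves~\eqref{eq.limsup} under the extra assumption $\mu\in L^\infty(\Omega\times\R^d)$, in which case Corollary~\ref{corollary} (a consequence of the self-consistent equation in Proposition~\ref{forme_minimiseur} and the Feynman--Kac bound~\eqref{FK}) gives $\rho_\gamma\in L^\infty(\Omega\times\R^d)$. With $f=\rho_\gamma-\mu\in L^\infty$ all the boundary estimates become trivially $O(L^{d-1})$. The general case $\mu\in L^2_s$ is then recovered by an $\epsilon/2$ approximation argument based on the a~priori bounds~\eqref{bound0} and~\eqref{bound5}. A second difference worth noting: the paper applies Birkhoff not to $\int_Q f\,(Y_m*f)$ but to $\int_Q|W_m*f|^2$, which \emph{is} in $L^1(\Omega)$ for every $f\in\D_Y$ by definition; the comparison between $D_{m,L}(f_L,f_L)$ and $\int_{\Gamma_L}|W_m*f|^2$ is then carried out via the explicit remainder~\eqref{exact}, again using $f\in L^\infty$.
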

\begin{proof}
We will prove~\eqref{eq.limsup} assuming that $\mu\in L^\ii\bra{\O\times \RRd}$; the generalization is obtained by an $\epsilon/2$ argument using~\eqref{bound0} and~\eqref{bound5}.  
 Let $\gamma$ be a minimizer of~\eqref{problem_cp}. By the ergodic theorem, there exists $\O'\subset\O$, with $\PP\bra{\O'}=1$, such that on $\O'$
$$
\lim_{\cL}\frac{1}{L^d}\int_{\Gamma_L}\rho_\gamma=\Ev{\int_Q\rho_\gamma},
\qquad
\lim_{\cL}\frac{1}{L^d}\int_{\Gamma_L}\rho_{P_j\gamma P_j}=\Ev{\int_Q\rho_{P_j\gamma P_j}},
$$
for any $1\leq j \leq d$, and 
\begin{equation}\label{ergo_3}
\lim_{\cL}\frac{\norm{\bra{-{\Delta}_s+m^2}^{-\frac{1}{2}}\bra{\rho_\gamma-\mu}}_{L^2\bra{\Gamma_L}}^2}{L^d}=\norm{\bra{-{\Delta}_s+m^2}^{-\frac{1}{2}}\bra{\rho_\gamma-\mu}}_{L^2_s}^2.
\end{equation}
Let $\o_0\in \O'$ be fixed for the rest of the proof. Let $0\leq\chi_L\leq1$ be a sequence of localization functions of $C^\infty_c\bra{\RRd}$, which equals 1 on $\Gamma_{L-1}$, has its support in $\Gamma_L$, and satisfies $|\nabla\chi_L|\leq C$.
For $L\in\NN\setminus\set{0}$, we introduce the operators $\gamma_L^0: L^2\bra{\RRd}\rightarrow L^2\bra{\RRd}$ and $\gamma_L:  L^2_{\rm per}\bra{\Gamma_L}\rightarrow  L^2_{\rm per}\bra{\Gamma_L}$, whose kernels are given by 
$$
\gamma_L^0\bra{x,y}=\chi_L\bra{x}\gamma\bra{\o_0,x,y}\chi_L\bra{y}
\;\mbox{and}\;
\gamma_L\bra{x,y}=\sum_{j,k\in \bra{L\ZZ}^d}\gamma_L^0\bra{x+j,y+k}.
$$
Using similar techniques to the ones used in the proof of Proposition~\ref{Lieb_T}, one can show that
\begin{equation}\label{trace}
\lim_{L\to\ii} \frac{1}{L^d}\int_{\Gamma_L}\rho_{\gamma_L}=\Ev{\int_Q\rho_\gamma},
\end{equation} 
and that
\begin{equation}\label{kin_ene}
\lim_\cL\frac{1}{L^d}\tr_{L^2_{\rm per}\bra{\Gamma_L}}\bra{-{\Delta}_L\gamma_L}= \tv{-{\Delta} \gamma}.
\end{equation}
We now turn to the convergence of the potential energy, i.e.
\begin{equation}\label{pot_ene}
\lim_\cL\frac{1}{L^d}D_{m,L}\bra{g_L,g_L}= D_m\bra{f,f},
\end{equation}
where $f=\rho_\gamma-\mu$ and  $g_L=\rho_{\gamma_L}-\mu_L$. We introduce the auxiliary function $f_L$, defined as the $L\ZZ^d$-periodic function equal to $f$ on $\Gamma_L$. We first prove that
$$
\lim_\cL \frac{1}{L^d}\bra{D_{m,L}\bra{g_L,g_L}-D_{m,L}\bra{{f}_L,{f}_L}}= 0.
$$
Indeed, rewriting $g_L$ as $g_L=\chi_{L,{\rm per}}^2f_L+\bra{\chi_{L,{\rm per}}^2-1}\mu_L$, with the definition $\chi_{L,{\rm per}}=\sum_{k\in \bra{L\ZZ}^d}\chi_L\bra{\cdot+k}$, we have
\begin{align*}
\norm{\bra{-{\Delta}_L+m^2}^{-\frac{1}{2}}\bra{g_L-f_L}}_{L^2\bra{\Gamma_L}}^2 &\leq  m^{-2}\norm{\bra{g_L-f_L}}_{L^2\bra{\Gamma_L}}^2\\
&\leq m^{-2}\norm{\bra{\chi_{L,{\rm per}}^2-1}\bra{f_L-\mu_L}}_{L^2\bra{\Gamma_L}}^2 
\end{align*}
which is a $o(L^d)$. Then, we prove that
\begin{equation}\label{pot}
\lim_{L\to\ii}\frac{D_{m,L}\bra{{f}_L,{f}_L}}{L^d}= D_m\bra{f,f},
\end{equation}
To do so, in view of~\eqref{ergo_3} it is sufficient to show that
\begin{equation}\label{exact}
\alpha_L= \frac{\av{S^{d-1}}}{L^d}\bra{\norm{\bra{-{\Delta}_L+m^2}^{-\frac{1}{2}}f_L}_{L^2\bra{\Gamma_L}}^2-\norm{\bra{-{\Delta}_s+m^2}^{-\frac{1}{2}}f}_{L^2\bra{\Gamma_L}}^2}
\end{equation}
tends to zero. This follows from the fact that
$$
\alpha_L=\frac{1}{L^d}\int_{\Gamma_L}\,dx\,f\bra{x}\int_{\RRd\setminus \Gamma_L}\,dy \,Y_m\bra{x-y}\bra{f_L\bra{y}-f\bra{y}},
$$
$f\in L^\ii\bra{\O\times \RRd}$ and $Y_m\in L^1\bra{\RRd}$. This completes the proof of~\eqref{pot_ene}. Combining~\eqref{trace},~\eqref{kin_ene} and~\eqref{pot_ene}, we end up with
$$
\limsup_{L\rightarrow\ii} \frac{I^L_{\mu,m,\epsilon_{\rm F}}(\omega)}{L^d}\leq I_{\mu,m,\epsilon_{\rm F}}
$$
for every $\o_0\in\O'$,  which concludes the proof of Proposition~\ref{limsup}.
\end{proof}
We complete the proof of Theorem~\ref{limite_thermo_m} using Lemma~\ref{pm} below applied to $X_L\bra{\o}=L^{-d}I_{\mu,m,\epsilon_{\rm F}}^L(\omega)$ and the bound~\eqref{bound0}.

\begin{lemma}\label{pm}
Let $\bra{X_n}_{n \in\NN}$ be a sequence of random variables in $L^1\bra{\O}$ and $X\in L^1\bra{\O}$. We assume that there exists a sequence of random  variables $\bra{Z_n}_{n\in\NN}$ converging in $L^1\bra{\O}$ to $Z\in L^1\bra{\O}$ such that 
\begin{itemize}
\item $\dps\liminf_\cn\Ev{X_n}\geq \Ev{X}$
\item $\dps\limsup_\cn X_n \leq X \mbox{ a.s.}$
\item $X_n\leq Z_n \mbox{ a.s.}$
\end{itemize}
Then, $X_n\rightarrow X$ strongly in $L^1\bra{\O}$ as $n\rightarrow \ii$.
\end{lemma}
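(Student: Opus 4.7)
The plan is to combine the almost-sure $\limsup$ control with Fatou's lemma applied to the nonnegative sequence $Y_n := Z_n - X_n$, and then to run a Pratt-style generalized dominated convergence argument on the positive part $(X_n-X)^+$.

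First I would show that $\Ev{X_n}\to \Ev{X}$. Only the direction $\limsup_n\Ev{X_n}\leq\Ev{X}$ requires proof, since the other bound is assumed. I would pass to a subsequence $(X_{n_k})$ with $\Ev{X_{n_k}}\to \limsup_n\Ev{X_n}$, and from it extract a further subsequence (indexed again by $k$) along which $Z_{n_k}\to Z$ almost surely: this is possible because $Z_n\to Z$ in $L^1$ forces a.s.\ convergence on a subsequence. Along this subsequence, $Y_k:=Z_{n_k}-X_{n_k}\geq 0$ and
$$\liminf_k Y_k \;\geq\; Z - \limsup_k X_{n_k} \;\geq\; Z - X \quad\text{a.s.,}$$
by the assumption $\limsup_n X_n\leq X$. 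Fatou's lemma then yields
$$\Ev{Z-X}\;\leq\; \Ev{\liminf_k Y_k} \;\leq\; \liminf_k \Ev{Y_k}\;=\;\Ev{Z} - \limsup_n \Ev{X_n},$$
where we used $\Ev{Z_{n_k}}\to\Ev{Z}$. Hence $\limsup_n\Ev{X_n}\leq\Ev{X}$, and together with the first hypothesis, $\Ev{X_n}\to\Ev{X}$.

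Next I would control $(X_n-X)^+$ in $L^1$. Since $\limsup_n X_n\leq X$ a.s., $(X_n-X)^+\to 0$ almost surely. From $X_n\leq Z_n$ a.s., we get the domination $(X_n-X)^+\leq (Z_n-X)^+$, and the right-hand side converges in $L^1$ to $(Z-X)^+$ because $f\mapsto f^+$ is $L^1$-Lipschitz with constant $1$. The generalized dominated convergence theorem (Pratt's lemma) then gives $(X_n-X)^+\to 0$ in $L^1(\O)$. Concretely, this follows by applying Fatou to the two nonnegative sequences $(Z_n-X)^+ - (X_n-X)^+$ (nonnegative by the domination) and $(Z_n-X)^+ + (X_n-X)^+$, using that $(Z_n-X)^+\to (Z-X)^+$ in $L^1$.

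Finally I would conclude by writing
$$|X_n - X| \;=\; 2(X_n-X)^+ - (X_n - X),$$
so that $\Ev{|X_n-X|}=2\Ev{(X_n-X)^+}-(\Ev{X_n}-\Ev{X})\to 0$ by the two previous steps. The only subtle point, which I regard as the main obstacle, is that the dominating sequence $Z_n$ is only $L^1$-convergent and not a.s.-convergent; this is handled in Step~1 by a subsequence extraction and a subsequence contradiction argument, and in Step~2 by invoking the variable-dominator version of dominated convergence rather than the classical statement.
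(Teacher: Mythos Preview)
Your argument is correct and follows essentially the same route as the paper: both use the generalized (Pratt/Lieb--Loss) dominated convergence theorem on the positive part $(X_n-X)^+$ via the moving dominator $(Z_n-X)^+$, and then finish using the $\liminf$ hypothesis. The only difference is organizational: your Step~1 is in fact superfluous. Once Step~2 gives $\EE\big((X_n-X)^+\big)\to 0$, your Step~3 identity $\EE\big(|X_n-X|\big)=2\,\EE\big((X_n-X)^+\big)-\big(\EE(X_n)-\EE(X)\big)$ and the bare assumption $\liminf_n\EE(X_n)\geq\EE(X)$ already yield $\limsup_n\EE\big(|X_n-X|\big)\leq 0$, without ever upgrading to full convergence of expectations. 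This is exactly what the paper does (after reducing to $X=0$): from $(X_n)_+\to 0$ in $L^1$ it writes $\EE\big((X_n)_-\big)=\EE\big((X_n)_+\big)-\EE(X_n)$ and concludes directly from the $\liminf$ hypothesis. As a minor remark, both the paper and your Step~2 implicitly use that the moving dominator converges a.s., whereas $Z_n\to Z$ is only assumed in $L^1$; this is harmless via the sub-subsequence trick you already invoke in Step~1.
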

\begin{proof}
Replacing $X_n$ by $X_n-X$, we can assume without loss of generality that $X=0$. We then write $X_n=(X_n)_+-(X_n)_-$. We first notice that $(X_n)_+\rightarrow 0$ a.s. By the dominated convergence theorem with "moving bound" (see e.g. \cite[Theorem 1.8]{LL}), we conclude that  $(X_n)_+ \rightarrow 0$ in $L^1\bra{\O}$.  By the liminf condition, we have $\limsup_{\cn} \Ev{(X_{n})_- }\leq 0$. As $(X_{n})_-\geq 0$, we conclude that $(X_n)_-\rightarrow 0$ in $L^1\bra{\O}$. Finally,
$\EE\bra{\av{X_n}}=\EE\bra{(X_n)_+}+\EE\bra{(X_n)_-}$ tends to $0$.
\end{proof}


\addcontentsline{toc}{section}{Appendix:~Proof of Theorem~\ref{n-repre}}
\section*{Appendix: Proof of Theorem~\ref{n-repre}}\label{AppendixA}

Here we write the proof of Theorem~\ref{n-repre}. This transposition of Lieb's representability theorem to the ergodic setting claims that for any $\rho$ satisfying $\rho\geq 0$, $\rho\in L^3_s$ and $\sqrt{\rho}\in H_s^1$,  
there exists a self-adjoint operator $\gamma\in \Sv_{1,1}$, such that $0\leq \gamma \leq 1$ and $\rho_\gamma=\rho$.

\begin{proof}[Proof of Theorem~\ref{n-repre}]
We start with the case $d=1$. We consider two functions $\phi_0,\phi_1\in C^\infty_c(\RRd)$ satisfying
\begin{itemize}
\item $\phi_0\geq 0$, $\phi_1\geq 0$,
 \item supp$(\phi_0)\subset \left[-\frac{1}{2},\frac{1}{2}\right]$ and supp$(\phi_1)\subset \left[0,1\right]$,
\item $\sum_{k\in\ZZ} \phi_{k}=2$ where $\phi_{2k}(\cdot)=\phi_0(\cdot-k)$ and $\phi_{2k+1}(\cdot)=\phi_1(\cdot-k)$.
\end{itemize}
We denote by $$\rho_k(\o,x):=\rho(\o,x)\phi_k(x),$$ 
and observe that $\rho=\sum_{k\in\ZZ}\rho_k/2$. Let $N_k(\o)=\int_\RRd\rho_k(\o,x)\,dx$. For each $k \in \ZZ$, we set $\phi_{j,k}=0$ for all $j\in\ZZ$ if $N_k\bra{\o}=0$, and
$$
\phi_{j,k}(\o,x)=\frac{\sqrt{\rho_k(\o,x)}}{\sqrt{N_k(\o)}} \exp\left(\frac{2i\pi j}{N_k(\o)}\int_{-\infty}^{x}\rho_k\left(\o,t \right)dt\right)
$$
otherwise. We then introduce the density matrix
$$
\gamma_k=\sum_{j\in\NN}n_{j,k}\left|\phi_{j,k}\left\rangle \right\langle \phi_{j,k}\right|,
$$
where  $n_{j,k}(\o)=\1_{j\leq N_k(\o)}+(N_k(\o)-[N_k(\o)])\1_{j=[N_k(\o)]+1}$.

Each $\gamma_k$ is in $\K=\set{ \gamma \in \Sv_{1,1}\cap \underline{\mathcal S}\;|\;0\leq \gamma_k\leq 1\;\mbox{a.s.} }$ and $\rho_{\gamma_k}(\o,\cdot)=\rho_k(\o,\cdot)$ a.s. As the supports of the kernels of $\gamma_{k}$ and $\gamma_{k+2l}$ are disjoints for all $k,l \in \ZZ$, the operators  $\gamma_e=\sum_{k\in\ZZ}\gamma_{2k}$ and $\gamma_o=\sum_{k\in\ZZ}\gamma_{2k+1}$ are in $\K$. By convexity, so is $\gamma=\frac{\gamma_e+\gamma_o}{2}$. It is finally easily checked that  $\rho_\gamma=\rho$.

We now turn to the case $d=2$. In the same spirit as for $d=1$, we cover the space with a finite number of periodic patterns, in such a way that the elements of each pattern do not intersect (see Figure~\ref{fig:grille}). For example, let
$$
A_0=\left[-\frac{5}{12},\frac{5}{12}\right)^2\!\!,\;  B_0=\left[ \frac{1}{3} , \frac{2}{3} \right)\times \left[ -\frac{1}{4} ,\frac{1}{4}\right)\cup\left[ -\frac{1}{4} ,\frac{1}{4}\right)\times \left[ \frac{1}{3} , \frac{2}{3} \right)\!, \; C_0=\left[ \frac{1}{6},\frac{5}{6} \right)^2. 
$$ 
The $\ZZ^2$-translations of these sets $I_k=I_0+k$, $I\in\{A,B,C\}$, satisfy 
$I_k\cap I_j=\emptyset$  for $k\neq j$ and $\cup_{k\in\ZZ^2}A_k\cup B_k \cup C_k=\RR^2$.
Next, we consider three sequences of regular functions $(\phi^I_k)_{k\in\ZZd}$, $I\in\{A,B,C\}$, such that 
$$\phi_k^I\geq 0,\quad\mbox{supp}(\phi^I_k)\subset I_k,\quad\mbox{and }\sum_{k\in\ZZ^2}\phi_k^A+ \phi_k^B +\phi_k^C=3.$$
\begin{figure}[h]
\centering
\includegraphics{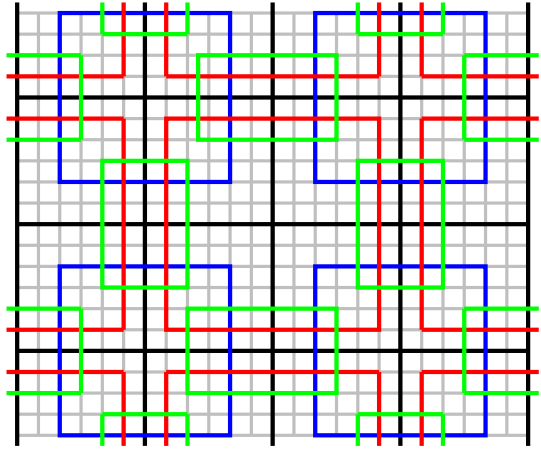}
\caption{Covering in dimension $d=2$ used in the proof of Theorem~\ref{n-repre}.\label{fig:grille}}
\end{figure}

Repeating the argument detailed above in the one-dimensional case, we define $\gamma_I$, for  $I\in\{A,B,C\}$, and $\gamma={\sum_{ I\in\{A,B,C\}}\gamma_I}/{3}$ and we check that $\rho_\gamma=\rho$ and that $\gamma$ satisfies the desired conditions.
We proceed similarly for $d\geq 3$.
\end{proof}

\bibliographystyle{model1b-num-names}
\bibliography{fichierb}

\end{document}